\documentclass[11pt]{article}

\usepackage[margin=0.88in]{geometry}
\usepackage{fontspec}
\defaultfontfeatures{Ligatures=TeX}
\newfontfamily\leanfont[
  Path=fonts/,
  UprightFont=STIXGeneral.otf,
  BoldFont=STIXGeneralBol.otf,
  ItalicFont=STIXGeneralItalic.otf,
  BoldItalicFont=STIXGeneralBolIta.otf,
  Scale=MatchLowercase
]{STIXGeneral}
\usepackage{microtype}
\usepackage{amsmath,amssymb,amsthm,mathtools}
\usepackage{booktabs,array,longtable,tabularx,multirow}
\usepackage[dvipsnames]{xcolor}
\usepackage{graphicx}
\usepackage{float}
\usepackage{tikz}
\usetikzlibrary{arrows.meta,positioning,fit,calc,decorations.pathreplacing,shapes.geometric}
\usepackage[most]{tcolorbox}
\usepackage{enumitem}
\usepackage{algorithm2e}
\usepackage{listings}
\usepackage{fancyhdr}
\usepackage[numbers,sort&compress]{natbib}
\usepackage{hyperref}
\usepackage[nameinlink,noabbrev]{cleveref}
\usepackage{url}
\usepackage{seqsplit}
\usepackage{caption}
\usepackage{subcaption}
\usepackage{setspace}

\definecolor{DeepBlue}{HTML}{174A7E}
\definecolor{SoftBlue}{HTML}{EAF2F8}
\definecolor{ProofGray}{HTML}{8A929B}
\definecolor{SoftGray}{HTML}{F4F5F6}
\definecolor{NucA}{HTML}{188038}
\definecolor{NucC}{HTML}{2962FF}
\definecolor{NucG}{HTML}{7B1FA2}
\definecolor{NucU}{HTML}{E37400}
\definecolor{AlertRed}{HTML}{B3261E}
\definecolor{GoodGreen}{HTML}{0B6B3A}

\hypersetup{
  colorlinks=true,
  linkcolor=DeepBlue,
  citecolor=DeepBlue,
  urlcolor=DeepBlue,
  pdftitle={Designability of RNA Targets with Up to Two Length-2 Helices},
  pdfauthor={Ashutosh S. Jogalekar}
}

\setlist{leftmargin=*,itemsep=2pt,topsep=4pt}
\newtheorem{theorem}{Theorem}[section]
\newtheorem{lemma}[theorem]{Lemma}
\newtheorem{proposition}[theorem]{Proposition}
\newtheorem{corollary}[theorem]{Corollary}
\newtheorem{definition}[theorem]{Definition}
\newtheorem{remark}[theorem]{Remark}

\newcommand{\A}{\textsf{A}}
\newcommand{\C}{\textsf{C}}
\newcommand{\G}{\textsf{G}}
\newcommand{\U}{\textsf{U}}
\newcommand{\Bcol}{\mathsf{B}}
\newcommand{\Wcol}{\mathsf{W}}
\newcommand{\Gray}{\mathsf{G}}

\newcommand{\Ktwo}{\mathcal{K}_{\le 2}}

\newcommand{\comp}{\operatorname{comp}}
\newcommand{\inv}{\operatorname{inv}}
\newcommand{\level}{\ell}
\newcommand{\entry}{e}
\newcommand{\shortcount}{s}
\newcommand{\Accept}{\operatorname{Accept}}

\newcommand{\hashtext}[1]{\texttt{\seqsplit{#1}}}
\newcommand{\RepositoryURL}{\url{https://github.com/ajogalekar/rna-at-most-two-short-helices}}
\newcommand{\ArchiveDOI}{\href{https://doi.org/10.5281/zenodo.22101755}{\texttt{10.5281/zenodo.22101755}}}
\newcommand{\ReleaseVersion}{1.0.3}
\newcommand{\TOneSequenceLiteral}{AGCUUGCAGGGCCC}
\newcommand{\TTwoSequenceLiteral}{GGGAGGAGCUUGCACCUGGGCCCCCC}

\newtcolorbox{contributionbox}{
  colback=SoftBlue,colframe=DeepBlue,boxrule=0.8pt,arc=2pt,
  left=7pt,right=7pt,top=6pt,bottom=6pt
}
\newtcolorbox{cautionbox}{
  colback=SoftGray,colframe=ProofGray,boxrule=0.6pt,arc=2pt,
  left=7pt,right=7pt,top=6pt,bottom=6pt
}
\newtcolorbox{verificationbox}{
  colback=green!4,colframe=GoodGreen,boxrule=0.7pt,arc=2pt,
  left=7pt,right=7pt,top=6pt,bottom=6pt
}

\title{\textbf{Designability of RNA Targets with Up to Two Length-2 Helices}}
\author{Ashutosh S. Jogalekar\\
\small Szilard Scientific, LLC, Palo Alto, CA\\
\small \texttt{ashjogalekar@gmail.com}}
\date{August 2026}

\begin{document}
\maketitle

\begin{abstract}
RNA inverse folding asks whether one can choose an RNA sequence whose unique optimal secondary structure is a prescribed target.  We study this problem in a simplified four-letter model: only Watson--Crick A--U and C--G pairs are allowed, structures contain no pseudoknots, adjacent nucleotides may pair, and lower energy means a larger number of base pairs.

Building on the separated-coloring certificate and parity device of Hale\v{s} et al. and the modulo-$m$ construction of Boury et al., earlier work guarantees designability when every helix contains at least three base pairs.  Shorter helices are more difficult because they provide less freedom for coordinating the base-pair choices made at neighboring loops.  We prove that designability is nevertheless guaranteed when the target contains at most two helices of length 2, contains no helix of length 1, avoids the two standard local obstruction motifs $m_5$ and $m_{3\bullet}$, and has all remaining helices of length at least 3.

The proof builds on Boury et al.'s local helix-coloring transfers and adds a global counting argument showing that the demands created by at most two short helices can always be coordinated.

From the resulting coloring we construct an explicit RNA sequence and prove that every different compatible fold has fewer base pairs than the target.  The complete theorem and its supporting lemmas were formalized in Lean~4 and rebuilt from frozen source in an isolated environment.  The work was developed with foundational generative-AI assistance under the author's direction and has not yet received independent human expert review.
\end{abstract}

\noindent\textbf{Keywords:} RNA inverse folding; RNA design; secondary structure; separated coloring; isolated stack; formal verification; Lean.

\noindent\textbf{Subject classifications:} arXiv \texttt{q-bio.QM} (primary), \texttt{q-bio.BM} (cross-list); 2020 Mathematics Subject Classification: 92C40 (primary), 05C05, 05C15, 68W05, 68V20; ACM Computing Classification System: Applied computing---Molecular structural biology; Theory of computation---Design and analysis of algorithms; Theory of computation---Automated reasoning.

\section{Introduction}

\subsection{RNA inverse folding}

RNA inverse folding starts from a target secondary structure and asks for a nucleotide sequence that folds uniquely into that target under a specified energy model.  This reverses the usual structure-prediction problem: the structure is fixed, while the sequence must be designed.  Representative approaches include ViennaRNA's RNAinverse, a canonical early thermodynamic inverse-folding heuristic; the stochastic local search of RNA-SSD; INFO-RNA's dynamic-programming initialization followed by local search; the constraint-programming method RNAiFold; and NUPACK's ensemble-defect optimization \citep{hofacker1994,andronescu2004,busch2006,garciamartin2013,zadeh2011}.  Bonnet et al. proved NP-completeness for a Watson--Crick design-extension formulation with prescribed nucleotide constraints \citep{bonnet2020}, so exact positive results for well-defined structural classes remain useful.

We study the combinatorial Watson--Crick base-pair model studied by Hale\v{s} et al. \citep{hales2017}.  Every permitted base pair contributes the same energy, and loops carry no additional energy term.  A sequence is a design for a target when the target is its unique compatible noncrossing structure with the maximum number of base pairs.  This is an idealized model, not a replacement for Turner nearest-neighbor thermodynamics \citep{turner2010}.  Its advantage is that structural conditions can be proved exactly and formalized completely.

\subsection{From RNA loops to the Hale\v{s} coloring certificate}

The coloring method is easiest to understand by moving back and forth between the RNA drawing familiar to chemists and biologists and the tree used in the proof.  Consecutive nested base pairs form the rungs of a stem or helix.  At the inward end of a stem, a hairpin closes the backbone; a bulge or internal loop leads into one further stem with one or more unpaired nucleotides on the intervening backbone; and a multiloop is a junction at which two or more further stems leave the enclosing stem.  Hale\v{s} et al. encode exactly the same nesting information by a rooted ordered interval tree \citep{hales2017}.  A target pair $(i,j)$ becomes a node $[i,j]$, and each target-unpaired nucleotide $k$ becomes its own singleton leaf $[k,k]$.  Immediately nested pairs are parent and child, so every stem remains a chain with one tree node per base pair; it is not collapsed to a single node.  A virtual root represents the exterior of the molecule, and sibling order records the $5'$-to-$3'$ backbone order.  The tree is therefore the nesting skeleton of the stems and loops, not a different molecular model.

The labels black $\Bcol$, white $\Wcol$, and gray $\Gray$ are sequence-design instructions on target base pairs, not physical colors or energetic states.  For an ordered target pair $(i,j)$ with $i<j$, they prescribe
\[
\Bcol\longmapsto(\G,\C),\qquad
\Wcol\longmapsto(\C,\G),\qquad
\Gray\longmapsto(\A,\U)\ \text{or}\ (\U,\A),
\]
while every target-unpaired position receives $\A$.  Black and white are the two backbone orientations of a G--C pair.  Gray is the A--U class; its orientation is selected later so that the nucleotide identities presented around each loop are locally distinct.

Each loop sees one incidence from every bordering stem.  In the tree, an internal loop node sees the colors of its outgoing paired children and the inward-facing side of its own closing pair.  Viewing that closing pair from the loop reverses its orientation, so its contribution is $\inv(\Bcol)=\Wcol$, $\inv(\Wcol)=\Bcol$, or $\inv(\Gray)=\Gray$.  Hale\v{s} et al.'s \emph{properness} conditions are equivalently the capacity rule
\[
X(v)=\{\inv(c(v))\}\uplus\{c(u):u\text{ is a paired child of }v\}
\quad\text{contains at most one $\Bcol$, one $\Wcol$, and two $\Gray$'s.}
\]
At the exterior root there is no closing incidence.  In the original parent--child formulation, a node has at most one black child, one white child, and two gray children; at most one child shares the parent's color; and black and white are never adjacent as parent and child.  The incidence formulation makes the loop interpretation visible.  It allocates four distinguishable sequence ``ports''---G, C, A, and U---around a loop, and is not a claim about physical loop energetics.

Hale\v{s} et al. make the local rule constructive with a greedy top-down coloring.  At a black parent, color the first paired child black and all remaining paired children gray; at a white parent, use white and then gray; and at a gray parent or the root, use black for the first paired child, white for the second, and gray for the rest.  ``First'' and ``second'' refer only to backbone order.  Under the paired-degree bound this coloring is proper.  The obstruction $m_5$ is the clearest failure of the capacity count: five paired incidences cannot fit into the four slots $\Bcol,\Wcol,\Gray,\Gray$.  The companion obstruction $m_{3\bullet}$ occurs when an unpaired nucleotide is exposed together with too many paired incidences.  Properness controls the target-paired positions locally, but it does not yet stop a distant target-unpaired A from pairing with a U assigned to a gray target pair.

That global protection is supplied by \emph{levels}.  Give $\Bcol$, $\Wcol$, and $\Gray$ the increments $+1$, $-1$, and $0$, respectively, and sum them on the path from the root to a node.  An unpaired singleton inherits the level of its enclosing loop.  Thus the level tracks the relevant G--C prefix imbalance encoded by the orientations along the nesting path.  A proper coloring is \emph{separated} when no integer level is occupied both by a gray paired node and by a target-unpaired node.

The even--odd device used later in this paper is already present in Hale\v{s} et al.'s 2017 construction.  In the proof of their Theorem~10, selected pairs are inflated so that gray paired nodes and unpaired nodes occupy opposite parity classes, after which one type is kept even and the other odd \citep{hales2017}.  Thus parity separation predates the later modulo-$m$ terminology.  Boury et al.'s contribution was to formalize the general modulo-$m$ condition, give a finite-state decision and construction algorithm, and obtain a target-preserving all-long guarantee.

Figure~\ref{fig:hales-bridge} follows one small RNA from its biological drawing to this mathematical certificate.  The incoming stem has color word $\Bcol\Bcol\Bcol$, the left isolated stack has $\Gray\Bcol$, and the right stem has $\Bcol\Bcol\Bcol$.  At the multiloop, the closing black pair is seen as white, so the exposure is $\{\Wcol,\Gray,\Bcol\}$.  The only gray pair has level $3$, whereas the unpaired A has level $4$.

\begin{figure}[t]
\centering
\begin{subfigure}[t]{0.48\linewidth}
\centering
\begin{tikzpicture}[x=1cm,y=0.92cm,font=\scriptsize,
  backbone/.style={DeepBlue,very thick,line cap=round},
  bond/.style={ProofGray,thick},
  call/.style={rounded corners,draw=ProofGray,fill=SoftGray,inner sep=2.5pt,align=center}]
  \draw[backbone] (-0.34,-2.45)--(-0.34,0.06);
  \draw[backbone] (0.34,-2.45)--(0.34,0.06);
  \foreach \y in {-2.05,-1.48,-0.91}{\draw[bond] (-0.34,\y)--(0.34,\y);}
  \draw[backbone] (0,0.78) circle[radius=0.76];
  \begin{scope}[shift={(-0.50,1.34)},rotate=43]
    \draw[backbone] (-0.24,0)--(-0.24,1.42);
    \draw[backbone] (0.24,0)--(0.24,1.42);
    \foreach \y in {0.42,0.94}{\draw[bond] (-0.24,\y)--(0.24,\y);}
    \draw[backbone] (-0.24,1.42) .. controls (-0.82,1.70) and (-0.58,2.22) .. (0,2.28)
      .. controls (0.58,2.22) and (0.82,1.70) .. (0.24,1.42);
    \node[circle,fill=NucA!14,draw=NucA,text=NucA,inner sep=1.4pt,font=\scriptsize\bfseries] at (0,2.26) {A};
  \end{scope}
  \begin{scope}[shift={(0.50,1.34)},rotate=-43]
    \draw[backbone] (-0.24,0)--(-0.24,1.72);
    \draw[backbone] (0.24,0)--(0.24,1.72);
    \foreach \y in {0.38,0.86,1.34}{\draw[bond] (-0.24,\y)--(0.24,\y);}
    \draw[backbone] (-0.24,1.72) .. controls (-0.70,1.96) and (-0.52,2.25) .. (0,2.30)
      .. controls (0.52,2.25) and (0.70,1.96) .. (0.24,1.72);
  \end{scope}
  \node[font=\bfseries] at (0,0.78) {multiloop};
  \node[call] at (-2.15,2.48) {hairpin loop\\one unpaired A};
  \draw[-{Latex[length=1.7mm]},ProofGray] (-1.70,2.30)--(-1.28,1.86);
  \node[call] at (2.06,2.48) {hairpin loop};
  \draw[-{Latex[length=1.7mm]},ProofGray] (1.72,2.30)--(1.34,1.86);
  \node[align=center] at (-1.72,0.30) {isolated stack\\$h=2$};
  \node[align=center] at (1.72,0.30) {stem\\$h=3$};
  \node[align=center] at (1.14,-1.54) {incoming stem\\$h=3$};
  \node[anchor=east] at (-0.43,-2.43) {$5'$};
  \node[anchor=west] at (0.43,-2.43) {$3'$};
\end{tikzpicture}
\caption{A secondary-structure drawing in the paper's $\theta=0$ model.}
\end{subfigure}\hfill
\begin{subfigure}[t]{0.48\linewidth}
\centering
\begin{tikzpicture}[x=1cm,y=0.78cm,font=\scriptsize,
  edge/.style={DeepBlue,thick},
  pcB/.style={circle,draw=black,fill=black,text=white,minimum size=6.5mm,inner sep=0pt},
  pcG/.style={circle,draw=DeepBlue,fill=ProofGray!45,text=black,minimum size=6.5mm,inner sep=0pt},
  root/.style={rounded corners,draw=DeepBlue,fill=SoftBlue,inner sep=2.3pt},
  unp/.style={circle,draw=NucA,fill=NucA!12,text=NucA,minimum size=6.5mm,inner sep=0pt}]
  \node[root] (r) at (0,3.05) {virtual root $\ell=0$};
  \node[pcB] (b1) at (0,2.30) {$\Bcol$};
  \node[pcB] (b2) at (0,1.55) {$\Bcol$};
  \node[pcB] (b3) at (0,0.80) {$\Bcol$};
  \node[pcG] (g1) at (-1.25,-0.05) {$\Gray$};
  \node[pcB] (lb) at (-1.25,-0.85) {$\Bcol$};
  \node[unp] (ua) at (-1.25,-1.70) {A};
  \node[pcB] (rb1) at (1.25,-0.05) {$\Bcol$};
  \node[pcB] (rb2) at (1.25,-0.85) {$\Bcol$};
  \node[pcB] (rb3) at (1.25,-1.65) {$\Bcol$};
  \foreach \a/\b in {r/b1,b1/b2,b2/b3,b3/g1,g1/lb,lb/ua,b3/rb1,rb1/rb2,rb2/rb3}{\draw[edge] (\a)--(\b);}
  \node[anchor=west] at (0.26,2.30) {$\ell=1$};
  \node[anchor=west] at (0.26,1.55) {$\ell=2$};
  \node[anchor=west] at (0.26,0.80) {$\ell=3$; loop $M$};
  \node[anchor=west] at (-0.90,-0.05) {$\ell=3$};
  \node[anchor=west] at (-0.90,-0.85) {$\ell=4$};
  \node[anchor=west,text=NucA] at (-0.90,-1.70) {$\ell=4$};
  \node[anchor=west] at (1.57,-0.05) {$\ell=4$};
  \node[anchor=west] at (1.57,-0.85) {$\ell=5$};
  \node[anchor=west] at (1.57,-1.65) {$\ell=6$};
  \node[rounded corners,draw=GoodGreen,fill=GoodGreen!6,align=center,inner sep=3pt] at (0,-2.65)
    {gray-pair levels $=\{3\}$\\unpaired levels $=\{4\}$: separated};
\end{tikzpicture}
\caption{The same fold as an interval tree.}
\end{subfigure}

\vspace{2mm}
\begin{subfigure}[t]{0.98\linewidth}
\centering
\begin{tikzpicture}[x=1cm,y=1cm,font=\scriptsize,
  pcB/.style={circle,draw=black,fill=black,text=white,minimum size=6.5mm,inner sep=0pt},
  pcW/.style={circle,draw=black,fill=white,text=black,minimum size=6.5mm,inner sep=0pt},
  pcG/.style={circle,draw=DeepBlue,fill=ProofGray!45,text=black,minimum size=6.5mm,inner sep=0pt},
  port/.style={rounded corners,draw=ProofGray,fill=SoftGray,inner sep=3pt,align=center}]
  \node[circle,draw=DeepBlue,fill=SoftBlue,minimum size=10mm,font=\bfseries] (m) at (-4.65,0) {$M$};
  \node[pcW] (pw) at (-4.65,-1.00) {$\Wcol$};
  \node[pcG] (pg) at (-5.72,0.72) {$\Gray$};
  \node[pcB] (pb) at (-3.58,0.72) {$\Bcol$};
  \draw[DeepBlue,thick] (m)--(pw) (m)--(pg) (m)--(pb);
  \node[anchor=west] at (-4.30,-1.00) {closing $\Bcol$ is seen as $\Wcol$; port C};
  \node[anchor=east] at (-6.05,0.72) {A--U port: A};
  \node[anchor=west] at (-3.25,0.72) {G--C port: G};
  \node[port] at (-4.65,-1.72) {$X(M)=\{\Wcol,\Gray,\Bcol\}$ is proper;\\the loop-facing nucleotide ports C, A, G are distinct.};
  \draw[ProofGray] (-1.15,-1.80)--(-1.15,1.18);
  \node[anchor=west,align=left] at (-0.75,0.62) {\textbf{pair-color instructions}\\[2pt]
    $\Bcol\mapsto\textcolor{NucG}{\G}\! -\!\textcolor{NucC}{\C}$,\quad $\delta(\Bcol)=+1$\\
    $\Wcol\mapsto\textcolor{NucC}{\C}\! -\!\textcolor{NucG}{\G}$,\quad $\delta(\Wcol)=-1$\\
    $\Gray\mapsto\textcolor{NucA}{\A}\! -\!\textcolor{NucU}{\U}$ or $\textcolor{NucU}{\U}\! -\!\textcolor{NucA}{\A}$,\quad $\delta(\Gray)=0$\\
    target-unpaired $\mapsto\textcolor{NucA}{\A}$};
\end{tikzpicture}
\caption{The loop-facing capacity rule and its nucleotide meaning.}
\end{subfigure}
\caption{One RNA target in biological and mathematical representations.  (a) The backbone drawing shows stems, hairpins, a multiloop, and an unpaired nucleotide.  (b) Every target base pair becomes one colored tree node, every unpaired nucleotide becomes one leaf, and levels are root-path sums.  (c) At the multiloop, the incoming black closing pair contributes its inverse white incidence.  All pair markers carry explicit $\Bcol$, $\Wcol$, or $\Gray$ labels, so the figure does not rely on hue.}
\label{fig:hales-bridge}
\end{figure}

The two parts of the certificate now have different jobs.  Properness makes the target pairing unique after the target-unpaired positions are removed: the exposed nucleotide ports prevent an equal-pair-count local rerouting among the paired positions.  Separation protects the unpaired A's.  If a competing noncrossing fold pairs one of them to a U taken from a gray target pair, the unequal G--C prefix balances at the two endpoints force at least one G or C to remain unused.  The target already pairs every C and every U, and every Watson--Crick pair consumes one C or one U.  The competitor therefore loses a pair and cannot tie the target.  Hale\v{s} et al. proved that a proper separated coloring consequently yields a design in the four-letter Watson--Crick model \citep{hales2017}; Sections~\ref{sec:sequence}--\ref{sec:notie} give the complete argument used here.

This result is a sufficient certificate, not a characterization of designability and not a promise that the greedy coloring of every motif-free target is separated.  In Hale\v{s} et al.'s general structure-approximation construction, a parity conflict in the greedy coloring is repaired by inserting or duplicating at most one base pair in the affected helix and adjusting the coloring, thereby shifting descendant levels \citep{hales2017}.  That proves designability of a modified, enlarged structure; it does not design the original target unchanged.  This distinction is central in inverse folding, where the target is prescribed.

\subsection{Boury's finite-state construction and the length-2 gap}

Building on Hale\v{s} et al.'s parity idea, Boury, Bulteau, and Ponty introduced modulo-$m$ separability as a finite-state, target-preserving method \citep{boury2024,boury2025}.  In a modulo-$m$ separated coloring, gray-pair levels and unpaired-nucleotide levels occupy disjoint residue classes modulo $m$.  This condition is stronger than ordinary separation, but it replaces unbounded integer levels by finitely many states.  Their dynamic program decides and constructs such a coloring, when one exists, in $O(n\,2^m)$ time; in particular, fixed $m=2$ is linear-time.  For $m=2$, black and white both change parity because $+1\equiv-1\equiv1\pmod 2$, while gray preserves parity.  We write $\xi$ for the residue used by unpaired nodes and $\eta=1-\xi$ for the residue used by gray pairs.

Biologically, their recursion contracts each maximal stem to a connection between loops.  Mathematically, that connection is a two-state transfer device.  A hairpin, bulge, or internal loop containing unpaired nucleotides must sit in the $\xi$ state; a multiloop in the construction sits in the $\eta$ state.  The first pair of a helix must present a color allowed by the loop above it, the colors inside the helix must keep every gray pair at residue $\eta$, and the last pair must deliver the residue required by the loop below it.  Boury et al. proved that a helix of at least three pairs has enough internal positions to retain a prescribed first color and realize either required exit residue.  Consequently every $(m_5,m_{3\bullet})$-free target whose helices all have length at least three admits a modulo-2 separated coloring and a linear-time design \citep{boury2024,boury2025}.  The third pair has no special thermodynamic significance here; it is an additional combinatorial switching position in the maximum-pair model.

The non-gray-ending transfer words selected below are already displayed in Boury et al.'s WABI Figure~8 and journal Figure~12 \citep{boury2024,boury2025}.  Up to black--white symmetry, those figures tabulate the families $a^h$, $a\Gray a^{h-2}$, $aa\Gray a^{h-3}$, $\Gray a^{h-1}$, and $\Gray\Gray a^{h-2}$.  Lemma~\ref{lem:strong-long} uses all but the $aa\Gray a^{h-3}$ family.  The selected words preserve the prescribed first color, place gray pairs in the required residue, realize the needed exit residue, and visibly end non-gray.  The lemma records the exact subfamily needed by the global induction; it does not introduce new local color words.

An isolated stack---a maximal helix of exactly two pairs---has only one position after its head and therefore no comparable spare switch.  The complete transition table, proved as Lemma~\ref{lem:two-pair}, is
\[
\begin{array}{c@{\quad\longrightarrow\quad}c@{\qquad}l}
\text{entry residue} & \text{exit residue} & \text{valid two-pair color words}\\[1mm]
\xi & \xi & \Bcol\Bcol,\ \Wcol\Wcol,\\
\xi & \eta & \Bcol\Gray,\ \Wcol\Gray,\\
\eta & \xi & \Gray\Bcol,\ \Gray\Wcol,\\
\eta & \eta & \Bcol\Bcol,\ \Wcol\Wcol,\ \Gray\Gray.
\end{array}
\]
For instance, an $\eta\to\xi$ isolated stack must start gray.  In the $\eta\to\eta$ case, if the loop above also prescribes a gray head, the only two-pair word is $\Gray\Gray$, which necessarily closes gray.  Thus two pairs cannot in general satisfy the entry color, the gray-residue restriction, the exit residue, and the terminal color independently.  This is exactly where the all-long guarantee stops; it is not evidence that targets containing isolated stacks are undesignable.

The missing terminal-color control becomes visible when branches meet.  Suppose a multiloop at residue $\eta$ has two outgoing child subtrees containing isolated stacks.  The weaker short-helix guarantee may require both child heads to be gray.  If the ordinary all-long transfer above the loop is reused with a gray-ending $\eta$ exit, the loop exposes the inverse of that gray close together with the two gray heads: $\{\Gray,\Gray,\Gray\}$.  Properness permits only two gray incidences.

The global bound of two isolated stacks supplies exactly the missing position.  Once the two outgoing child subtrees contain both isolated stacks allowed in the target, the incoming helix cannot also have length 2.  Because length-1 helices are excluded, it has length at least three.  We can therefore select a non-gray-ending $\eta\to\eta$ word from Boury et al.'s transfer table.  At length three the published family specializes to $\Gray\Bcol\Bcol$ or, by black--white symmetry, $\Gray\Wcol\Wcol$.  These choices change the loop exposures to $\{\Wcol,\Gray,\Gray\}$ or $\{\Bcol,\Gray,\Gray\}$, both proper.  Figure~\ref{fig:boury-gap} shows the bottleneck and its bounded-resource resolution on RNA-shaped stem--loop drawings.

\begin{figure}[t]
\centering
\begin{subfigure}[t]{0.29\linewidth}
\centering
\begin{tikzpicture}[x=0.78cm,y=0.72cm,font=\scriptsize,
  rail/.style={DeepBlue,thick}, bond/.style={ProofGray,thick},
  pcB/.style={circle,draw=black,fill=black,text=white,minimum size=5.2mm,inner sep=0pt},
  pcG/.style={circle,draw=DeepBlue,fill=ProofGray!45,text=black,minimum size=5.2mm,inner sep=0pt}]
  \node[font=\bfseries] at (-1.05,3.10) {$h=2$};
  \draw[rail] (-1.30,-0.05)--(-1.30,2.18);
  \draw[rail] (-0.80,-0.05)--(-0.80,2.18);
  \foreach \y in {0.68,1.43}{\draw[bond] (-1.30,\y)--(-0.80,\y);}
  \node[pcG] at (-1.05,0.68) {$\Gray$};
  \node[pcG] at (-1.05,1.43) {$\Gray$};
  \node at (-1.05,-0.42) {entry $\eta$};
  \node at (-1.05,2.55) {exit $\eta$};

  \node[font=\bfseries] at (1.05,3.10) {$h=3$};
  \draw[rail] (0.80,-0.05)--(0.80,2.18);
  \draw[rail] (1.30,-0.05)--(1.30,2.18);
  \foreach \y in {0.46,1.05,1.64}{\draw[bond] (0.80,\y)--(1.30,\y);}
  \node[pcG] at (1.05,0.46) {$\Gray$};
  \node[pcB] at (1.05,1.05) {$\Bcol$};
  \node[pcB] at (1.05,1.64) {$\Bcol$};
  \node at (1.05,-0.42) {entry $\eta$};
  \node at (1.05,2.55) {exit $\eta$};
\end{tikzpicture}
\caption{The missing third position.}
\end{subfigure}\hfill
\begin{subfigure}[t]{0.34\linewidth}
\centering
\begin{tikzpicture}[x=0.78cm,y=0.72cm,font=\scriptsize,
  rail/.style={DeepBlue,thick}, bond/.style={ProofGray,thick},
  pcB/.style={circle,draw=black,fill=black,text=white,minimum size=5.1mm,inner sep=0pt},
  pcG/.style={circle,draw=DeepBlue,fill=ProofGray!45,text=black,minimum size=5.1mm,inner sep=0pt},
  bad/.style={rounded corners,draw=AlertRed,fill=AlertRed!6,align=center,inner sep=2.5pt}]
  \draw[rail] (0,0) circle[radius=0.55];
  \node[font=\bfseries] at (0,0) {$M$};
  \draw[rail] (-0.20,-0.50)--(-0.20,-2.28);
  \draw[rail] (0.20,-0.50)--(0.20,-2.28);
  \foreach \y in {-0.78,-1.36,-1.94}{\draw[bond] (-0.20,\y)--(0.20,\y);\node[pcG] at (0,\y) {$\Gray$};}
  \begin{scope}[shift={(-0.38,0.40)},rotate=46]
    \draw[rail] (-0.18,0)--(-0.18,1.55); \draw[rail] (0.18,0)--(0.18,1.55);
    \foreach \y in {0.43,1.00}{\draw[bond] (-0.18,\y)--(0.18,\y);}
    \node[pcG] at (0,0.43) {$\Gray$};
    \node[pcB] at (0,1.00) {$\Bcol$};
    \draw[rail] (-0.18,1.55) .. controls (-0.55,1.78) and (-0.45,2.05) .. (0,2.08) .. controls (0.45,2.05) and (0.55,1.78) .. (0.18,1.55);
  \end{scope}
  \begin{scope}[shift={(0.38,0.40)},rotate=-46]
    \draw[rail] (-0.18,0)--(-0.18,1.55); \draw[rail] (0.18,0)--(0.18,1.55);
    \foreach \y in {0.43,1.00}{\draw[bond] (-0.18,\y)--(0.18,\y);}
    \node[pcG] at (0,0.43) {$\Gray$};
    \node[pcB] at (0,1.00) {$\Bcol$};
    \draw[rail] (-0.18,1.55) .. controls (-0.55,1.78) and (-0.45,2.05) .. (0,2.08) .. controls (0.45,2.05) and (0.55,1.78) .. (0.18,1.55);
  \end{scope}
  \node[align=center,text=orange!75!black] at (0,3.08) {two outgoing isolated stacks\\shown as $h=2$ stems};
  \node[bad] at (0,-2.92) {$X(M)=\{\Gray,\Gray,\Gray\}$\\three gray incidences: not proper};
\end{tikzpicture}
\caption{Naive reuse at the junction.}
\end{subfigure}\hfill
\begin{subfigure}[t]{0.34\linewidth}
\centering
\begin{tikzpicture}[x=0.78cm,y=0.72cm,font=\scriptsize,
  rail/.style={DeepBlue,thick}, bond/.style={ProofGray,thick},
  pcB/.style={circle,draw=black,fill=black,text=white,minimum size=5.1mm,inner sep=0pt},
  pcG/.style={circle,draw=DeepBlue,fill=ProofGray!45,text=black,minimum size=5.1mm,inner sep=0pt},
  good/.style={rounded corners,draw=GoodGreen,fill=GoodGreen!6,align=center,inner sep=2.5pt}]
  \draw[rail] (0,0) circle[radius=0.55];
  \node[font=\bfseries] at (0,0) {$M$};
  \draw[rail] (-0.20,-0.50)--(-0.20,-2.28);
  \draw[rail] (0.20,-0.50)--(0.20,-2.28);
  \foreach \y in {-0.78,-1.36,-1.94}{\draw[bond] (-0.20,\y)--(0.20,\y);}
  \node[pcB] at (0,-0.78) {$\Bcol$};
  \node[pcB] at (0,-1.36) {$\Bcol$};
  \node[pcG] at (0,-1.94) {$\Gray$};
  \node[anchor=west] at (0.38,-1.36) {$\Gray\Bcol\Bcol$ bottom-to-top};
  \begin{scope}[shift={(-0.38,0.40)},rotate=46]
    \draw[rail] (-0.18,0)--(-0.18,1.55); \draw[rail] (0.18,0)--(0.18,1.55);
    \foreach \y in {0.43,1.00}{\draw[bond] (-0.18,\y)--(0.18,\y);}
    \node[pcG] at (0,0.43) {$\Gray$};
    \node[pcB] at (0,1.00) {$\Bcol$};
    \draw[rail] (-0.18,1.55) .. controls (-0.55,1.78) and (-0.45,2.05) .. (0,2.08) .. controls (0.45,2.05) and (0.55,1.78) .. (0.18,1.55);
  \end{scope}
  \begin{scope}[shift={(0.38,0.40)},rotate=-46]
    \draw[rail] (-0.18,0)--(-0.18,1.55); \draw[rail] (0.18,0)--(0.18,1.55);
    \foreach \y in {0.43,1.00}{\draw[bond] (-0.18,\y)--(0.18,\y);}
    \node[pcG] at (0,0.43) {$\Gray$};
    \node[pcB] at (0,1.00) {$\Bcol$};
    \draw[rail] (-0.18,1.55) .. controls (-0.55,1.78) and (-0.45,2.05) .. (0,2.08) .. controls (0.45,2.05) and (0.55,1.78) .. (0.18,1.55);
  \end{scope}
  \node[align=center,text=orange!75!black] at (0,3.08) {both allowed isolated stacks\\are already downstream};
  \node[good] at (0,-2.92) {$X(M)=\{\inv(\Bcol),\Gray,\Gray\}$\\$=\{\Wcol,\Gray,\Gray\}$: proper};
\end{tikzpicture}
\caption{Bounded use of a non-gray close.}
\end{subfigure}
\caption{Why the all-long construction does not apply unchanged when isolated stacks are admitted.  (a) With a gray head and an $\eta\to\eta$ requirement, a two-pair helix must use $\Gray\Gray$ and close gray; a third pair permits the already published Boury word $\Gray\Bcol\Bcol$ and a non-gray close.  (b) At a multiloop with two gray child heads, a gray close creates three gray incidences.  (c) In the illustrative immediate-stack case drawn here, the two outgoing stems already account for both allowed isolated stacks, so the incoming helix is long and the induction selects a non-gray-ending word from Boury et al.'s transfer table.  In the general theorem, either isolated stack may occur deeper in the corresponding child subtree; the same two gray head demands and counting argument apply.  The failed coloring in (b) is a failure of a naive local choice, not a proof of target non-designability.  Letters on the rungs make the diagrams color-blind accessible \citep{boury2024,boury2025}.}
\label{fig:boury-gap}
\end{figure}

\subsection{Main result and proof outline}

\begin{contributionbox}
The paper makes four main contributions.
\begin{enumerate}
\item \textbf{At-most-two theorem.}  Every $(m_5,m_{3\bullet})$-free target with no isolated base pair, at most two isolated stacks, and all remaining helices of length at least three admits a modulo-2 separated coloring and is designable in the Watson--Crick base-pair model.
\item \textbf{Induction on helix subtrees.}  Each helix subtree is summarized by the entry states it is guaranteed to support.  The sets $F$ and $Q$ distinguish subtrees without an isolated stack from those containing at least one.
\item \textbf{Self-contained design proof.}  We include the color-to-sequence construction, the nucleotide-counting bound, uniqueness on the paired restriction, and the exclusion of alternative optimal folds.
\item \textbf{Formal verification.}  The complete theorem was formalized in Lean~4.  The frozen source was rebuilt in isolation, and a blinded automated AI audit separately compared the formal statement with an anonymous mathematical specification.
\end{enumerate}
\end{contributionbox}

The theorem is a sufficient condition, not a characterization of all designable targets with isolated stacks.  The target $((.))((.))((.))$ admits no proper modulo-2 separated coloring at all.  It nevertheless has a proper ordinarily separated coloring and is uniquely designable, so the example marks a sharp boundary for this modulo-2 criterion rather than a general designability boundary.

\subsection{AI-assisted research and manuscript preparation}
\label{sec:ai-disclosure}

This project was carried out with foundational and pervasive assistance from generative-AI systems---principally OpenAI's ChatGPT and Codex and Anthropic's Claude Code---under the human author's direction.  Their role extended across all major stages: framing and refining the research problem and theorem statement; assisting with literature searches and citation and provenance checks; proposing, revising, and stress-testing mathematical arguments; generating and reviewing Lean code, test programs, computational checkers, and audit procedures; searching for counterexamples and hidden assumptions; developing examples and figures; assembling reproducibility and release artifacts; and drafting, restructuring, editing, proofreading, and typesetting the manuscript.  These systems were foundational tools in both the research and the writing, not merely language editors or after-the-fact verifiers.

The human author directed the work; chose and approved the final scientific model, scope, theorem statement, acceptance criteria, and canonical versions; decided which AI-generated material to accept, revise, or reject; and approved the final manuscript and release.  AI output was not treated as evidence merely because one or more systems agreed.  The author takes full responsibility for the final model, theorem statement, proof exposition, formalization, computations, figures, citations, and all claims.  Several early formulations were rejected or repaired before the final proof was frozen.

The evidence is ranked rather than treated as a vote among AI systems.  In descending order of logical strength, it consists of Lean kernel acceptance of the exact final theorem; isolated clean reconstruction of the frozen source and pinned dependencies; source-integrity, dependency, and axiom audits; a blinded automated source-to-specification review by a fresh AI session; and independent finite computational checks and negative controls.  The correctness claim does not rest on agreement among AI systems.  Its strongest evidence is kernel checking of the formal proof and isolated reconstruction of the frozen source.  The AI reviews are adversarial checks of interpretation, model fidelity, and exposition.  Section~\ref{sec:formal} records these checks and their limitations.

\begin{figure}[t]
\centering
\resizebox{\linewidth}{!}{%
\begin{tikzpicture}[node distance=8mm and 8mm, every node/.style={font=\small},
box/.style={draw=DeepBlue,rounded corners,fill=SoftBlue,align=center,minimum height=9mm,inner sep=5pt},
arr/.style={-{Latex[length=2.5mm]},thick,DeepBlue}]
\node[box] (target) {Target $T\in\Ktwo$\\at most two isolated stacks};
\node[box,right=of target] (states) {Induction on helix subtrees\\using the state sets $F$ and $Q$};
\node[box,right=of states] (seq) {Deterministic sequence\\$\A,\C,\G,\U$};
\node[box,right=of seq] (uniq) {Counting + uniqueness\\unique maximum-base-pair structure};
\draw[arr] (target)--(states);
\draw[arr] (states)--(seq);
\draw[arr] (seq)--(uniq);
\node[below=7mm of states,align=center,text width=4.2cm] (formal) {Lean checks the full implication.  An isolated rebuild checks reproducibility, and a blinded automated AI audit cross-checks the formal statement.};
\draw[arr,dashed] (formal.north)--(states.south);
\draw[arr,dashed] (formal.north east)--(seq.south west);
\draw[arr,dashed] (formal.east)--(uniq.south west);
\end{tikzpicture}%
}
\caption{Structure of the proof.  The new step is the global resource-counting induction that constructs a proper modulo-2 separated coloring for targets with at most two isolated stacks, using previously published local transfer words.  The remaining implications are proved self-containedly and formalized in Lean.}
\label{fig:pipeline}
\end{figure}

Figure~\ref{fig:pipeline} separates the new coloring construction from the established coloring-to-design implication and from the three layers of formal-artifact checking.

\subsection{Plain-language glossary}
\label{sec:glossary}

The following definitions are reading aids for the terminology used throughout the paper; the exact mathematical definitions in Section~\ref{sec:model} and later sections remain authoritative.  Three distinctions are especially important.  A pair ``color'' is an instruction for assigning nucleotides, not a physical color.  A ``level'' is an integer used for bookkeeping, not an energy, distance, or height.  And a modulo-2 separated coloring still uses the three pair colors $\Bcol$, $\Wcol$, and $\Gray$: ``modulo 2'' refers only to whether a level is even or odd.

{\small
\setlength{\LTpre}{5pt}
\setlength{\LTpost}{5pt}
\renewcommand{\arraystretch}{1.12}
\begin{longtable}{@{}>{\raggedright\arraybackslash\bfseries}p{0.235\linewidth}>{\raggedright\arraybackslash}p{0.705\linewidth}@{}}
\toprule
Term & Plain-language meaning in this paper \\
\midrule
\endfirsthead
\toprule
Term & Plain-language meaning in this paper, continued \\
\midrule
\endhead
\bottomrule
\endfoot
\multicolumn{2}{@{}l}{\textcolor{DeepBlue}{\bfseries RNA structure and the design problem}}\\
\addlinespace[2pt]
Target or target structure & The desired pattern of paired and unpaired nucleotide positions.  It is fixed before the sequence is chosen.  A secondary-structure target records this pairing pattern, not the molecule's full three-dimensional shape. \\
\addlinespace[3pt]
Compatible fold & A fold that the chosen sequence can form in this model: every one of its pairs is A--U, U--A, C--G, or G--C.  ``Compatible'' means allowed, not necessarily optimal or unique. \\
\addlinespace[3pt]
Design; designable & A sequence is a \emph{design} when the target is its only compatible noncrossing fold with the largest possible number of pairs.  A target is \emph{designable} if at least one such sequence exists. \\
\addlinespace[3pt]
Maximum-pair model & The simplified energy model used here: each allowed Watson--Crick pair receives the same score, so more pairs means lower energy.  It omits wobble pairs, stacking energies, loop entropies, salt effects, kinetics, and three-dimensional interactions. \\
\addlinespace[3pt]
Noncrossing; pseudoknot-free & If base pairs are drawn as arcs above the sequence, their arcs may be nested or disjoint but may not cross.  This is the usual pseudoknot-free secondary-structure condition. \\
\addlinespace[3pt]
Stem or helix & A maximal uninterrupted ladder of consecutively nested target base pairs.  Its length is the number of paired rungs.  ``Maximal'' matters: two rungs lying inside a longer stem do not form a separate length-2 helix. \\
\addlinespace[3pt]
Hairpin, bulge/internal loop, multiloop & A hairpin caps a stem without another stem leaving it.  A bulge or internal loop contains unpaired nucleotides and continues into one stem.  A multiloop is a junction with two or more outgoing stems.  In the proof, $E$, $L$, and $M$ abbreviate an empty terminus, a loop with unpaired nucleotides and at most one outgoing helix, and an unpaired-free multiloop with two or three outgoing helices, respectively. \\
\addlinespace[3pt]
Isolated base pair; isolated stack & A maximal helix of length 1 is an isolated base pair; a maximal helix of exactly length 2 is an isolated stack.  ``Isolated stack'' does not mean any adjacent pair of rungs chosen from a longer helix. \\

\midrule
\multicolumn{2}{@{}l}{\textcolor{DeepBlue}{\bfseries The coloring certificate}}\\
\addlinespace[2pt]
Interval tree; virtual root & A nesting map of the target.  Every target base pair becomes one paired node, every target-unpaired nucleotide becomes one leaf, and a virtual root represents the exterior.  A stem therefore appears as a chain of nodes rather than one collapsed node. \\
\addlinespace[3pt]
Pair colors $\Bcol$, $\Wcol$, $\Gray$ & Sequence-design labels: $\Bcol$ prescribes G--C, $\Wcol$ prescribes C--G, and gray $\Gray$ prescribes A--U or U--A.  Every target-unpaired position receives A.  The symbol $\Gray$ for the gray pair color is not the nucleotide guanine $\G$. \\
\addlinespace[3pt]
Inverse color; exposure $X(v)$ & A loop sees the first colors of its outgoing stems plus the loop-facing side of its own closing pair.  Viewing a closing pair from the other side exchanges $\Bcol$ with $\Wcol$ and leaves $\Gray$ unchanged.  The resulting \emph{exposure} is a multiset, so repeated colors count separately.  The exterior root has no closing contribution. \\
\addlinespace[3pt]
Proper coloring & A local capacity rule: every exposure contains at most one $\Bcol$, one $\Wcol$, and two gray $\Gray$ incidences.  This permits distinct nucleotide identities to be presented around each loop.  It is not a statement about physical loop energetics, and by itself it does not yet guarantee a design. \\
\addlinespace[3pt]
Increment; level & The colors contribute numerical increments $+1$, $-1$, and $0$ for $\Bcol$, $\Wcol$, and $\Gray$.  A paired node's \emph{level} is the running sum from the virtual root through that node, including its own color.  An unpaired nucleotide inherits the level of its enclosing loop, or level $0$ at the exterior.  Level is not tree depth, energy, helix length, or physical height. \\
\addlinespace[3pt]
Separation; separated coloring & \emph{Separation} means that no gray target pair and no target-unpaired nucleotide have the same exact integer level.  Gray pairs may share levels with other gray pairs, and unpaired positions may share levels with other unpaired positions; the two classes are what must be separated.  A \emph{separated coloring} is a proper coloring with this global property.  It is a sufficient design certificate, not a necessary condition for every designable target. \\
\addlinespace[3pt]
Residue; $\xi$ and $\eta$ & Here \emph{residue} means an arithmetic remainder, not a nucleotide residue.  Modulo two, a level has residue $0$ or $1$---equivalently, it is even or odd.  In the uniform form constructed here, $\xi$ names the residue of every unpaired node and $\eta=1-\xi$ names the other residue, occupied by every gray pair.  The symbols are roles, not fixed choices of even and odd. \\
\addlinespace[3pt]
Modulo-$m$ separation & Instead of comparing complete integer levels, compare only their remainders after division by $m$.  No remainder used by a gray-pair level may be used by an unpaired-node level.  This is stronger than ordinary separation because disjoint remainder classes cannot contain an equal integer level. \\
\addlinespace[3pt]
Modulo-2 separated coloring & Modulo-$m$ separation with $m=2$, so only even versus odd is remembered.  In the nonvacuous uniform form used by this theorem, all unpaired nodes have parity $\xi$ and all gray pairs have the opposite parity $\eta$.  The phrase is sometimes shortened informally to ``modulo-2 coloring,'' but it is not a two-coloring: the pair labels remain $\Bcol$, $\Wcol$, and $\Gray$.  Failure of this certificate does not imply failure of ordinary separation or designability. \\

\midrule
\multicolumn{2}{@{}l}{\textcolor{DeepBlue}{\bfseries Terms used in the recursive construction}}\\
\addlinespace[2pt]
Color word & The ordered list of pair colors along one helix, read from its outermost first pair toward its innermost terminal pair.  For example, $\Gray\Bcol\Bcol$ describes a three-pair helix, not the nucleotide word GBB. \\
\addlinespace[3pt]
Helix head; terminal pair; entry and exit residue & The \emph{head} is the helix's first, outermost pair; the \emph{terminal pair} is its last, innermost pair.  The entry residue is the parent level's parity immediately before the head color is added.  The exit residue is the inclusive parity at the terminal pair.  An \emph{entry state} records both the entry residue and the prescribed head color. \\
\addlinespace[3pt]
Helix transfer & A permitted $\Bcol/\Wcol/\Gray$ color word that carries a specified entry residue and head color to a requested exit residue, while keeping adjacent exposures proper and every gray pair at residue $\eta$.  ``Transfer'' is proof bookkeeping, not a chemical transport process. \\
\addlinespace[3pt]
All-long target & A target in which every maximal helix contains at least three base pairs.  Boury et al.'s earlier guarantee applies to motif-free all-long targets. \\
\addlinespace[3pt]
Stack-bearing subtree & A branch of the interval tree that contains an isolated stack somewhere below it.  Its first outgoing helix need not itself be short; the term describes the whole descendant branch. \\
\addlinespace[3pt]
Feasible-state sets $F$ and $Q$ & Two finite lists of entry states guaranteed by the induction.  $F$ is the guarantee for a helix subtree with no isolated stack; $Q$ is the weaker guarantee for a subtree containing at least one.  They are bookkeeping sets, not additional RNA objects. \\
\addlinespace[3pt]
Motifs $m_5$ and $m_{3\bullet}$ & Two local capacity obstructions.  Paired degree counts the closing paired incidence plus outgoing paired children, or only outgoing children at the exterior root.  Motif $m_5$ means paired degree greater than $4$; $m_{3\bullet}$ means an unpaired child is present while paired degree is greater than $2$. \\
\addlinespace[3pt]
Target class $\Ktwo$ & The targets covered by the theorem: no isolated base pair, at most two isolated stacks, every other helix of length at least three, and neither obstruction motif $m_5$ nor $m_{3\bullet}$. \\
\addlinespace[3pt]
Structure approximation & A method that resolves a coloring conflict by adding or duplicating a base pair in a helix.  It proves a statement about a modified target, not a design of the original target unchanged. \\
\addlinespace[3pt]
Formal verification in Lean & The proof has been translated into definitions and logical steps checked by the Lean kernel.  This checks that the theorem follows from the stated formal assumptions; it does not establish biological realism, experimental usefulness, novelty, or independent human expert review. \\
\end{longtable}
}

\subsection{Organization}

Section~\ref{sec:glossary} provides a reading guide to the terminology.  Sections~\ref{sec:model}--\ref{sec:prior} define the model and summarize prior results.  Sections~\ref{sec:structural}--\ref{sec:globalcolor} construct the coloring.  Sections~\ref{sec:sequence}--\ref{sec:notie} convert it into a unique design.  Section~\ref{sec:examples} gives examples, Section~\ref{sec:formal} describes the Lean development and artifact checks, Section~\ref{sec:ai} records the AI-assisted development and audit trail, Section~\ref{sec:scope} discusses scope and open questions, and Section~\ref{sec:conclusion} concludes.

\section{Model and definitions}
\label{sec:model}

\subsection{Sequences, structures, and energy}

Fix $n\ge 0$ and let $[n]=\{1,\dots,n\}$.  An RNA sequence is a word $w\in\{\A,\C,\G,\U\}^n$.  The Watson--Crick complement involution is
\[
\comp(\A)=\U,
\qquad \comp(\U)=\A,
\qquad \comp(\C)=\G,
\qquad \comp(\G)=\C.
\]
An ordered pair of nucleotides is compatible precisely when the second is the complement of the first.  G--U wobble is not permitted.

A secondary structure $P$ on $[n]$ is a set of arcs $(i,j)$ satisfying:
\begin{enumerate}
\item $1\le i<j\le n$;
\item each position belongs to at most one arc; and
\item the arcs are noncrossing: no two arcs $(i,j),(k,l)$ satisfy $i<k<j<l$.
\end{enumerate}
We impose minimum arc length $\theta=0$, so adjacent positions may pair.  A structure is compatible with $w$ if every arc joins complementary letters.  Its energy is
\[
E(w,P)=-|P|.
\]
Thus lower energy is exactly larger pair count.

\begin{definition}[Design and designability]
A sequence $w$ is a \emph{design} for a target $T$ if $T$ is compatible with $w$ and every distinct compatible noncrossing structure $S$ on the same sequence satisfies
\[
|S|<|T|.
\]
A target is \emph{designable} if it admits a design.
\end{definition}

The quantifier order is important: one sequence is chosen first, and then every compatible noncrossing structure of that same sequence is considered.  Competitors are not required to share the target's stems, helix decomposition, tree, or number of pairs.

\subsection{Tree representation}

Associate to a structure $P$ the rooted ordered tree representation used by Hale\v{s} et al.; we also refer to it as the interval tree.  The nodes are:
\begin{itemize}
\item a virtual root $r=[0,n+1]$;
\item a paired node $[i,j]$ for each arc $(i,j)\in P$; and
\item an unpaired singleton $[k,k]$ for every position not incident to an arc.
\end{itemize}
The children of an interval are its maximal proper subintervals in backbone order.  Noncrossingness makes the family laminar, so every nonroot node has a unique smallest enclosing paired interval or the virtual root.

For a paired node $v$, let $d(v)$ be its number of paired children.  Its paired degree is
\[
\deg(v)=1+d(v)\quad\text{for nonroot paired }v,
\qquad
\deg(r)=d(r).
\]
The root has no parent contribution, and unpaired children do not contribute to paired degree.

\subsection{Forbidden motifs}

The two standard local obstructions are:
\begin{itemize}
\item $m_5$: some node has paired degree greater than $4$;
\item $m_{3\bullet}$: some node has an unpaired child and paired degree greater than $2$.
\end{itemize}

\begin{lemma}[Motif bounds]
If a target avoids $m_5$ and $m_{3\bullet}$, then:
\begin{center}
\begin{tabular}{lll}
\toprule
node & has an unpaired child? & maximum paired children \\
\midrule
nonroot paired & yes & $1$ \\
nonroot paired & no & $3$ \\
root & yes & $2$ \\
root & no & $4$ \\
\bottomrule
\end{tabular}
\end{center}
\end{lemma}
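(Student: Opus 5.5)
The plan is a direct case split on the two binary features in the table: whether the node is the virtual root and whether it has an unpaired child. All four bounds come from unfolding the definition of paired degree, $\deg(v)=1+d(v)$ for a nonroot paired node and $\deg(r)=d(r)$ at the root, and comparing it with the two motif thresholds. No earlier structural result is needed beyond these definitions.

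First, consider nodes with no unpaired child. Only $m_5$ constrains them, giving $\deg(v)\le 4$. For a nonroot paired node this reads $1+d(v)\le 4$, so $d(v)\le 3$. For the root it reads $d(r)\le 4$.

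Second, consider nodes with an unpaired child. Now avoidance of $m_{3\bullet}$ forces $\deg(v)\le 2$, and this is stronger than the $m_5$ bound. For a nonroot paired node, $1+d(v)\le 2$ gives $d(v)\le 1$. For the root, $d(r)\le 2$.

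Since the paired children are counted by $d$ in each case, these are exactly the four table entries. The only point needing care is the bookkeeping convention, not a genuine obstacle. The parent incidence contributes $+1$ for nonroot paired nodes but not at the root, and unpaired children never count toward paired degree. This asymmetry is what makes the root rows differ from the nonroot rows by exactly one. I would also remark that the bounds are attained, for example a root with four paired children and no unpaired child, though only the upper bounds are claimed.
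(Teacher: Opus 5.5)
Your proposal is correct and follows essentially the same argument as the paper: unfold $\deg(v)=1+d(v)$ for nonroot paired nodes and $\deg(r)=d(r)$ at the root, then apply $\deg\le 2$ from $m_{3\bullet}$-avoidance when there is an unpaired child and $\deg\le 4$ from $m_5$-avoidance otherwise. The remark that the bounds are attained is a harmless extra and is not needed for the stated upper bounds.
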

\begin{proof}
For a nonroot paired node, $\deg=1+d$.  If it has an unpaired child, avoiding $m_{3\bullet}$ gives $1+d\le2$; otherwise avoiding $m_5$ gives $1+d\le4$.  The root has degree $d$, giving the remaining rows.
\end{proof}

\subsection{Helices and the target class}

A helix of length $h$ is a maximal consecutive nested run
\[
(i,j),(i+1,j-1),\dots,(i+h-1,j-h+1).
\]
A length-1 helix is an isolated base pair; a length-2 helix is an isolated stack.  Maximality is required both outward and inward.

\begin{lemma}[Helix chains]
Two pairs $[i,j]$ and $[i+1,j-1]$ are consecutive in a helix if and only if $[i,j]$ has exactly one paired child, namely $[i+1,j-1]$, and has no unpaired child.
\end{lemma}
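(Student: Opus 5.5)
Both directions come straight from the definitions of a helix as a maximal consecutive nested run and of the children of an interval as its maximal proper subintervals in backbone order. Throughout, I work in the interval tree of the target $P$. In this tree, positions $i$ and $j$ are both used by the arc $(i,j)$, so every child of $[i,j]$ lies inside $[i+1,j-1]$.

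For the forward direction, assume $(i,j)$ and $(i+1,j-1)$ are both arcs of $P$ (consecutive rungs of a helix). The interval $[i+1,j-1]$ is then a paired proper subinterval of $[i,j]$. It is maximal, because any node strictly between them would be an interval $[a,b]$ with $i\le a\le i+1$ and $j-1\le b\le j$. That forces $[a,b]$ to equal $[i,j]$ or $[i+1,j-1]$, or to contain $i$ or $j$ as an endpoint. The last option is impossible: $i$ and $j$ are already paired to each other, so no other arc or singleton starts or ends there. Hence $[i+1,j-1]$ is a child of $[i,j]$. Every other node strictly inside $[i,j]$ is either $[i+1,j-1]$ or lies inside it, since positions $i+1$ and $j-1$ are themselves covered by that arc. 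By laminarity, such a node is therefore a descendant of $[i+1,j-1]$ and not a child of $[i,j]$. So $[i+1,j-1]$ is the only child, paired or unpaired.

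For the converse, suppose $[i,j]$ has exactly one paired child $[a,b]$ and no unpaired child. Every position $k$ with $i<k<j$ belongs to some child of $[i,j]$: the smallest node containing $k$ other than $[i,j]$ descends from a child, because the children partition the interior by laminarity. If some $k\in(i,j)$ lay outside $[a,b]$, its containing child would be a second child. That child would be either an unpaired singleton or another paired interval, and both are excluded. Hence $[a,b]$ covers all of $[i+1,j-1]$, and since $a,b\in(i,j)$ this gives $a=i+1$ and $b=j-1$.

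Together with the definition of a helix as a run of pairs $(i+t,j-t)$, this characterizes consecutiveness. Maximality of the helix plays no role here, since the lemma is about adjacency within a run. The only delicate point is the covering claim: every interior position lies in some child of $[i,j]$. This needs the fact that every unpaired position appears as a singleton node, so that no position is skipped by the tree. That is exactly the step where the no-unpaired-child hypothesis is used. I expect this covering step to be the main thing to state carefully; the rest is bookkeeping.
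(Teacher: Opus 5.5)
Your proof is correct and takes the same route as the paper. The forward direction uses the fact that stacking leaves no backbone position for a second child of $[i,j]$. The converse is the covering argument: the children of $[i,j]$ exhaust the interior positions, which forces the unique child to be $[i+1,j-1]$. You additionally spell out the endpoint-exclusivity and laminarity details that the paper's two-sentence proof leaves implicit.
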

\begin{proof}
If the pairs are stacked, there are no backbone positions available for another child between their corresponding endpoints.  Conversely, if $[i,j]$ has one paired child $[k,l]$ and no unpaired child, its children cover all positions strictly inside $(i,j)$; hence $k=i+1$ and $l=j-1$.
\end{proof}

The consecutive-pair relation partitions the target pairs into disjoint maximal chains, so every target pair belongs to exactly one maximal helix.

\begin{definition}[At-most-two target class]
Let $\Ktwo$ be the class of targets satisfying:
\begin{enumerate}
\item at most two maximal helices have length $2$;
\item no maximal helix has length $1$;
\item every maximal helix whose length is not $2$ has length at least $3$;
\item $m_5$ is absent; and
\item $m_{3\bullet}$ is absent.
\end{enumerate}
The all-unpaired target belongs to $\Ktwo$ vacuously.
\end{definition}

Condition 2 is stated explicitly to keep the scientific boundary visible and to mirror the frozen Lean predicate.  Logically it follows from condition 3, because a helix of length 1 has length different from 2 but cannot have length at least 3.

\begin{remark}[Terminology]
We follow the terminology of Hale\v{s} et al. and Boury et al.: \emph{tree representation}, \emph{proper coloring}, \emph{level}, \emph{separated coloring}, \emph{isolated base pair}, and \emph{isolated stack}.  The symbols $E$, $L$, and $M$ are only local abbreviations for the three terminal-loop types defined below.  The symbols $F$ and $Q$ name two explicitly listed sets of feasible entry states; they are not additional biological objects.
\end{remark}

\subsection{Colors, exposures, and levels}

Color every nonroot paired node black $\Bcol$, white $\Wcol$, or gray $\Gray$.  Define
\[
\inv(\Bcol)=\Wcol,
\qquad \inv(\Wcol)=\Bcol,
\qquad \inv(\Gray)=\Gray,
\]
and
\[
\delta(\Bcol)=+1,
\qquad \delta(\Wcol)=-1,
\qquad \delta(\Gray)=0.
\]
For a nonroot paired node $v$, its exposed multiset is
\[
X(v)=\{\inv(c(v))\}\uplus\{c(u):u\text{ is a paired child of }v\}.
\]
At the root,
\[
X(r)=\{c(u):u\text{ is a paired child of }r\}.
\]

\begin{definition}[Proper coloring]
A coloring is proper if every exposed multiset contains at most one black, at most one white, and at most two gray incidences.
\end{definition}

At the root, properness is exactly the capacity bound on its paired children: at most one black, one white, and two gray.  At a nonroot black node the child capacities are one black, no white, and two gray; at a white node they are no black, one white, and two gray; and at a gray node they are one of each color.  Equivalently, away from the root, a node has at most one child of its own color, black and white are never parent--child neighbors, and the remaining color capacities above also hold.  This is the properness formulation used in the separated-coloring literature \citep{hales2017,boury2024}.

The inclusive level of a paired node is
\[
\level(v)=\sum_{u\in r\leadsto v}\delta(c(u)),
\]
where the sum runs over the colored nonroot paired nodes on the path from the root to $v$, including $v$.  The root level is zero.  An unpaired child of a paired node inherits the parent's level; an unpaired root child has level zero.  The entry level of a nonroot paired node is
\[
\entry(v)=\level(\operatorname{parent}(v))
        =\level(v)-\delta(c(v)).
\]

\begin{definition}[Separated and modulo-2 separated]
A proper coloring is \emph{separated} if no integer level is occupied both by a gray paired node and by an unpaired node.  Following Boury et al., a coloring is \emph{modulo-2 separated} if the residues of gray paired nodes and unpaired nodes are disjoint modulo two.  Our construction yields the more specific situation in which there are residues $\xi,\eta\in\mathbb Z/2\mathbb Z$, with $\eta=1-\xi$, such that every unpaired node has residue $\xi$ and every gray paired node has residue $\eta$.  We call this the \emph{uniform residue form} of modulo-2 separation.  It is the property named \texttt{StrongTwoSeparated} in the Lean development.
\end{definition}

\begin{lemma}
Every coloring in the uniform residue form is separated.
\end{lemma}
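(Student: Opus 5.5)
The plan is a direct contradiction argument that reduces integer equality to parity. Separation has two parts: properness and the absence of a shared integer level between a gray paired node and an unpaired node. The uniform residue form is defined as a proper coloring, and a modulo-2 separated coloring is in particular proper. So properness is inherited by hypothesis, and only the level condition needs an argument.

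Suppose, for contradiction, that some integer $k$ is the level of a gray paired node $v$ and also the level of an unpaired node $u$. By the uniform residue hypothesis, $\level(v)\equiv\eta \pmod 2$ and $\level(u)\equiv\xi\pmod 2$. Since $\level(v)=\level(u)=k$, reducing $k$ modulo two gives $\eta=\xi$ in $\mathbb Z/2\mathbb Z$. This contradicts $\eta=1-\xi$, because $1-\xi\neq\xi$ in $\mathbb Z/2\mathbb Z$; equivalently, $1\neq 0$ there.

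One bookkeeping point needs care. The level of an unpaired node is defined by inheritance: it is the level of its paired parent, or $0$ for a root child. The "residue of an unpaired node" in the uniform form must refer to this same inherited level, and that is how the definition reads. With that understood, no case split over root-children versus nested unpaired nodes is needed.

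There is no genuine obstacle. The only step requiring attention is confirming that "residue" means the reduction modulo two of exactly the integer level used in the separation definition. Once that is fixed, the lemma is the one-line observation that equal integers have equal residues.
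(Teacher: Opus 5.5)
Your argument is correct and is the same as the paper's one-line proof: gray-pair levels and unpaired levels lie in different parity classes, so no integer level can be shared. One small caveat: the paper does not literally build properness into the definition of the uniform residue form. Instead it carries properness as a standing hypothesis (``a proper coloring in the uniform residue form''), so your appeal to it is legitimate but comes from that hypothesis, not from the definition.
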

\begin{proof}
The gray and unpaired level sets lie in different parity classes.
\end{proof}

\begin{lemma}[Nonvacuous modulo-2 separation has uniform residue form]
\label{lem:modtwo-uniform}
If a proper modulo-2 separated coloring has at least one unpaired node and at least one gray paired node, then there are distinct residues $\xi,\eta\in\mathbb Z/2\mathbb Z$ such that all unpaired levels have residue $\xi$ and all gray-pair levels have residue $\eta$.
\end{lemma}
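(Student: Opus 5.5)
This is a short unpacking of definitions, and the argument runs entirely in $\mathbb Z/2\mathbb Z$.

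Let $R_U\subseteq\mathbb Z/2\mathbb Z$ be the set of residues of unpaired-node levels, and $R_G$ the set of residues of gray paired-node levels. Modulo-2 separation, as defined above following Boury et al., says exactly that $R_U\cap R_G=\varnothing$. The hypotheses say that there is at least one unpaired node and at least one gray paired node, so $R_U\neq\varnothing$ and $R_G\neq\varnothing$.

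Since $\mathbb Z/2\mathbb Z$ has only two elements, two nonempty disjoint subsets must both be singletons. Indeed, if $R_U$ contained both residues, then $R_G$ would have nowhere to live, contradicting $R_G\neq\varnothing$. So $R_U=\{\xi\}$ for some $\xi$. Then $R_G$ is a nonempty subset of the complement $\{1-\xi\}$, which forces $R_G=\{\eta\}$ with $\eta=1-\xi\neq\xi$.

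Unwinding the notation, every unpaired level has residue $\xi$ and every gray-pair level has residue $\eta$. This is precisely the uniform residue form.

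Properness is not used; it is simply carried along as part of the hypothesis of being a coloring in the uniform residue form. There is no genuine obstacle. The only point requiring care is to read the definition of modulo-2 separation as disjointness of the two \emph{residue sets}, not merely of the integer levels. The nonemptiness hypotheses are exactly what rule out the vacuous cases where one of the sets is empty and the other may occupy both parities.
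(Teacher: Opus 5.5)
Your proposal is correct and matches the paper's proof: the unpaired and gray residue sets are nonempty by hypothesis and disjoint by modulo-2 separation, so as subsets of the two-element set $\mathbb Z/2\mathbb Z$ they must be complementary singletons $\{\xi\}$ and $\{\eta\}$ with $\eta=1-\xi$. You are also right that properness plays no role in this conclusion.
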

\begin{proof}
The two residue sets are nonempty and disjoint subsets of the two-element set $\mathbb Z/2\mathbb Z$.  Each must therefore be a singleton, and they must be the two different residue classes.
\end{proof}

The lemma deliberately excludes vacuous cases.  If there are no gray nodes, ordinary and modulo-2 separation are vacuous, whereas the uniform residue form still requires all unpaired levels to have one parity; the analogous issue occurs when there are no unpaired nodes.

\section{Prior results and main theorem}
\label{sec:prior}

The following prior results frame the theorem.

\begin{theorem}[Hale\v{s} et al.; separated coloring implies designability \citep{hales2017}]
A target whose tree representation admits a proper separated coloring is designable in the four-letter Watson--Crick base-pair model.
\end{theorem}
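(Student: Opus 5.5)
We construct the sequence directly from the coloring and then show that every competing compatible fold has strictly fewer pairs. Each target pair $(i,j)$ with $i<j$ gets $(\G,\C)$ if black and $(\C,\G)$ if white. If gray, it gets $(\A,\U)$ or $(\U,\A)$, with the orientation chosen loop by loop. Each exposure contains at most two gray incidences, so we orient them so that the two gray ports facing a common loop present distinct letters ($\A$ and $\U$). We must check that these per-loop choices are consistent, since each gray pair faces two loops (above and below). We will handle this by fixing orientations top-down: the inner loop constraint is met by choosing the children's orientations given the parent's. Every target-unpaired position receives $\A$. The target is then compatible by construction.

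The first key step is a counting bound. Let $S$ be any compatible noncrossing fold. Every Watson--Crick pair uses exactly one $\C$ or one $\U$, because an A--U pair uses one $\U$ and a C--G pair uses one $\C$. Hence
\[
|S|\le \#\C+\#\U=|T|,
\]
and the target attains equality because it pairs every $\C$ and every $\U$. So any tying competitor $S$ must pair every $\C$ and every $\U$.

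The second step uses separation to show that such an $S$ never pairs a target-unpaired $\A$. Assign prefix balances $\beta(k)=\#\G-\#\C$ over positions $\le k$; a position's balance equals the level of its enclosing node, by induction along the tree. Suppose $S$ pairs an unpaired $\A$ at $k$ with a $\U$ from a gray pair. The $\U$ sits at a gray level, and $k$ sits at an unpaired level. These levels differ by separation, so the strict interior of the arc has nonzero G--C imbalance. By noncrossingness, every $\G$ and $\C$ inside the arc must pair inside it, so some $\G$ or $\C$ stays unpaired. An unpaired $\C$ contradicts the equality case directly. An unpaired $\G$ means some $\C$ is unmatched by counting G--C pairs inside the arc versus outside. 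Either way $|S|<|T|$.

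The third step shows uniqueness on the paired positions. Once unpaired $\A$'s are excluded, $S$ is a perfect compatible noncrossing matching on the target-paired positions. We argue by induction on the tree that it must coincide with $T$: consider the outermost position $i$ and its proposed partner in $S$. Properness guarantees that the ports around each loop carry distinct letters, so the only complementary partner reachable without crossing or stranding letters is the target partner. The balance argument again rules out long-range G--C rematchings across levels.

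I expect this last local-rerouting exclusion, together with the consistency of gray orientations, to be the main obstacle. The plan there is to reduce to a single loop, using the capacity rule $\{\Bcol,\Wcol,\Gray,\Gray\}$ and a case check on the at most four exposed letters.
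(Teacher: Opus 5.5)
Your sequence construction, the bound $|S|\le\#\C+\#\U=|T|$, and the prefix-balance use of separation to keep every target-unpaired $\A$ unpaired in a tying competitor follow the paper's Sections~\ref{sec:sequence}, \ref{sec:inventory} and~\ref{sec:notie}, and they are sound. The genuine gap is your third step. Uniqueness of the compatible perfect matching on the target-paired positions is the whole content of Section~\ref{sec:saturated}, and your proposal asserts it rather than proving it.

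Starting from the outermost position does not reduce the problem to one loop. Its $S$-partner may lie deep inside a later sibling subtree, or inside its own subtree. Excluding that requires knowing that the enclosed subword has no compatible perfect matching, which is a property of entire subtree words, not of the at most four letters exposed at a single loop.

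The balance argument cannot fill this in, because $\Lambda$ is blind to A--U rematchings. For the all-gray target \texttt{(())} written \A\U\A\U, every balance is $0$, yet \texttt{()()} is a compatible competitor with the same pair count. Only the letter conditions at loops exclude such competitors, and you have not shown how those local conditions force global uniqueness.

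The paper handles this with atomic saturable designs, in three steps:
\begin{itemize}
\item Saturability is characterized by repeatedly deleting adjacent complementary letters, which gives suffix cancellation (Lemma~\ref{lem:suffix-cancel}).
\item The wrapping lemma (Lemma~\ref{lem:wrap-atomic}) shows, bottom-up, that each subtree word $a z_1\cdots z_t b$ has no nonempty proper saturable prefix. Parity excludes a prefix ending at a block boundary. Otherwise, pairing $a$ with the prefix's last letter and cancelling complete blocks forces the cut block to begin with $b$, contradicting the loop condition.
\item The concatenation lemma (Lemma~\ref{lem:concat-atomic}) shows that atomic blocks with pairwise distinct first letters must keep their outer pairs.
\end{itemize}

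Alternatively, your single-loop plan can be made rigorous if you argue innermost rather than outermost. An arc of minimum span in $S$ joins adjacent positions. In a saturated target, two adjacent positions are either the two ends of a childless pair or one of three other configurations: (left end of $v$, left end of its first child), (right end of a child, left end of the next sibling), or (right end of the last child, right end of $v$). The loop conditions of Lemma~\ref{lem:local-distinct} make those three non-complementary, so that arc is a target leaf pair. Deleting it preserves every hypothesis, and induction on length shows that $S$ equals the restricted target.

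Finally, the gray-orientation consistency you flag is routine, but fix the convention explicitly. The letter of a closing pair that must differ from its children's left letters is its right letter, so a gray child copies its gray parent's left letter. The opposite reading yields exactly the failing word \A\U\A\U above.
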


Hale\v{s} et al. also showed that saturated targets are designable over the four-letter alphabet precisely when their maximum paired degree is at most four, and that the motifs $m_5$ and $m_{3\bullet}$ are local obstructions.  Their separated-coloring theorem is sufficient but not necessary in general.

\begin{theorem}[Boury--Bulteau--Ponty]
Every $(m_5,m_{3\bullet})$-free target whose helices all have length at least three admits a modulo-2 separated coloring and can be designed in linear time \citep{boury2024,boury2025}.  More generally, their dynamic program decides modulo-$m$ separability in $O(n\,2^m)$ time.  Deciding ordinary separability remains NP-complete for motif-free targets without isolated base pairs \citep{boury2025}.
\end{theorem}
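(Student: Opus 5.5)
The theorem is quoted from Boury et al.; here is how I would reprove it from the definitions of Section~\ref{sec:model}.

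\textbf{Approach.} I would split the statement into two parts. The first and main part is the existence of a proper coloring in the uniform residue form. The second part is designability, which then follows at once from the theorem of Hale\v{s} et al. just stated, since every uniform-residue coloring is separated. The linear running time I would obtain by observing that the construction below does constant work per tree node.

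\textbf{Setup.} Set $\xi=0$, because unpaired children of the root have level $0$, and set $\eta=1$. By the helix-chain lemma, the tree decomposes into maximal helices joined at loops. Each loop is the lower end of some helix, or the root, and has one of three types.
\begin{itemize}
\item A hairpin or internal loop: it has unpaired children, so by the motif-bound lemma it has at most one paired child (at most two at the root).
\item An unpaired-free multiloop: it has at most three outgoing helices (at most four at the root).
\item An empty terminus.
\end{itemize}
Every loop containing unpaired nodes must sit at residue $\xi$. I would place every unpaired-free multiloop at residue $\eta$, or at $\xi$ whenever that is more convenient.

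\textbf{Local transfer lemma.} Let a helix of length $h\ge3$ have entry residue $p$, head color $c$ (with $c=\Gray$ allowed only when $p=\eta$), and requested exit residue $q$. Then there is a color word that starts with $c$, keeps every gray pair at residue $\eta$, ends at residue $q$, is proper at every internal node, and has a non-gray terminal pair. The proof is a case check; write $a\in\{\Bcol,\Wcol\}$.
\begin{itemize}
\item Head $a$. The word $a^h$ has exit $p+h$. Both $a\Gray a^{h-2}$ (whose gray sits at $p+1$) and $aa\Gray a^{h-3}$ (whose gray sits at $p+2$) have exit $p+h-1$. Exactly one of these two has its gray at residue $\eta$.
\item Head $\Gray$. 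The word $\Gray a^{h-1}$ has exit $p+h-1$, and $\Gray\Gray a^{h-2}$ has exit $p+h-2$.
\end{itemize}
So both exit parities are available in every case, and every word ends in $a$. Internal properness holds because the only parent--child transitions that occur are $aa$, $a\Gray$, $\Gray a$ and $\Gray\Gray$, and each of these is allowed by the capacity rule.

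\textbf{Bottom-up induction.} I would prove by induction on helix subtrees that every subtree supports \emph{every} admissible entry state $(p,c)$. Given the entry state, I would choose the exit residue demanded by the loop below the helix, apply the transfer lemma, and then color the outgoing heads at that loop. A non-gray terminal $a$ contributes $\inv(a)$ to the exposure. At a multiloop with three children this leaves capacity $\{a,\Gray,\Gray\}$ for the heads; I would use head $a$ on one child and gray heads on the other two, the gray heads being admissible because the loop sits at $\eta$. At a loop with unpaired children there is at most one outgoing head, and it must be non-gray because the loop sits at $\xi$; I would give it color $a$, which is compatible with the terminal $a$. At the root I would assign the children heads $\Bcol,\Wcol,\Gray,\Gray$ in backbone order. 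The root sits at $\xi=0$, so gray heads there would violate the residue rule. I would repair this by starting the affected helices with gray only when the root has no unpaired child, in which case I place the root's children in residue class $\eta$ and shift to $\xi=1$. This reconciliation, which chooses $\xi$ globally according to whether the root has unpaired children, is the step I expect to be the main obstacle, and I would check it case by case against the root rows of the motif-bound table.

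Once the coloring exists, the theorem of Hale\v{s} et al. yields the design, and the one-pass construction gives the linear running time.
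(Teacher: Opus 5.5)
\textbf{A false case in your transfer lemma.} Your route is the paper's own argument specialized to $\shortcount\equiv0$. Every helix subtree supports all of $F$ (Theorem~\ref{thm:resource-invariant}). Helices are bridged by $a^h$ and one-gray-insertion words, loops are filled with the $r=0$ rows, and $\xi$ is fixed at the root according to whether the root has an unpaired child. The root reconciliation you flag as the main obstacle is routine: an unpaired root child forces paired degree at most two, so the heads $\Bcol,\Wcol$ need no gray.

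The real problem is that your local transfer lemma fails in one case: $h=3$, entry $\eta$, non-gray head $a$, requested exit $\eta$.
\begin{itemize}
\item After $a$ the level is $\xi$, so $c_2$ cannot be $\Gray$, since it would sit at $\xi$.
\item $c_2$ cannot be $\inv(a)$ either, since the exposure $\{\inv(a),\inv(a)\}$ is improper. Hence $c_2=a$ and the level returns to $\eta$.
\item A non-gray $c_3$ would then exit at $\xi$, so $c_3=\Gray$.
\end{itemize}
The unique valid word is therefore $aa\Gray$, which is your own $aa\Gray a^{h-3}$ at $h=3$. It closes gray, so your claim that ``every word ends in $a$'' is false. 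The case cannot be avoided. Your multiloop rule and the degree-3/4 root rows with $\eta=0$ give non-gray heads to children entered at $\eta$, and such a child may be a length-3 helix ending in a multiloop.

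\textbf{Consequence and repair.} Your multiloop allocation assumes a closing incidence $\inv(a)$ with heads $a,\Gray,\Gray$, so it is incomplete. With a gray close it would produce three gray incidences. You need a gray-closing row: heads $\Bcol,\Wcol$ for two children, and $\Bcol,\Wcol,\Gray$ for three. This row is admissible because every all-long child supports all five states of $F$.

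This is exactly why the paper is careful in three places:
\begin{itemize}
\item Lemma~\ref{lem:long-transfer} promises a non-gray close only for exit $\xi$. There it is automatic, because a gray terminal pair would sit at the exit residue.
\item Lemma~\ref{lem:strong-long} restricts non-gray $\eta$-exits to entry states in $Q$ and omits the $aa\Gray a^{h-3}$ family.
\item The $r=0$ multiloop table includes a $\Gray$ closing row.
\end{itemize}
With that row added, your proof of the first sentence coincides with the paper's.

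\textbf{Two smaller points.} Your aside that a multiloop may sit at $\xi$ ``when convenient'' is impossible by Lemma~\ref{lem:forced-residues}: an exposure of size at least three needs a gray incidence. You never actually use it. Also, you do not address the $O(n\,2^m)$ dynamic program or the NP-completeness of ordinary separability. The paper cites these from Boury et al.\ rather than proving them, so you should say explicitly that your argument covers only the first sentence.
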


The same work gives a motif-free nonseparable construction with minimum helix length 2 that uses many isolated stacks.  A documented primary-source search through 19 August 2026 located no theorem or counterexample for the class with at most two isolated stacks; this negative search result is not by itself evidence of novelty.

\begin{theorem}[Main coloring theorem]
\label{thm:maincolor}
Every target $T\in\Ktwo$ admits a proper modulo-2 separated coloring.  In fact, the coloring has the uniform residue form: all unpaired nodes have one parity and all gray paired nodes have the other.
\end{theorem}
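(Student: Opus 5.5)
We fix the residues once and for all: set $\xi=0$ and $\eta=1$ at the root. Exterior unpaired nodes have level $0$, so they automatically sit in residue $\xi$. We then construct the coloring top-down, helix by helix, as an induction on helix subtrees. A \emph{helix subtree} consists of one maximal helix together with everything below its terminal pair. An \emph{entry state} is a pair (entry residue, prescribed head color).

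The inductive claim is that every helix subtree admits, for every entry state in a prescribed finite set, a coloring with three properties: the head has the prescribed color; every gray pair sits at residue $\eta$; every unpaired node sits at residue $\xi$; and every exposure strictly inside the subtree is proper. The prescribed set is $F$ if the subtree contains no isolated stack, and the smaller set $Q$ if it contains at least one. Concretely, I would take $F$ to be all entry states that are locally consistent with the gray-residue rule: a gray head is allowed only from entry residue $\eta$, while a black or white head is allowed from either residue. I would take $Q$ to be the states that the two-pair table of Lemma~\ref{lem:two-pair} forces. From entry $\xi$ these are black and white heads. From entry $\eta$ the guarantee is only a gray head. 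This is the weakness exhibited in Figure~\ref{fig:boury-gap}.

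The inductive step works in two stages. First, classify the terminal loop below the helix. Type $E$ is an empty terminus. Type $L$ is a loop containing unpaired nucleotides and at most one outgoing helix; since unpaired nodes force the terminal level to be $\xi$, this loop requires exit residue $\xi$. Type $M$ is an unpaired-free multiloop with two or three outgoing helices (or up to four at the root, by the motif-bounds lemma); here we choose exit residue $\eta$. Second, choose a helix word that carries the entry state to the required exit residue and makes the exposure $X$ at the terminal pair proper together with the children's head colors. For long helices I would invoke the Boury transfer words, i.e.\ Lemma~\ref{lem:strong-long} with the families $a^h$, $a\Gray a^{h-2}$, $\Gray a^{h-1}$, $\Gray\Gray a^{h-2}$, which end non-gray whenever needed. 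For isolated stacks I would use the table of Lemma~\ref{lem:two-pair}. Children's head colors are then chosen from their guaranteed sets $F$ or $Q$, with entry residue equal to our exit residue. The capacity budget is one black, one white and two grays, and this must be checked case by case. A multiloop at $\eta$ with three children drawing from $F$ can take heads such as $\Bcol,\Wcol,\Gray$, or $\Bcol,\Gray,\Gray$ after one child is assigned a non-gray head, provided our own terminal color is chosen compatibly.

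The main obstacle is the multiloop whose children lie in $Q$, since from residue $\eta$ such children demand gray heads. The global counting argument enters here. There are at most two isolated stacks in total, so at most two children at any multiloop can be stack-bearing. If two are, then neither the current helix nor any other sibling branch contains a stack. The current helix is therefore long, with length at least $3$, and we may pick a non-gray-ending $\eta\to\eta$ word such as $\Gray\Bcol\Bcol$ (or $\Bcol\Bcol\Bcol$ if the head is not gray). This gives the exposure $\{\inv(\Bcol),\Gray,\Gray\}$, and any third child comes from $F$ and receives the single remaining free slot, namely $\Bcol$. If only one child is stack-bearing, the remaining slots suffice even with a gray terminal pair, because two children from $F$ can take $\Bcol$ and $\Wcol$. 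I also need to verify that every such subtree still realizes all of $Q$ for its own entry. This requires rechecking the case where the current helix is itself an isolated stack above a multiloop, in which case all children are stack-free and take heads from $F$.

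At the root, the same allocation applies with no closing incidence and at most two stack-bearing children. The root's own unpaired children, which exist only when the root has at most two paired children, sit at level $0\equiv\xi$. Finally, the uniform residue form is immediate from the invariant, and the separation lemma for the uniform residue form then yields the theorem.
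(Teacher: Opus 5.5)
\textbf{Gap: the fixed residue choice fails at roots of paired degree 3 or 4.} Your overall structure matches the paper's: the sets $F$ and $Q$, the $E/L/M$ classification, and the count forcing the incoming helix to be long when two children are stack-bearing. The first step, however, breaks the construction.

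You fix $\xi=0$, $\eta=1$ once and for all, so every root child enters at residue $0=\xi$. A gray head there would itself sit at level $0\equiv\xi$, which the uniform residue form forbids. So every root child must start $\Bcol$ or $\Wcol$, and root properness allows at most one of each. Consequently, whenever the root has paired degree $3$ or $4$, no coloring with $\xi=0$ exists at all. $\Ktwo$ permits such roots, e.g.\ $T_1=\texttt{(())(())((()))}$ or three long root helices. Your remark that ``the same allocation applies'' at the root hides this: the multiloop allocations are for a loop at residue $\eta$, whereas your root sits at $\xi$.

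The repair, which is what the paper does, is to let the residue naming depend on the root:
\begin{itemize}
\item If the root has paired degree $3$ or $4$, then $m_{3\bullet}$-freeness guarantees it has no unpaired child, so nothing pins level $0$ to $\xi$. Set $\eta=0$, $\xi=1$. Give gray heads to the at most two stack-bearing children via $Q$, and fill the remaining slots of the capacity $\{\Bcol,\Wcol,\Gray,\Gray\}$ with $F$-states for the other children.
\item Otherwise the root has at most two paired children. Set $\xi=0$ and give them $\Bcol$ and $\Wcol$, which both $F$ and $Q$ allow at a $\xi$ entry.
\end{itemize}

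\textbf{Two smaller points.} First, the word families you cite omit $aa\Gray a^{h-3}$, which is needed for the $F$-states $\eta\Bcol$ and $\eta\Wcol$. After a non-gray head at an $\eta$ entry, the second pair lies at $\xi$, so the parity-flipping gray must go in position $3$. For instance, $h=4$ with entry $\eta$, head $\Bcol$, and an $L$ terminal requires $\Bcol\Bcol\Gray\Bcol$. Lemma~\ref{lem:long-transfer} supplies this.

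Second, an isolated stack above a multiloop need not have only stack-free children, since one child may contain the second isolated stack. This is harmless: the $r=1$ allocation tolerates the gray-closing words ($\Bcol\Gray$, $\Wcol\Gray$, or $\Gray\Gray$) that Lemma~\ref{lem:two-pair} forces for an $\eta$ exit.
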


\begin{theorem}[Main design theorem]
\label{thm:maindesign}
Every target $T\in\Ktwo$ is designable in the four-letter Watson--Crick base-pair model with $\theta=0$.
\end{theorem}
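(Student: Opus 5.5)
The plan is to combine Theorem~\ref{thm:maincolor} with the theorem of Hale\v{s} et al.\ stated above. For $T\in\Ktwo$, Theorem~\ref{thm:maincolor} supplies a proper coloring in uniform residue form. By the lemma preceding Lemma~\ref{lem:modtwo-uniform}, this coloring is separated, so the Hale\v{s} et al.\ certificate applies and $T$ is designable. Since the paper promises a self-contained argument, I would not stop at the citation. Instead I would prove the certificate-to-design implication directly from the coloring, in three steps.

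\emph{Step 1 (sequence).} Assign $\Bcol\mapsto(\G,\C)$ and $\Wcol\mapsto(\C,\G)$. Every target-unpaired position receives $\A$. For each gray pair, choose an orientation $(\A,\U)$ or $(\U,\A)$ so that the loop-facing letters at every exposure are distinct. Properness allows at most two grays per exposure, one of which may be the inverted closing pair, so a local choice (for example, flipping the second gray child) suffices. Then $T$ is compatible.

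\emph{Step 2 (counting bound).} Every Watson--Crick pair uses exactly one $\C$ or $\U$ (in the sense that $\C\G$ uses one $\C$ and $\A\U$ uses one $\U$). Hence any compatible $S$ satisfies $|S|\le\#\C+\#\U=|T|$, because in $T$ every pair contains exactly one $\C$ or $\U$ and unpaired positions are $\A$. Equality therefore forces $S$ to use every $\C$ and every $\U$. This bound applies both to the whole fold and to any sub-interval.

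\emph{Step 3 (no tie).} Suppose $|S|=|T|$ and $S\ne T$. First, if $S$ pairs a target-unpaired $\A$ at position $k$ with a $\U$ belonging to a gray target pair $g$, consider the interval strictly between them. Noncrossingness confines the partners of $\C$'s and $\G$'s inside that interval to the interval itself. Inside the interval, the count of $\G$ minus $\C$ equals a difference of levels, namely $\ell(g)$ versus the level of $k$, up to a fixed boundary correction. Separation makes this difference nonzero, so some $\C$ is left unmatched, contradicting Step 2. Second, if $S$ pairs only target-paired positions, restrict to the paired positions. I would argue by induction on the tree, outermost loop first, that the exposure distinctness from Step 1 forces each loop's pairs to match the target. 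At a loop, the closing $\C$ or $\U$ must be matched, and the only complementary letter available within the loop's scope is the target partner; an $\A$-$\U$ or $\G$-$\C$ crossover between two different incidences would require equal letters on both sides, which Step 1 forbids. The main obstacle is this second part, the uniqueness on the paired restriction. It requires a careful bookkeeping of which positions are reachable noncrossingly, together with handling $\U$'s from gray pairs that could pair with unpaired $\A$'s at the same parity; the latter is excluded by the first part. I would first try a minimal-counterexample argument: take the innermost arc of $S$ not in $T$ and derive a contradiction from the local letters of the exposure.

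Combining the three steps, $T$ is the unique maximum compatible structure, so the constructed sequence is a design and $T$ is designable.
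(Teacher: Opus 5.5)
Your first paragraph is already a complete proof, and it is the paper's proof in outline. Theorem~\ref{thm:maincolor} gives a proper coloring in uniform residue form, hence a separated one, and the coloring-to-design implication finishes. The paper does not cite Hale\v{s} et al.\ for that last step; it discharges it with Lemma~\ref{lem:inventory} and Theorem~\ref{thm:no-tie}.

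Your Steps 1, 2 and the first half of Step 3 follow those sections, with two small repairs.
\begin{itemize}
\item In Step 1 the requirement is that the children's left letters be pairwise distinct and also differ from the parent's \emph{right} letter. Below a gray parent this forces its gray child to copy the parent's left letter, so it is more than a rule about gray siblings; the paper's $\A\U\A\U$ example for \texttt{(())} shows what fails otherwise.
\item In Step 3 there is no boundary correction, since the endpoints are $\A$ and $\U$. Also, $\Lambda(b)\ne\Lambda(a)$ may leave a $\G$ rather than a $\C$ unmatched. To contradict Step 2 you therefore need that equality also forces every $\G$ to be paired, which follows from $\#\G=\#\C$.
\end{itemize}

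The genuine gap in your self-contained version is the second half of Step 3, which you leave as a plan. The outermost-first sketch rests on a false claim. A closing letter is not faced only by its target partner: around a black closing pair, a black child's left $\G$ is complementary to the closing right $\C$. The ``equal letters on both sides'' observation only handles a pairing between the outer ends of two sibling blocks. The paper fills this step with the atomic-saturable-design lemmas (Lemma~\ref{lem:suffix-cancel}, Lemmas~\ref{lem:wrap-atomic} and~\ref{lem:concat-atomic}, Theorem~\ref{thm:saturated-unique}).

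Your minimal-counterexample idea does work, and it is shorter. Here is how to complete it.
\begin{itemize}
\item First, $S$ pairs exactly the target-paired positions. All $\U,\C,\G$ are used, and no target-unpaired $\A$ is, so the $g$ paired $\A$'s are those of gray pairs.
\item Take $(a,b)\in S\setminus T$ of minimal span. Each target-paired position inside $(a,b)$ is $S$-paired inside $(a,b)$ by an arc of smaller span, hence by a $T$-arc. So the target-paired positions in $(a,b)$ are closed under $T$-partners.
\item Consequently the $T$-partners $a',b'$ of $a,b$ lie outside $[a,b]$, and the configuration $b'<a<b<a'$ is a crossing. Three cases remain.
\item If $a<b<b'<a'$, closedness makes $[a,a']$ the parent of $[b,b']$. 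Compatibility gives child left letter $w_b=\comp(w_a)=w_{a'}$, the parent's right letter.
\item If $b'<a'<a<b$, then $[b',b]$ is the parent of $[a',a]$. The child's left letter $w_{a'}=\comp(w_a)=w_b$ is again the parent's right letter.
\item If $a'<a<b<b'$, closedness and noncrossingness make $[a',a]$ and $[b,b']$ siblings, possibly root children. Their left letters are equal: $w_{a'}=w_b$.
\end{itemize}
Each case contradicts Lemma~\ref{lem:local-distinct}.

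This route avoids restricting and compressing the word and needs no saturable-word theory. The paper's route instead proves a reusable uniqueness theorem for compatible perfect matchings, mirroring Hale\v{s} et al.\ and the Lean development.
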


The novel part of Theorem~\ref{thm:maincolor} is the bounded-resource extension of the uniform modulo-2 guarantee to targets containing one or two length-2 helices.  The all-long case is due to Boury et al.  For saturated members of the class, the designability conclusion already follows from Hale\v{s} et al.'s Result~R4; the present theorem supplies the stated modulo-2 certificate.  Theorem~\ref{thm:maindesign} then follows from the sequence construction and uniqueness argument in Sections~\ref{sec:sequence}--\ref{sec:notie}.

\begin{proposition}[Three isolated stacks can prevent modulo-2 separation]
\label{prop:three-short}
The motif-free target
\[
((.))((.))((.))
\]
has three isolated stacks and admits no proper modulo-2 separated coloring.  It nevertheless admits a proper ordinarily separated coloring and is uniquely designable.
\end{proposition}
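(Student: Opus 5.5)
The plan is to handle the negative claim by a short parity count at the exterior root, and the positive claim by exhibiting an explicit coloring and invoking the Hale\v{s} et al.\ theorem stated above.

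\textbf{Tree of the target.} Index the positions $1,\dots,15$. The virtual root has exactly three paired children, the heads $(1,5)$, $(6,10)$, $(11,15)$, and no unpaired child. Each head has a single stacked child: $(2,4)$, $(7,9)$, $(12,14)$. Each of these inner pairs has one unpaired child: positions $3$, $8$, $13$. So each stem is a maximal helix of length exactly $2$, which gives three isolated stacks. Motif-freeness is a direct check. The root has paired degree $3\le 4$ and no unpaired child. A head has paired degree $2$ and no unpaired child. An inner pair has paired degree $1$ and one unpaired child, so $m_{3\bullet}$ is avoided.

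\textbf{No proper modulo-2 separated coloring.} Suppose one exists.
\begin{enumerate}
\item At the root, properness allows at most one $\Bcol$, one $\Wcol$ and two $\Gray$ among the three heads. Hence at least one head is gray.
\item A gray head has level $0$, so even residue occurs among gray levels. Since there are unpaired nodes, modulo-2 separation forces every unpaired level to be odd.
\item For a stem with color word $c_1c_2$, the unpaired nucleotide has level $\delta(c_1)+\delta(c_2)$. This sum is odd exactly when precisely one of $c_1,c_2$ is gray.
\item If $c_2$ is the gray one, then $c_1\ne\Gray$ and the gray inner pair has level $\delta(c_1)=\pm1$. That is odd, so it collides with the odd unpaired residue.
\item Therefore every stem must have $c_1=\Gray$ and $c_2\ne\Gray$. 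All three heads are then gray, giving three gray incidences in $X(r)$. This contradicts properness.
\end{enumerate}
This argument needs no appeal to Lemma~\ref{lem:modtwo-uniform}, because both residue classes involved are visibly nonempty.

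\textbf{Ordinary separated coloring.} Color the three stems $\Bcol\Bcol$, $\Wcol\Wcol$ and $\Gray\Bcol$ in backbone order.
\begin{enumerate}
\item Exposures: $X(r)=\{\Bcol,\Wcol,\Gray\}$. At the heads, $\{\Wcol,\Bcol\}$, $\{\Bcol,\Wcol\}$ and $\{\Gray,\Bcol\}$. At the inner pairs, the singletons $\{\Wcol\}$, $\{\Bcol\}$ and $\{\Wcol\}$. All of these are proper.
\item Levels: the only gray node has level $0$. The unpaired nodes have levels $2$, $-2$ and $1$. These sets are disjoint, so the coloring is separated.
\end{enumerate}
This separated coloring is not modulo-2 separated, since $2\equiv 0$ matches the gray level; that is consistent with the first part. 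The Hale\v{s} et al.\ theorem then gives designability. If ``uniquely designable'' is meant to assert more, such as a concrete design, one can write out the sequence produced by the construction of Sections~\ref{sec:sequence}--\ref{sec:notie} and verify it.

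The only delicate step is the parity argument. Its key point is that the gray head on the root pins the gray residue to even, and that a two-pair helix has no spare position to fix the parity.
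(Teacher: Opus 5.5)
Your proof is correct and follows essentially the same route as the paper. Root properness forces a gray head, modulo-2 separation pins the two residue classes, each length-2 stem is forced to start gray, and the three gray root incidences then contradict properness; for the positive part you use the identical coloring $\Bcol\Bcol$, $\Wcol\Wcol$, $\Gray\Bcol$ with level sets $\{0\}$ and $\{-2,1,2\}$. The differences are minor: you inline the parity computation rather than citing Lemma~\ref{lem:forced-residues} and Lemma~\ref{lem:two-pair}, and using the gray head at level $0$ removes the paper's $\xi$-versus-$\eta$ case split. You also get designability from the general coloring-to-design theorem, whereas the paper additionally writes out \texttt{GGACCCCAGGAGACU} and confirms a unique optimum by dynamic programming.
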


This witness is closely related to Hale\v{s} et al.'s three-branch examples in their Figures~6--7 \citep{hales2017}.  In particular, their Figure~6 compares
\[
\texttt{(..)(..)(..)}\qquad\text{with}\qquad\texttt{((..))((..))((..))}.
\]
The latter is inflated and designable.  The literal above has the same three length-2 root branches but one, rather than two, unpaired positions in each hairpin.  Hale\v{s} et al. did not formulate modulo-2 nonseparability, although the proof below applies unchanged to their designable two-stutter because duplicate unpaired leaves inherit the same terminal level.  The exact literal and unique-design computation reported here use the one-unpaired-per-hairpin variant; the underlying three-branch geometry is adapted from that earlier example family.

\begin{proof}
Suppose a proper modulo-2 separated coloring existed.  The target has three unpaired nodes.  Its degree-3 root cannot have only black and white children, because properness permits at most one of each, so at least one root child is gray.  Lemma~\ref{lem:modtwo-uniform} therefore supplies residue classes $\xi$ for all unpaired nodes and $\eta$ for all gray pairs.

Each root-child helix has length 2 and terminates in a loop containing an unpaired node.  Lemma~\ref{lem:forced-residues} forces that terminal pair to lie at $\xi$ and to be non-gray.  The complete length-2 table in Lemma~\ref{lem:two-pair} is now a necessity statement: at a $\xi$ entry, only the words $\Bcol\Bcol$ and $\Wcol\Wcol$ can exit at $\xi$, so every root child begins non-gray; at an $\eta$ entry, only $\Gray\Bcol$ and $\Gray\Wcol$ can exit at $\xi$, so every root child begins gray.  In the first case the root would need three non-gray first colors, although properness permits only one black and one white.  In the second it would need three gray first colors, although properness permits at most two.  Both cases contradict properness.

For the caveat, the coloring with helix words $\Bcol\Bcol$, $\Wcol\Wcol$, and $\Gray\Bcol$ has gray level set $\{0\}$ and unpaired level set $\{-2,1,2\}$, so it is proper and ordinarily separated but not modulo-2 separated.  The color-to-sequence construction gives \texttt{GGACCCCAGGAGACU}.  The exact optimum-and-count dynamic program recorded with the release artifact finds target pair count 6, optimum 6, and optimum count 1.
\end{proof}

Thus the bound two is genuine for the modulo-2 coloring criterion used in the theorem, but not for ordinary separation or designability.  The proposition does not claim that every target with three isolated stacks fails modulo-2 separation.

\section{Loop types and required residues}
\label{sec:structural}

Call the innermost pair of a maximal helix its terminal pair.  The children of this pair describe the loop at the end of the helix.

\begin{lemma}[Loop types at a helix terminus]
\label{lem:endpoint-taxonomy}
The loop at the terminal pair of every nonroot helix belongs to exactly one of the following types:
\begin{itemize}
\item $E$: an empty terminal pair with no child (possible because $\theta=0$);
\item $L$: a hairpin, bulge, or internal-loop type with at least one unpaired position and zero or one outgoing helix;
\item $M$: a multiloop with no unpaired position and two or three outgoing helices.
\end{itemize}
The parent of the outer pair of a maximal helix is either the root, an $L$ node with one outgoing helix, or an $M$ node.
\end{lemma}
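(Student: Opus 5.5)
The plan is to read the classification directly off the children of the terminal pair, using the Helix chains lemma together with the Motif bounds lemma. Let $t=[i,j]$ be the terminal (innermost) pair of a maximal nonroot helix. Write $d(t)$ for its number of paired children and let $u(t)$ indicate whether it has at least one unpaired child. Since $t$ is nonroot and the target avoids $m_5$ and $m_{3\bullet}$, the Motif bounds lemma gives $d(t)\le 1$ when $u(t)$ holds and $d(t)\le 3$ otherwise.

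Next we split into cases.
\begin{itemize}
\item If $t$ has no children at all, it is of type $E$; this is admissible because $\theta=0$ allows $j=i+1$.
\item If $t$ has an unpaired child, then $d(t)\in\{0,1\}$, and $t$ is of type $L$. Each paired child heads its own maximal helix, so this is ``zero or one outgoing helix.''
\item If $t$ has no unpaired child but has children, then $d(t)\ge1$. The case $d(t)=1$ is excluded: by the Helix chains lemma, $t$ would be consecutive with its unique child, contradicting inward maximality of the helix. Hence $d(t)\in\{2,3\}$, which is type $M$.
\end{itemize}
The three types are mutually exclusive because they are distinguished by the pair (has no children; has an unpaired child; has no unpaired child and has $\ge2$ paired children). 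Each paired child of an $L$ or $M$ node starts a new helix, since the Helix chains criterion fails at $t$.

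For the second assertion, let $p$ be the parent of the outer pair $v$ of a maximal helix, and suppose $p$ is not the root. Then $p$ is paired, and it cannot be consecutive with $v$ by outward maximality. By the Helix chains lemma, $p$ therefore either has an unpaired child or has at least two paired children. Moreover, $p$ is itself the terminal pair of its own helix, for the same reason. The first half of the proof then applies to $p$: if $p$ has an unpaired child it is of type $L$, and since $v$ is a paired child, $d(p)=1$, so it is an $L$ node with one outgoing helix. Otherwise $p$ is of type $M$.

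The only delicate point is this last identification, namely that a nonroot parent of a helix head is itself a helix terminal pair, so that the taxonomy applies to it. This follows from the fact that $p$ is not consecutive with any of its children, which makes $p$ the innermost pair of its own maximal chain. Everything else is routine bookkeeping with the two cited lemmas.
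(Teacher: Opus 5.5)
Your proof is correct and follows essentially the same route as the paper. You classify the terminal pair by its children, use the Motif bounds for the caps $d\le 1$ and $d\le 3$, rule out a single paired child with no unpaired child via the Helix chains lemma and inward maximality, and treat the parent of a helix head by outward maximality. Your explicit check that a nonroot parent of a helix head is itself a terminal pair, so the taxonomy applies to it, spells out what the paper compresses into ``the same maximality argument applies to the parent of a helix head.''
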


\begin{proof}
A terminal pair with one paired child and no unpaired child would continue the same helix by the helix-chain lemma, contradicting maximality.  If it has an unpaired child, the $m_{3\bullet}$ bound gives at most one paired child.  If it has no unpaired child, the $m_5$ bound permits zero, two, or three paired children.  The same maximality argument applies to the parent of a helix head.
\end{proof}

The loops of types $L$ and $M$ impose parity constraints in any coloring in the uniform residue form.

\begin{lemma}[Residues at terminal loops]
\label{lem:forced-residues}
Fix residues $\xi$ and $\eta=1-\xi$ in a proper coloring in the uniform residue form.
\begin{enumerate}
\item An $L$ loop lies at residue $\xi$; its own color and the colors of its paired children are non-gray.
\item An $M$ loop lies at residue $\eta$.
\item An $E$ loop has no forced residue.
\end{enumerate}
\end{lemma}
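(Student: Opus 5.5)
The plan is to read off each claim directly from the definitions of level, properness, and the uniform residue form. The key fact is that an unpaired child of a paired node $v$ inherits $\level(v)$. By uniform residue form, every unpaired node has residue $\xi$, so any node with an unpaired child lies at residue $\xi$.

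\textbf{$L$ loops.} By Lemma~\ref{lem:endpoint-taxonomy}, an $L$ loop is a terminal pair $v$ with at least one unpaired child, so $\level(v)\equiv\xi$. Since $v$ is paired, it has residue $\xi\neq\eta$ and therefore cannot be gray, because gray pairs sit at $\eta$.

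Now let $v$ have a paired child $u$ (there is at most one). If $u$ were gray, then $\delta(c(u))=0$, so $\level(u)=\level(v)\equiv\xi$. This contradicts gray pairs lying at $\eta$. Hence $u$ is non-gray as well.

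Properness is not even needed for this item; only the residue bookkeeping is used.

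\textbf{$M$ loops.} An $M$ loop $v$ has no unpaired child and two or three paired children. The main step is to show that at least one child is gray. Properness allows at most one black and one white incidence in $X(v)$. The closing incidence $\inv(c(v))$ already occupies one slot.

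If $c(v)$ is non-gray, its inverse uses one of the two non-gray slots. That leaves at most one non-gray child, so among at least two children, at least one is gray.

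If $c(v)$ is gray, then $v$ itself is a gray paired node. Its residue is then $\eta$ directly, and we are done.

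In the remaining case, pick a gray child $u$. Then $\level(u)=\level(v)$ and $\level(u)\equiv\eta$, so $\level(v)\equiv\eta$.

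This small capacity count is the only place a real argument is needed, and it is the step I expect to require care. In particular, the root is excluded because $M$ refers to a nonroot terminal pair, so $X(v)$ does include the closing incidence.

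\textbf{$E$ loops.} An $E$ node has no children, so neither separation nor properness constrains its residue beyond the ordinary requirement on the node itself. To make "no forced residue" precise, I would exhibit both parities with a small example. Alternatively, it suffices to note that nothing in the definitions ties $\level(v)$ to $\xi$ or $\eta$ when $v$ is non-gray. Either remark suffices, since item 3 is a disclaimer rather than a constraint.
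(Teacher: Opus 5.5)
Your proposal is correct and matches the paper's proof. For $L$ loops you use that an unpaired child inherits the loop's level, so a gray loop node or gray child would sit at $\xi$; for $M$ loops you use the same capacity count (at most one black and one white among at least three incidences) to force a gray incidence, which is either the loop itself or a child at the same inclusive level. Your case split on the closing color only rephrases the paper's ``size at least three'' count, and the paper likewise treats item 3 as a disclaimer needing no argument.
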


\begin{proof}
An unpaired child of an $L$ loop has the same level as the loop, so the loop lies at $\xi$.  A gray loop node or gray child would then place a gray paired node at $\xi$, contradicting the uniform residue condition.  For an $M$ loop, the exposed multiset has size at least three.  Properness permits at most one black and one white incidence, so some incidence is gray.  If the closing incidence is gray, the loop color is gray and its level is $\eta$; if a child incidence is gray, the child has the same inclusive residue as the loop, again forcing $\eta$.
\end{proof}

Thus an $L$ terminal requests exit residue $\xi$, an $M$ terminal requests $\eta$, and an $E$ terminal may be assigned either residue.  We choose $\xi$ for $E$.

\section{Feasible entry states for helix subtrees}
\label{sec:resource}

\subsection{Counting isolated stacks in a helix subtree}

Contract each maximal helix to one node in the tree of helix subtrees.  For a maximal helix $H$, let $\shortcount(H)$ be the number of isolated stacks (maximal helices of length 2) in the complete paired subtree rooted at $H$.  Let $K$ range over the outgoing child helices at the terminal loop of $H$.

\begin{lemma}[Subtree decomposition of the isolated-stack count]
\label{lem:short-decomp}
For every maximal helix $H$,
\[
\shortcount(H)
=
\mathbf 1_{\{|H|=2\}}
+
\sum_{K\text{ outgoing from }H}\shortcount(K).
\]
The child subtrees contain disjoint sets of maximal helices.
\end{lemma}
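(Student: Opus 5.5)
The plan is to prove both claims at once by describing exactly which maximal helices lie in the complete paired subtree rooted at $H$. By definition this subtree consists of the pairs of $H$ together with every paired node that descends from the terminal pair of $H$. I first show that the maximal helices it contains are $H$ itself and the maximal helices of the subtrees rooted at the outgoing helices $K$. The isolated-stack count then splits into the indicator term for $H$ and a sum over the children $K$, and the sum is legitimate because no helix is counted twice.

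\textbf{Step 1: the descendants of the terminal pair split by outgoing helix.} By Lemma~\ref{lem:endpoint-taxonomy}, the paired children of the terminal pair $t$ of $H$ are exactly the heads of the outgoing helices $K$. The interval tree is laminar, so every paired descendant of $t$ lies below exactly one paired child of $t$. Hence it lies in exactly one child subtree rooted at some head of $K$. The intervals of distinct children of $t$ are disjoint, which makes these child subtrees pairwise disjoint as sets of pairs.

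\textbf{Step 2: maximal helices do not straddle the pieces.} Every target pair belongs to exactly one maximal helix, by the remark following the helix-chain lemma. Take a maximal helix $H'$ that meets the subtree rooted at $H$ but is not $H$.
\begin{itemize}
\item $H'$ cannot contain a pair of $H$, since maximal helices partition the pairs. It therefore contains some pair strictly below $t$.
\item Consecutive pairs of a helix are parent and child by the helix-chain lemma. Its pairs therefore form a path in the tree.
\item That path cannot pass from one child subtree of $t$ to another without going through $t$, because the child subtrees are disjoint and hang off different children of $t$.
\item It cannot pass through $t$ either, because $t$ belongs to $H$.
\end{itemize}
So $H'$ lies entirely inside one child subtree $K$. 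Conversely, every maximal helix of a child subtree lies inside the subtree of $H$.

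\textbf{Step 3: count.} Restricting the partition of maximal helices to the subtree of $H$ gives the disjoint union $\{H\}\uplus\biguplus_K\{\text{helices of the subtree of }K\}$. Counting the members of length $2$ gives $\mathbf 1_{\{|H|=2\}}+\sum_K\shortcount(K)$. This also proves the disjointness claim.

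I expect the main obstacle to be Step 2, specifically the argument that a maximal helix cannot cross from the subtree above into a child subtree. Here I will use maximality of $H$ directly: $t$ is not consecutive with any child, since otherwise $H$ would extend inward. Step 2 does not need this, because it only uses the partition of pairs into maximal helices, but this observation is what shows that $t$ itself belongs to no child's helix. The remaining steps are routine.
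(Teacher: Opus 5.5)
Your proposal is correct and follows the paper's own argument. The paper's proof likewise observes that every maximal helix in the subtree of $H$ is either $H$ itself or lies below exactly one paired child of the terminal pair, and then uses the partition of pairs into maximal helices together with the disjointness of sibling intervals to make the sum disjoint; your Steps 1--3 spell out these same facts in more detail, including the path argument for why a helix cannot straddle two sibling subtrees.
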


\begin{proof}
Every maximal helix in the paired subtree of $H$ is either $H$ itself or lies below exactly one paired child of the terminal pair.  The maximal-helix partition and the disjointness of sibling intervals make the sum disjoint.
\end{proof}

\begin{corollary}[Consequences of the global bound]
\label{cor:resource-bounds}
If the whole target lies in $\Ktwo$, then:
\begin{enumerate}
\item $\shortcount(H)\le2$ for every helix subtree;
\item at most two outgoing child subtrees of any loop contain an isolated stack;
\item if two outgoing child subtrees contain isolated stacks, then $H$ is not an isolated stack and hence $|H|\ge3$.
\end{enumerate}
The same bound holds among the root-child helix subtrees.
\end{corollary}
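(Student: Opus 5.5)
The statement follows from Lemma~\ref{lem:short-decomp} together with the global bound in the definition of $\Ktwo$. Throughout, I use that the helix subtrees rooted at distinct outgoing helices of one loop are disjoint, and likewise for distinct root children.

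\textbf{Item 1.} Fix a helix $H$. By Lemma~\ref{lem:short-decomp}, applied inductively down the contracted helix tree, $\shortcount(H)$ counts the maximal helices of length $2$ inside the paired subtree of $H$ exactly once each. This works because every maximal helix of the target lies in exactly one helix chain, by the helix-chain lemma and the partition remark after it. These helices are therefore a subset of all maximal helices of length $2$ in $T$. Condition~1 of $\Ktwo$ says there are at most two of those, so $\shortcount(H)\le 2$.

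\textbf{Item 2.} Suppose three distinct outgoing child helices $K_1,K_2,K_3$ of one loop each satisfy $\shortcount(K_i)\ge1$. Their subtrees are disjoint, so each contributes at least one isolated stack not counted by the others. That gives at least three distinct isolated stacks in $T$, contradicting condition~1. The argument is identical for the paired children of the root, since these also head disjoint helix subtrees; this yields the final sentence of the corollary.

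\textbf{Item 3.} Suppose two outgoing child subtrees $K_1,K_2$ of the terminal loop of $H$ contain isolated stacks. Lemma~\ref{lem:short-decomp} gives
\[
\shortcount(H)\ \ge\ \mathbf 1_{\{|H|=2\}}+\shortcount(K_1)+\shortcount(K_2)\ \ge\ \mathbf 1_{\{|H|=2\}}+2 .
\]
By Item~1, $\shortcount(H)\le2$, which forces $\mathbf 1_{\{|H|=2\}}=0$, i.e.\ $|H|\neq2$. Conditions~2--3 of $\Ktwo$ then give $|H|\ge3$.

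The one point needing care is that $\shortcount$ counts distinct global helices without double counting. This comes from the disjointness clause of Lemma~\ref{lem:short-decomp} and the fact that a helix's own length-$2$ indicator refers to $H$ itself, which lies outside every child subtree.
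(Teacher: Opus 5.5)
Your proof is correct and follows the paper's argument. Items~1 and~2 come from the global count bound together with disjointness of the child subtrees, and item~3 uses Lemma~\ref{lem:short-decomp} to show that a length-2 $H$ above two stack-bearing children would force $\shortcount(H)\ge3$, after which condition~3 of $\Ktwo$ gives $|H|\ge3$ (the inductive unrolling in your Item~1 is unnecessary, since $\shortcount(H)$ is defined directly as a count of length-2 maximal helices of $T$ lying in the subtree of $H$).
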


\begin{proof}
The first two claims follow from the global count bound and disjointness.  If two children already contain at least two isolated stacks and $H$ also had length 2, Lemma~\ref{lem:short-decomp} would give $\shortcount(H)\ge3$.  The target-class definition then gives $|H|\ge3$.
\end{proof}

\subsection[Feasible entry states F and Q]{Feasible entry states $F$ and $Q$}

Fix $\xi$ and let $\eta=1-\xi$.  An \emph{entry state} for a helix subtree is a pair consisting of the residue at which the helix is entered and the color of its first pair.  Define
\[
F=\{\xi\Bcol,\xi\Wcol,\eta\Bcol,\eta\Wcol,\eta\Gray\},
\qquad
Q=\{\xi\Bcol,\xi\Wcol,\eta\Gray\}.
\]
The state $\xi\Gray$ is impossible: a gray first pair does not change the entry residue and would therefore lie at $\xi$, whereas gray nodes must lie at $\eta$.

The proof uses the two sets as follows:
\begin{itemize}
\item a subtree containing no isolated stack is guaranteed to support every state in $F$;
\item a subtree containing one or two isolated stacks is guaranteed to support every state in $Q$.
\end{itemize}
At a $\xi$ entry, both sets allow black or white.  At an $\eta$ entry, the smaller guarantee $Q$ requires a gray first pair, whereas $F$ also allows black or white.

\begin{definition}[Feasible entry state]
For a helix subtree $H$, write $(r,c)\in\Accept(H)$ if the entire subtree can be colored with entry residue $r$ and first-pair color $c$ so that every exposure is proper, every gray pair has residue $\eta$, and every unpaired node has residue $\xi$.
\end{definition}

The proof establishes guaranteed subsets of $\Accept(H)$; it does not claim that $F$ and $Q$ list every feasible coloring.

\begin{theorem}[Inductive guarantee for helix subtrees]
\label{thm:resource-invariant}
For every helix subtree $H$ in a target from $\Ktwo$,
\[
\shortcount(H)=0 \implies F\subseteq\Accept(H),
\]
and
\[
1\le\shortcount(H)\le2 \implies Q\subseteq\Accept(H).
\]
\end{theorem}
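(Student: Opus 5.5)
The proof is a structural induction over the tree of helix subtrees, from the leaves upward. Fix a helix subtree $H$ with $|H|=h$ and a requested entry state $(r,c)$, which lies in $F$ when $\shortcount(H)=0$ and in $Q$ otherwise. I split the argument at the terminal loop of $H$ using Lemma~\ref{lem:endpoint-taxonomy}. For each loop type I first determine what the loop requires of the terminal pair of $H$: an exit residue and possibly a non-gray terminal color. I also determine the entry states it demands of the child subtrees. Lemma~\ref{lem:forced-residues} fixes the exit residue: $\xi$ for $L$ and for $E$ (by our convention), and $\eta$ for $M$. By the induction hypothesis, a child $K$ accepts every state of $F$ if $\shortcount(K)=0$ and every state of $Q$ otherwise. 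It then remains to choose a helix color word for $H$ that starts with $c$, enters at $r$, keeps every gray pair at $\eta$, exits at the requested residue, and keeps every internal exposure proper. Its terminal color must also be compatible with the loop exposure.

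Next come the local loop cases.

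\emph{$E$ loop.} There are no children, so we only need exit residue $\xi$.

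\emph{$L$ loop.} The terminal color must be non-gray, say $a\in\{\Bcol,\Wcol\}$. The unique child, if present, is entered at residue $\xi$, and I give it head color $a$. The exposure is then $\{\inv(a),a\}$, which is proper. The state $\xi a$ lies in both $F$ and $Q$, so the child is covered in either case.

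\emph{$M$ loop.} The children are entered at $\eta$. A child with no isolated stack can take any of $\eta\Bcol,\eta\Wcol,\eta\Gray$. A stack-bearing child must take $\eta\Gray$. By Corollary~\ref{cor:resource-bounds} there are at most two stack-bearing children, so I distinguish two subcases.
\begin{itemize}
\item At most one stack-bearing child. A gray terminal pair is acceptable: the exposure $\{\Gray\}$ is completed by at most one gray child plus one black and one white child. With two children this also works.
\item Exactly two stack-bearing children. Both take $\Gray$, and the terminal pair must be non-gray, say $a$. A third child, if present, has no stack and takes head color $a$, giving the exposure $\{\inv(a),\Gray,\Gray,a\}$. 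Corollary~\ref{cor:resource-bounds} guarantees $|H|\ge3$ in this subcase.
\end{itemize}
Throughout, black--white symmetry lets us choose $a$ to match whatever non-gray terminal color the transfer word produces.

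The helix transfers then split by length.

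\emph{Long helices, $h\ge3$.} I invoke Lemma~\ref{lem:strong-long}, the subfamily of Boury et al.'s transfer words. For every state in $F\supseteq Q$ and either exit residue, it supplies a word with the prescribed head that ends non-gray. A non-gray terminal always works in the loop analysis above: for $L$ it is required, and for $M$ it leaves one $\Bcol$ or $\Wcol$ slot for a stack-free third child together with two gray slots.

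\emph{Isolated stacks, $h=2$.} Here $\shortcount(H)\ge1$, so only $Q$ must be supported. I read off the table of Lemma~\ref{lem:two-pair}:
\begin{itemize}
\item $\xi a\to\xi$ uses $aa$, which ends non-gray, as $L$ requires;
\item $\xi a\to\eta$ uses $a\Gray$;
\item $\eta\Gray\to\xi$ uses $\Gray a$;
\item $\eta\Gray\to\eta$ uses $\Gray\Gray$.
\end{itemize}
The two $\eta$-exit words end gray, which is allowed because $H$ itself is a stack. Its children then carry at most one stack in total, so we are in the first $M$ subcase.

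I expect the main obstacle to be this coupling between the $M$-loop capacity count and the terminal color. Gray-ending exits must occur only when at most one child is stack-bearing, and the global bound of two is exactly what guarantees this. The remaining care goes into bookkeeping: the prescribed first color survives each transfer, the child heads agree with the loop levels (a child gray head at $\eta$ keeps residue $\eta$, and a non-gray head shifts parity correctly), and the level parities compose across the helix chain.
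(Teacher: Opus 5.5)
Your architecture matches the paper's proof: induction over helix subtrees, the $E/L/M$ split at the terminal loop, the length-2 table for isolated stacks, and Corollary~\ref{cor:resource-bounds} to force $|H|\ge3$ when two children are stack-bearing. The long-helix step, however, is false as stated. Lemma~\ref{lem:strong-long} covers only entry states in $Q$ and exit residue $\eta$. The stronger claim you attribute to it fails: there is not always a non-gray-ending word for every state of $F$ and either exit.

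Take $h=3$, entry state $\eta\Bcol$, and exit $\eta$. The head sits at $\xi$. The second pair therefore cannot be gray, and properness rules out $\Wcol$, so it is $\Bcol$ at $\eta$. To exit at $\eta$ the third pair must then be $\Gray$. So the unique valid word is $\Bcol\Bcol\Gray$, which closes gray; symmetrically, $\Wcol\Wcol\Gray$ is forced for $\eta\Wcol$. This situation does occur: a stack-free child of a multiloop can receive a non-gray head and itself be a length-3 helix closing another multiloop. You cannot recolor around it, because every three-child multiloop, and every gray-closed one, must give some child a non-gray head at an $\eta$ entry. A construction that insists on non-gray closes for all long helices therefore breaks on some members of the class, including all-long ones.

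The repair is the paper's route, and your own $M$-loop analysis already contains the needed ingredient:
\begin{enumerate}
\item For $h\ge3$, use Lemma~\ref{lem:long-transfer}. It preserves the head and reaches either exit residue, closes non-gray when the exit is $\xi$ (so $L$ terminals are fine), and may close gray when the exit is $\eta$.
\item Switch to Lemma~\ref{lem:strong-long} only when the terminal loop is a multiloop with two stack-bearing children. Then $\shortcount(H)=2$, so the requested entry state lies in $Q$ and the lemma genuinely applies.
\item In every other $M$ case at most one child is stack-bearing, and your gray-terminal analysis shows a gray close is acceptable.
\end{enumerate}
The problematic states $\eta\Bcol$ and $\eta\Wcol$ are requested only when $\shortcount(H)=0$, so there $r=0$ and a gray close is always allowed. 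With this change, the rest of your argument, including the isolated-stack bookkeeping, goes through.
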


The theorem is proved by well-founded induction in Sections~\ref{sec:transfers}--\ref{sec:globalcolor}.  Figure~\ref{fig:FQ} summarizes the two guarantees.

\begin{figure}[t]
\centering
\begin{tikzpicture}[font=\small,
state/.style={draw,rounded corners,minimum width=20mm,minimum height=8mm,align=center},
arr/.style={-{Latex[length=2.2mm]},thick}]
\node[state,fill=SoftBlue,draw=DeepBlue] (F) {$F$\\no isolated stack};
\node[state,fill=orange!10,draw=orange!70!black,right=35mm of F] (Q) {$Q$\\contains an isolated stack};
\draw[arr,DeepBlue] (Q) -- node[above]{subset} (F);
\node[below=9mm of F,align=center] (frows) {$\xi$: $\Bcol,\Wcol$\\$\eta$: $\Bcol,\Wcol,\Gray$};
\node[below=9mm of Q,align=center] (qrows) {$\xi$: $\Bcol,\Wcol$\\$\eta$: $\Gray$};
\draw[decorate,decoration={brace,amplitude=5pt,mirror},thick] ($(frows.south west)+(-2mm,-2mm)$) -- ($(qrows.south east)+(2mm,-2mm)$)
 node[midway,below=7mm,align=center,text width=9.0cm]{A child covered only by the $Q$ guarantee contains at least one isolated stack.  Since the whole target contains at most two, a loop or root has at most two such children.};
\end{tikzpicture}
\caption{The two sets of guaranteed entry states.  They differ only at an $\eta$ entry: $Q$ requires a gray first pair, whereas $F$ also allows black or white.}
\label{fig:FQ}
\end{figure}

\section{Local helix transfers}
\label{sec:transfers}

A color word $c_1\cdots c_h$ on a helix is valid for entry residue $\varepsilon$ and requested exit $\tau$ if:
\begin{enumerate}
\item $c_1$ is the prescribed first color;
\item every adjacent exposure $\{\inv(c_i),c_{i+1}\}$ is proper;
\item every gray pair occurs at residue $\eta$; and
\item the inclusive residue of the final pair is $\tau$.
\end{enumerate}
When $\tau=\xi$, the construction additionally closes non-gray, as required by $L$ terminals.

\subsection{Color patterns for a length-2 helix}

\begin{lemma}[Length-2 transition table]
\label{lem:two-pair}
For a length-2 helix, the complete set of valid color words is:
\begin{center}
\begin{tabular}{cc}
\toprule
entry $\to$ exit & valid words \\
\midrule
$\xi\to\xi$ & $\Bcol\Bcol,\;\Wcol\Wcol$ \\
$\xi\to\eta$ & $\Bcol\Gray,\;\Wcol\Gray$ \\
$\eta\to\xi$ & $\Gray\Bcol,\;\Gray\Wcol$ \\
$\eta\to\eta$ & $\Bcol\Bcol,\;\Wcol\Wcol,\;\Gray\Gray$ \\
\bottomrule
\end{tabular}
\end{center}
\end{lemma}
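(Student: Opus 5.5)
The plan is an exhaustive finite enumeration: a length-2 word $c_1c_2$ has only $3\times 3=9$ candidates, and for each entry residue $\varepsilon\in\{\xi,\eta\}$ the four validity conditions can be checked directly. The key simplification is arithmetic modulo two. Since $\delta(\Bcol)=+1\equiv-1=\delta(\Wcol)$, a non-gray pair flips the residue and a gray pair preserves it. So the inclusive residue of pair $i$ is $\varepsilon$ plus the number of non-gray colors among $c_1,\dots,c_i$. Condition 3 (gray only at $\eta$) and condition 4 (exit residue) are therefore parity statements about the number of non-gray letters.

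First I would dispose of properness of the single internal exposure $\{\inv(c_1),c_2\}$. This two-element multiset is improper only if it contains two blacks or two whites, that is, $\inv(c_1)=c_2\in\{\Bcol,\Wcol\}$. This excludes exactly $\Bcol\Wcol$ and $\Wcol\Bcol$, leaving seven words: $\Bcol\Bcol$, $\Wcol\Wcol$, $\Bcol\Gray$, $\Wcol\Gray$, $\Gray\Bcol$, $\Gray\Wcol$, $\Gray\Gray$. The exposures at the loop above the head and at the terminal loop are not part of the word's validity and are handled elsewhere, so they are not checked here.

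Next I would split on the entry residue.

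At entry $\xi$, a gray head would sit at $\xi$, so $c_1$ must be non-gray; this is the exclusion of state $\xi\Gray$ noted earlier. After $c_1$ the residue is $\eta$.
\begin{itemize}
\item If $c_2$ is gray, it sits at $\eta$, which is allowed, and the exit is $\eta$. This gives $\Bcol\Gray$ and $\Wcol\Gray$.
\item If $c_2$ is non-gray, the exit flips back to $\xi$. This gives $\Bcol\Bcol$ and $\Wcol\Wcol$ (the mixed words were already excluded).
\end{itemize}

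At entry $\eta$:
\begin{itemize}
\item A gray $c_1$ stays at $\eta$. Then $c_2$ gray gives $\Gray\Gray$ at $\eta\to\eta$, and $c_2$ non-gray gives $\Gray\Bcol$ and $\Gray\Wcol$ at $\eta\to\xi$.
\item A non-gray $c_1$ moves to $\xi$. Then $c_2$ gray would lie at $\xi$, which is forbidden, so $c_2$ must be non-gray, returning to $\eta$. This gives $\Bcol\Bcol$ and $\Wcol\Wcol$ at $\eta\to\eta$.
\end{itemize}

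Both inclusions follow from this case split. Every listed word satisfies all conditions, and every unlisted combination violates one of them. The extra requirement of a non-gray close when $\tau=\xi$ holds automatically for every word in a $\to\xi$ row.

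The main obstacle is not computational. It is pinning down precisely what ``valid'' covers, so that completeness is honest. Specifically, one must confirm that the exit residue is the inclusive residue of $c_2$, and that no properness condition involving neighbors outside the helix is included. With those conventions fixed, the table is a direct readout of the enumeration.
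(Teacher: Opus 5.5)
Your proposal is correct and takes the same approach as the paper. The paper's proof also combines the mod-2 residue recurrences, the gray-at-$\eta$ and exit-residue requirements, and properness of the single internal exposure $\{\inv(c_1),c_2\}$, then checks all nine color pairs. Your case split (first discarding $\Bcol\Wcol$ and $\Wcol\Bcol$ by properness, then branching on the entry residue) writes that finite check out explicitly, and every row you derive matches the table.
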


\begin{proof}
The residues satisfy
\[
\level(x_1)\equiv\varepsilon+\delta(c_1),
\qquad
\level(x_2)\equiv\level(x_1)+\delta(c_2)\pmod 2.
\]
Require every gray occurrence to lie at $\eta$, the final residue to be $\tau$, and $\{\inv(c_1),c_2\}$ to satisfy the properness capacities.  Checking the nine color pairs gives exactly the table.
\end{proof}

In particular, an $\eta\to\xi$ isolated stack must start gray, and a $\xi\to\eta$ isolated stack must end gray.

\subsection{Long-helix transfer}

\begin{lemma}[Long transfer]
\label{lem:long-transfer}
Let $h\ge3$.  Suppose the first color is non-gray, or is gray at an $\eta$ entry.  For either requested exit residue $\tau\in\{\xi,\eta\}$, there is a valid color word of length $h$ preserving the first color; if $\tau=\xi$, the final color can be chosen non-gray.
\end{lemma}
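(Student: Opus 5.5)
The plan is an explicit case construction on the entry residue $\varepsilon$ and the first color, using only the families of color words recalled from Boury et al. Black--white symmetry ($\Bcol\leftrightarrow\Wcol$ commutes with $\inv$ and only negates $\delta$, so it preserves all residues mod $2$) lets me write $a$ for a non-gray first color. The whole verification rests on three observations.
\begin{itemize}
\item Modulo $2$ every non-gray color flips the residue, and gray preserves it.
\item The adjacent exposures $\{\inv(a),a\}$, $\{\inv(a),\Gray\}$, $\{\Gray,a\}$ and $\{\Gray,\Gray\}$ are all proper.
\item The only improper adjacent exposures are $\{\Wcol,\Wcol\}$ and $\{\Bcol,\Bcol\}$, that is, $\inv(a)$ followed by $\inv(a)$. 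This never occurs if I avoid switching between black and white within a helix.
\end{itemize}
Validity of each candidate word therefore reduces to two checks: whether every gray position has residue $\eta$, and what the exit parity is.

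\textbf{Case 1: non-gray first color $a$.} The word $a^h$ contains no gray, ends non-gray, and exits at $\varepsilon+h$. If that equals $\tau$, I am done. Otherwise I need a word that exits at the opposite parity $\varepsilon+h-1$.
\begin{itemize}
\item If $\varepsilon=\xi$, I take $a\Gray a^{h-2}$. The gray pair sits at residue $\xi+1=\eta$, the exit is $\varepsilon+h-1$, and the word ends with $a$ because $h\ge3$.
\item If $\varepsilon=\eta$, that word would put the gray at $\xi$. Instead I take $aa\Gray a^{h-3}$. Its gray sits at $\eta+2=\eta$, and it exits at $\varepsilon+h-1$.
\end{itemize}
In the second subcase the word ends non-gray when $h\ge4$. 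When $h=3$ it is $aa\Gray$, which exits at $\eta$. That is harmless: $\tau=\eta$ in this situation, so the non-gray closing requirement does not apply. (When $\tau=\xi$ in this subcase, $a^3$ already works.)

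\textbf{Case 2: gray first color at an $\eta$ entry.} Here I use two words.
\begin{itemize}
\item $\Gray a^{h-1}$ has its only gray at $\eta$ and exits at $\eta+h-1$.
\item $\Gray\Gray a^{h-2}$ has both grays at $\eta$, has the proper exposure $\{\Gray,\Gray\}$, and exits at $\eta+h-2$.
\end{itemize}
These two exits have opposite parities, so one of them realizes any requested $\tau$. Both words end with $a$ because $h\ge3$, so the non-gray closing requirement for $\tau=\xi$ is met in either case.

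\textbf{Main obstacle.} The delicate point is the short end of the range, $h=3$, in the $\eta$-entry non-gray case. There, the residue constraint on gray pairs forbids putting a gray pair in the second position, so the only spare switch is the third position. I would check this subcase exhaustively by hand, over the nine two-letter continuations of $a$, to confirm that no word is missed. I would then note that the extra non-gray closing requirement only ever binds on the side where $a^h$ already works.
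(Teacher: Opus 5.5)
Your proof is correct and follows the paper's argument for a non-gray first color: compare $a^h$ with one gray inserted at the second position (for a $\xi$ entry) or the third position (for an $\eta$ entry) so that the gray sits at $\eta$, and observe that the only gray-closing word, $aa\Gray$ at $h=3$, exits at $\eta$ (the paper says this generally: a gray final pair lies at $\eta$, so a $\xi$-exit word cannot close gray). For a gray head at an $\eta$ entry you use $\Gray a^{h-1}$ and $\Gray\Gray a^{h-2}$ where the paper uses $\Gray^h$ and $\Gray^{h-1}a$; your words are also valid and close non-gray for both exits, and the exhaustive check in your last paragraph is unnecessary because the lemma only asserts existence.
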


\begin{proof}
Fix a non-gray color $a\in\{\Bcol,\Wcol\}$.  The local patterns
\[
(a,a),\quad(a,\Gray),\quad(\Gray,a),\quad(\Gray,\Gray)
\]
are proper, whereas $(a,\inv(a))$ is not.  If the first color is $a$, compare the constant word $a^h$ with a word having one gray insertion at the second or third position, chosen so that the gray lies at $\eta$.  The two words have opposite exit parities because a non-gray contributes $1$ modulo two and a gray contributes $0$.  The insertion position exists because $h\ge3$, and the $\xi$-exit choice cannot close gray.  If the first color is gray at an $\eta$ entry, use $\Gray^h$ for exit $\eta$ and $\Gray^{h-1}a$ for exit $\xi$.
\end{proof}

This sufficient transfer is the local engine in the all-long theorem of Boury et al. \citep{boury2024,boury2025}.  The at-most-two proof needs to select a particular non-gray-ending subfamily of their published transfer table when two short-helix demands meet.

\subsection[Non-gray-ending selection from Boury's transfer table]{Non-gray-ending selection from Boury's transfer table}

\begin{lemma}[Non-gray-ending long transfer]
\label{lem:strong-long}
Let $h\ge3$ and suppose the entry state lies in $Q$.  There exists a valid length-$h$ word that exits at $\eta$ and closes non-gray.
\end{lemma}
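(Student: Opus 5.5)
The plan is to prove Lemma~\ref{lem:strong-long} by explicit construction, splitting into the three entry states of $Q$. Up to black--white symmetry there are really two cases. In every case, write $a\in\{\Bcol,\Wcol\}$ for a non-gray color. Exhibit a word drawn from the families $a^h$, $a\Gray a^{h-2}$, $\Gray a^{h-1}$, $\Gray\Gray a^{h-2}$ recalled before the lemma. Then verify the four validity conditions of the transfer definition. These are: the first color is as prescribed; each adjacent exposure $\{\inv(c_i),c_{i+1}\}$ is one of the proper patterns $(a,a),(a,\Gray),(\Gray,a),(\Gray,\Gray)$ from the proof of Lemma~\ref{lem:long-transfer}; every gray lies at residue $\eta$; and the final inclusive residue is $\eta$. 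The parity bookkeeping uses only that a non-gray color flips the residue and a gray color preserves it.

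\emph{Case $\eta\Gray$.} The first pair is gray at residue $\eta$, which is allowed. Take $\Gray a^{h-1}$ if $h-1$ is even, and $\Gray\Gray a^{h-2}$ if $h-1$ is odd (so $h-2$ is even). In both cases the grays sit at the prefix residue $\eta$. The non-gray tail has even length, so the exit residue returns to $\eta$, and the word ends in $a$. The second option needs $h\ge3$ so that the tail $a^{h-2}$ is nonempty; otherwise the word would close gray. At $h=3$ this gives $\Gray\Bcol\Bcol$ or $\Gray\Wcol\Wcol$, matching the introduction.

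\emph{Case $\xi a$.} After $c_1=a$ the residue is $\eta$. If $h$ is odd, $a^h$ exits at $\xi+h\equiv\eta$ and closes non-gray. If $h$ is even, use $a\Gray a^{h-2}$. The gray sits at residue $\xi+1=\eta$, as required, and the exit residue is $\xi+1+(h-2)\equiv\eta$. Since $h\ge3$ (indeed $h\ge4$ here), the tail is nonempty and the final color is $a$. Every adjacent exposure is of the form $(a,\Gray)$, $(\Gray,a)$, or $(a,a)$, so all are proper.

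The main obstacle is not any single step but checking that the parity choices are consistent across both parities of $h$. In particular, one must confirm that the $\Gray\Gray$ prefix and the single gray insertion both land at $\eta$ and never at $\xi$. It must also be checked that $h\ge3$ guarantees a non-gray final position in every branch; a failure here would occur exactly at $h=2$, consistent with Lemma~\ref{lem:two-pair}. The remaining entry states ($\eta\Bcol$, $\eta\Wcol$) lie outside $Q$ and need not be treated.
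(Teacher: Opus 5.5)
Your proposal is correct and matches the paper's proof. It uses the same four words $a^h$, $a\Gray a^{h-2}$, $\Gray a^{h-1}$, $\Gray\Gray a^{h-2}$, selected by the entry state in $Q$ and the parity of $h$, and verifies them with the same mod-2 count of non-gray colors and the same proper-adjacency check. The only cosmetic difference is that the paper fixes the trailing color $a=\Bcol$ at an $\eta$ entry to make the construction deterministic, whereas you leave it free.
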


\begin{proof}
At a $\xi$ entry, let $a\in\{\Bcol,\Wcol\}$ be the prescribed non-gray first color.  At an $\eta$ entry, membership in $Q$ prescribes the first color $\Gray$; here $a$ denotes the free trailing non-gray color, which we fix as $\Bcol$ to make the construction deterministic.  Use:
\begin{center}
\begin{tabular}{lll}
\toprule
entry and first color & parity of $h$ & word \\
\midrule
$\xi$, first $a$ & odd & $a^h$ \\
$\xi$, first $a$ & even & $a\Gray a^{h-2}$ \\
$\eta$, first $\Gray$ & odd & $\Gray a^{h-1}$ \\
$\eta$, first $\Gray$ & even & $\Gray\Gray a^{h-2}$ \\
\bottomrule
\end{tabular}
\end{center}
Each word preserves the first color and closes with $a$.  Counting non-gray contributions modulo two gives exit $\eta$.  The displayed gray positions remain at $\eta$, and every adjacency is one of the proper patterns above.  These are precisely the relevant columns of Boury et al.'s WABI Figure~8 and journal Figure~12, with black--white symmetry made explicit \citep{boury2024,boury2025}.  In particular, the odd $\eta$-entry words at length three are $\Gray\Bcol\Bcol$ and $\Gray\Wcol\Wcol$.
\end{proof}

The words $\Gray\Bcol\Bcol$ and $\Gray\Wcol\Wcol$ are therefore prior local ingredients, not new patterns.  The contribution below is the global counting argument that identifies when the incoming helix must be long and hence when this published non-gray-ending option is guaranteed to be available.

\section{Color assignments at loops and the root}
\label{sec:allocations}

\subsection{Assignments at terminal loops}

Let $r$ be the number of outgoing child helix subtrees that contain at least one isolated stack.  By Corollary~\ref{cor:resource-bounds}, $r\le2$.

For a terminal pair of type $E$, there is no outgoing child and $X(v)=\{\inv(a)\}$ is proper.  For a terminal loop of type $L$, the current helix exits at $\xi$ and ends in a non-gray color $a$.  If there is one outgoing helix, assign its first pair the same color $a$, giving exposure $\{\inv(a),a\}$.  The child enters at $\xi$, where both $F$ and $Q$ allow $a$.

At a multiloop of type $M$, the number of outgoing helices is $d\in\{2,3\}$ and the required exit residue is $\eta$.

\paragraph{No child subtree contains an isolated stack ($r=0$).}
Use the following first-pair colors:
\begin{center}
\begin{tabular}{ccc}
\toprule
closing color & $d=2$ child colors & $d=3$ child colors \\
\midrule
$\Bcol$ & $\Bcol,\Gray$ & $\Bcol,\Gray,\Gray$ \\
$\Wcol$ & $\Wcol,\Gray$ & $\Wcol,\Gray,\Gray$ \\
$\Gray$ & $\Bcol,\Wcol$ & $\Bcol,\Wcol,\Gray$ \\
\bottomrule
\end{tabular}
\end{center}
Every child subtree is covered by $F$ and therefore supports each displayed state at an $\eta$ entry.

\paragraph{One child subtree contains an isolated stack ($r=1$).}
Assign gray to that child and use the following colors for the other children:
\begin{center}
\begin{tabular}{ccc}
\toprule
closing color & other color for $d=2$ & other colors for $d=3$ \\
\midrule
$\Bcol$ & $\Bcol$ & $\Bcol,\Gray$ \\
$\Wcol$ & $\Wcol$ & $\Wcol,\Gray$ \\
$\Gray$ & $\Bcol$ & $\Bcol,\Wcol$ \\
\bottomrule
\end{tabular}
\end{center}
The child containing an isolated stack supports $\eta\Gray$ through $Q$; every other child is covered by $F$.

\paragraph{Two child subtrees contain isolated stacks ($r=2$).}
The incoming helix is long by Corollary~\ref{cor:resource-bounds}.  Apply Lemma~\ref{lem:strong-long} so that it reaches residue $\eta$ but ends in a non-gray color $a$.  Assign gray to both child helices.  If $d=3$, assign the remaining child, which contains no isolated stack, the color $a$.  The exposures are permutations of
\[
\{\inv(a),\Gray,\Gray\}
\quad\text{or}\quad
\{\inv(a),\Gray,\Gray,a\}.
\]
For $a=\Bcol$ these are $\{\Wcol,\Gray,\Gray\}$ and $\{\Wcol,\Gray,\Gray,\Bcol\}$; the white case is symmetric.  Each exposure has at most one black, one white, and two gray incidences.

\begin{figure}[t]
\centering
\begin{tikzpicture}[font=\small,
loop/.style={circle,draw=DeepBlue,fill=SoftBlue,minimum size=9mm},
helix/.style={draw,rounded corners,minimum width=18mm,minimum height=8mm,align=center},
lab/.style={font=\footnotesize},
arr/.style={thick}]
\node[helix,fill=gray!8] (in) {incoming long helix\\$\Gray\Bcol\Bcol$};
\node[loop,below=8mm of in] (m) {multiloop};
\node[helix,fill=orange!9,below left=13mm and 15mm of m] (q1) {child subtree\\with isolated stack};
\node[helix,fill=orange!9,below right=13mm and 15mm of m] (q2) {child subtree\\with isolated stack};
\draw[arr] (in)--node[right,lab]{ends $\Bcol$} (m);
\draw[arr] (m)--node[left,lab]{first color $\Gray$} (q1);
\draw[arr] (m)--node[right,lab]{first color $\Gray$} (q2);
\node[right=20mm of m,align=left] (ex) {$X(M)=\{\Wcol,\Gray,\Gray\}$\\one white, two gray\\\textcolor{GoodGreen}{proper}};
\draw[-{Latex[length=2mm]},DeepBlue] (m)--(ex);
\end{tikzpicture}
\caption{The case with two child subtrees containing isolated stacks.  The long incoming helix uses $\Gray\Bcol\Bcol$, so it ends black rather than gray.  The two child helices can then both start gray without violating properness.}
\label{fig:two-child}
\end{figure}

Figure~\ref{fig:two-child} depicts the load-bearing $r=2$ allocation.  The table rows are assigned to the actual child helices in backbone order.  When $r=1$ or $r=2$, the isolated-stack count decomposition identifies exactly which children must receive gray; the remaining colors are assigned deterministically to the other children.

\subsection{Assignments at the root}

If the root has no paired child, the target is all-unpaired.  Choose $\xi=0$ and use the empty coloring.  The root exposure is empty, every unpaired node has level zero, and there are no gray pairs.

Assume the root has at least one paired child.  If it has an unpaired child, $m_{3\bullet}$-freeness gives paired degree at most two and its level forces $\xi=0$.  Assign the child first-pair-color row $\Bcol$ at degree one or $\Bcol,\Wcol$ at degree two.  If the root has no unpaired child and paired degree at most two, either naming of the two residues is mathematically valid; to fix all free choices deterministically, choose $\xi=0$, $\eta=1$, and use the same child first-pair-color rows.

If the root has no unpaired child and degree $d\in\{3,4\}$, set $\eta=0$ and $\xi=1$.  Let $r$ be the number of root-child subtrees that contain an isolated stack; $r\le2$.  Assign gray to each such child and fill the remaining positions as follows:
\begin{center}
\begin{tabular}{cccc}
\toprule
root degree & $r=0$ & $r=1$ & $r=2$ \\
\midrule
$3$ & $\Bcol,\Wcol,\Gray$ & $\Gray,\Bcol,\Wcol$ & $\Gray,\Gray,\Bcol$ \\
$4$ & $\Bcol,\Wcol,\Gray,\Gray$ & $\Gray,\Bcol,\Wcol,\Gray$ & $\Gray,\Gray,\Bcol,\Wcol$ \\
\bottomrule
\end{tabular}
\end{center}
Each child subtree containing an isolated stack supports $\eta\Gray$ through $Q$, and each remaining child is covered by $F$.

\begin{lemma}[Complete local color assignment]
\label{lem:complete-alloc}
At every terminal loop and at the root, the assignments above give each actual outgoing helix a feasible entry state and keep the actual exposed multiset proper.
\end{lemma}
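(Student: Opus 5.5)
The proof is a finite case check, organized by loop type and by the number $r$ of stack-bearing children. For each case we verify two things. First, every outgoing child receives an entry state in the set guaranteed for it by Theorem~\ref{thm:resource-invariant}: states in $F$ for children with $\shortcount=0$, and states in $Q$ for children with $\shortcount\ge1$. Second, the exposed multiset contains at most one $\Bcol$, at most one $\Wcol$ and at most two $\Gray$. Corollary~\ref{cor:resource-bounds} gives $r\le2$. Lemma~\ref{lem:endpoint-taxonomy} restricts terminal loops to types $E$, $L$ and $M$. The motif bounds fix the possible degrees. Together these make the case list exhaustive.

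The entry residue of each child is read off from Lemma~\ref{lem:forced-residues}.
\begin{itemize}
\item \textbf{Type $E$.} There are no children, and the exposure is $\{\inv(a)\}$, which is trivially proper.
\item \textbf{Type $L$.} The loop sits at $\xi$ and the helix closes with a non-gray color $a$, by the $\tau=\xi$ clause of the transfers (Lemma~\ref{lem:long-transfer} for long helices, Lemma~\ref{lem:two-pair} for isolated stacks). The single child therefore enters at $\xi$ with color $a$. The state $\xi a$ lies in $Q\subseteq F$, and the exposure $\{\inv(a),a\}$ contains one $\Bcol$ and one $\Wcol$.
\item \textbf{Type $M$.} The loop sits at $\eta$, so every child enters at $\eta$.
\begin{itemize}
\item If $r=0$, every listed color ($\Bcol$, $\Wcol$ or $\Gray$) gives a state in $F$.
\item If $r=1$, the stack-bearing child gets $\eta\Gray\in Q$ and the remaining children get states in $F$.
\item If $r=2$, both stack-bearing children get $\eta\Gray$. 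The third child, if there is one, gets $\eta a\in F$.
\end{itemize}
The properness counts are then read off row by row. With closing color $\Bcol$ and $d=3$, $r=0$, the exposure is $\{\Wcol,\Bcol,\Gray,\Gray\}$. With closing color $\Gray$, $r=1$, $d=3$, it is $\{\Gray,\Gray,\Bcol,\Wcol\}$. The other rows follow the same way, and black--white symmetry halves the work.
\end{itemize}

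The main obstacle is that the $M$ rows list a closing color $\Gray$ that the incoming helix must actually be able to deliver, and in the $r=2$ case the closing color must \emph{not} be gray. For $r\le1$ we check that every closing color in the tables is proper with the listed children. Since all three closing colors appear in the tables, whatever color the incoming transfer happens to end with is covered. For $r=2$ we invoke Corollary~\ref{cor:resource-bounds}(3) to conclude $|H|\ge3$. The incoming entry state is in $Q$, either directly or because $F$-states are used, and in either case Lemma~\ref{lem:strong-long} applies. One subtlety needs attention here: an entry state in $F\setminus Q$, namely $\eta\Bcol$ or $\eta\Wcol$, is not covered by Lemma~\ref{lem:strong-long}. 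In that case we would instead use $a^h$ or $a\,a\,\Gray\,a^{h-3}$ according to the parity of $h$. However, $\shortcount(H)\ge2$ forces the incoming subtree into $Q$ anyway, so this case does not arise.

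The closing color is then a non-gray $a$, and the exposures $\{\inv(a),\Gray,\Gray\}$ and $\{\inv(a),\Gray,\Gray,a\}$ are proper.

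At the root there is no closing incidence, so we only check child capacities, with entry residue $0$.
\begin{itemize}
\item \textbf{Root with an unpaired child, or degree at most $2$.} Here $\xi=0$, and the children receive $\xi\Bcol$ and $\xi\Wcol$. Both lie in $Q\subseteq F$ regardless of $r$.
\item \textbf{Root with degree $3$ or $4$ and no unpaired child.} Here $\eta=0$. The gray children get $\eta\Gray\in Q$, and the others get $\eta\Bcol$ or $\eta\Wcol\in F$; these are all children with $\shortcount=0$, because every stack-bearing child was assigned gray. Each row of the table contains at most one $\Bcol$, at most one $\Wcol$ and at most two $\Gray$, and $r\le2$ guarantees there are never more stack-bearing children than gray slots.

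The remaining point is consistency of the residue choice: $\xi=1$ must not clash with an unpaired root child, and there is none in this case by assumption.
\end{itemize}
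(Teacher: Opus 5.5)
Your proposal is correct and follows the paper's own proof: a finite case check over the loop types of Lemma~\ref{lem:endpoint-taxonomy}. For $r\le1$ the tables cover every closing color. For $r=2$, Corollary~\ref{cor:resource-bounds} and Lemma~\ref{lem:strong-long} supply a non-gray close, and the root rows are checked against the motif bounds and $r\le2$.

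One side remark is wrong but harmless. Your fallback word $a\,a\,\Gray\,a^{h-3}$ for an $\eta a$ entry closes gray when $h=3$, and in fact no valid length-3 word from $\eta a$ exits at $\eta$ with a non-gray close. Since stack-bearing subtrees are only ever requested $Q$-states, as you correctly note, that case never arises and your argument is unaffected.
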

\begin{proof}
Lemma~\ref{lem:endpoint-taxonomy} covers all terminal-loop types.  The $E$ and $L$ cases are direct.  At a multiloop, $r$ is $0$, $1$, or $2$.  The tables cover every possible closing color in the first two cases; Lemma~\ref{lem:strong-long} supplies a non-gray closing color in the third.  The color counts are explicit in every row.  The root cases follow from the motif degree bounds and the bound $r\le2$.
\end{proof}

\section{Global coloring by induction}
\label{sec:globalcolor}

The local colorings must be combined without allowing choices in one child subtree to affect another.  We use the following induction invariant, stated so that it remains valid after the rest of the target is colored.

\begin{definition}[Subtree coloring invariant]
For a maximal helix $H$, an entry residue $\varepsilon$, and a first color $c_1$, the induction hypothesis gives a coloring defined exactly on the paired nodes in the helix subtree of $H$.  For every total coloring that agrees with it on that subtree and has the specified entry residue, the following statements hold:
\begin{enumerate}
\item the first pair of $H$ has color $c_1$, and the selected color word is used on the actual pairs of $H$;
\item every paired exposure in the subtree is proper;
\item every gray paired node in the subtree has residue $\eta$;
\item every unpaired position in the subtree has residue $\xi$;
\item every outgoing child helix starts with the color assigned at its parent loop; and
\item the terminal residue and any required non-gray final color are satisfied.
\end{enumerate}
\end{definition}

The domains of different child subtrees are disjoint, and the pairs in the current helix are disjoint from all child subtrees.  Their union is the complete parent subtree.  A path from the root to a node in one child does not enter a sibling subtree, so colors in a sibling do not change that node's level.  The child subtrees interact only through the exposed multiset at their common parent, which is fixed before the recursive calls.

\begin{proof}[Proof of Theorem~\ref{thm:resource-invariant}]
Proceed by well-founded induction on the number of paired nodes in the helix subtree of $H$.  Fix $\xi$, set $\eta=1-\xi$, and assume the requested entry state lies in $F$ when $\shortcount(H)=0$, or in $Q$ when $\shortcount(H)>0$.

Let $v$ be the terminal pair of $H$.  By Lemma~\ref{lem:endpoint-taxonomy}, the loop at $v$ is of type $E$, $L$, or $M$.  Request exit residue $\xi$ for $E$ and $L$, and $\eta$ for $M$.

If $|H|=2$, use the exact table in Lemma~\ref{lem:two-pair}.  Such a helix belongs to a subtree containing an isolated stack, so the entry state is covered by $Q$.  If $|H|\ge3$, use Lemma~\ref{lem:long-transfer}, except when $v$ is a multiloop with two child subtrees containing isolated stacks.  In that case use Lemma~\ref{lem:strong-long} to reach $\eta$ with a non-gray final pair.

Apply Lemma~\ref{lem:complete-alloc} to assign the first-pair colors of the child helices.  Each child subtree contains fewer paired nodes than the parent.  Its isolated-stack count determines whether the required entry state comes from $F$ or $Q$, and the local assignment lemma proves that the selected color is allowed.  Apply the induction hypothesis to each child.

Finally combine the color word on $H$ with the colorings of the disjoint child subtrees.  Internal helix exposures are proper by the selected word, the terminal exposure is proper by the local assignment, and all child exposures are proper by induction.  The helix transfer and child hypotheses put every gray pair at $\eta$.  Unpaired positions occur only at loops of type $L$ and therefore have residue $\xi$.  The combined coloring satisfies the stated invariant under every extension outside the subtree.  This proves the $F$ guarantee when $\shortcount(H)=0$ and the $Q$ guarantee when $1\le\shortcount(H)\le2$.
\end{proof}

\begin{proof}[Proof of Theorem~\ref{thm:maincolor}]
If the root has degree zero, use the empty coloring.  Otherwise choose one of the root assignments above.  For each root-child helix, determine whether its subtree contains an isolated stack, take the corresponding state from $F$ or $Q$, and apply Theorem~\ref{thm:resource-invariant}.  The root-child paired subtrees are disjoint and cover all target pairs, so their colorings combine into a total coloring.

The root exposure is proper by construction.  Every nonroot exposure is proper by the induction invariant of its unique subtree.  Every gray pair has residue $\eta$.  Every nonroot unpaired position lies below a loop of type $L$ and has residue $\xi$; a root-unpaired position has level zero, and in that case the root construction chooses $\xi=0$.  Therefore the coloring is proper and has the uniform residue form of modulo-2 separation.
\end{proof}

\subsection{Algorithm and complexity}

The proof gives a deterministic recursive algorithm after the tree representation, maximal helices, child order, isolated-stack counts, root residue convention, symmetric non-gray choices, and gray-pair orientations have all been fixed as above.  One version is shown in Algorithm~\ref{alg:color}.

\begin{algorithm}[t]
\caption{Coloring a helix subtree}
\label{alg:color}
\DontPrintSemicolon
\KwIn{helix $H$; residues $\xi,\eta$; entry residue; first color; count $\shortcount(H)\le2$}
\KwOut{a coloring satisfying the subtree induction invariant}
classify the terminal loop as $E$, $L$, or $M$\;
request exit $\xi$ for $E/L$, and $\eta$ for $M$\;
identify the child subtrees that contain isolated stacks\;
\eIf{$H$ is long, the terminal is $M$, and two child subtrees contain isolated stacks}{
  choose the additional $\eta$-exit transfer with a non-gray final pair\;
}{
  choose the ordinary length-2 or long-helix transfer\;
}
assign first-pair colors to the actual child helices\;
\ForEach{outgoing child helix $K$}{
  select $F$ or $Q$ according to $\shortcount(K)$\;
  recursively color $K$ using its assigned first color\;
}
combine the color word on $H$ with the disjoint child colorings\;
return the coloring and the induction invariant\;
\end{algorithm}

With adjacency lists and helix membership stored explicitly, each paired node, unpaired node, helix, and child edge is processed $O(1)$ times.  Isolated-stack counts are computed bottom-up and colors are assigned top-down.  Thus the construction can be implemented in $O(n)$ time and $O(n)$ space.  The Lean development proves existence and correctness and uses classical choice for several finite selections; it is not presented as extracted executable code.

\section{From a coloring to an RNA sequence}
\label{sec:sequence}

Let $T$ have a proper coloring.  Construct a sequence $w$ top-down:
\begin{enumerate}
\item every target-unpaired position receives $\A$;
\item a black pair $[i,j]$ receives $(w_i,w_j)=(\G,\C)$;
\item a white pair receives $(w_i,w_j)=(\C,\G)$;
\item a gray pair receives either $(\A,\U)$ or $(\U,\A)$, with the following orientation rule:
  \begin{itemize}
  \item at the root or below a non-gray parent, gray siblings receive distinct left letters from $\{\A,\U\}$ in backbone order;
  \item below a gray parent, its at most one gray child copies the parent's left letter;
  \end{itemize}
  and every right endpoint is the complement of its left endpoint.
\end{enumerate}
Properness makes the rule well-defined: the root or a non-gray parent has at most two gray children, while a gray parent has at most one gray child.

\begin{lemma}[Local nucleotide distinctness]
\label{lem:local-distinct}
At the root, the left letters of paired children are pairwise distinct.  At every nonroot paired node, the left letters of paired children are pairwise distinct and differ from the parent's right letter.
\end{lemma}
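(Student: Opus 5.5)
The proof will be a direct case analysis on the colors of the children and, for nonroot nodes, the color of the parent, using only the properness capacities and the orientation rule. The left letter of a child is fixed by its color: $\G$ for black, $\C$ for white, and $\A$ or $\U$ for gray. The parent's right letter is likewise fixed: $\C$ for a black parent, $\G$ for a white parent, and the complement of the parent's left letter for a gray parent.

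\textbf{Step 1 (non-gray children).} Properness allows at most one black and at most one white paired child, so the letters $\G$ and $\C$ each occur at most once among the children's left letters. Since $\{\G,\C\}$ is disjoint from $\{\A,\U\}$, no non-gray child can share a left letter with a gray child. It therefore remains to compare gray children with one another and every child with the parent's right letter.

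\textbf{Step 2 (gray siblings).} At the root or below a non-gray parent, properness allows at most two gray children. The orientation rule gives them distinct left letters from $\{\A,\U\}$ in backbone order. Below a gray parent, the exposure already contains the gray incidence $\inv(\Gray)=\Gray$, so at most one gray child is allowed and distinctness among gray children is vacuous. The root case is now finished, since the root has no right letter.

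\textbf{Step 3 (comparison with the parent's right letter).} I will split by the parent's color, using the exposure $X(v)$.
\begin{itemize}
\item If the parent is black, its right letter is $\C$. A white child would put two $\Wcol$ incidences in $X(v)$, namely $\inv(\Bcol)=\Wcol$ and the child's $\Wcol$, which is forbidden. So no child has left letter $\C$; a black child has $\G$ and gray children have $\A$ or $\U$.
\item If the parent is white, the case is symmetric: the right letter is $\G$, and a black child is forbidden.
\item If the parent is gray with left letter $x\in\{\A,\U\}$, its right letter is $\comp(x)\neq x$. Its at most one gray child copies $x$, and any non-gray child has a G--C letter. Both differ from $\comp(x)$.
\end{itemize}

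The main point needing care is the gray-parent case. Here the copy rule is exactly what prevents a gray child from presenting $\comp(x)$. I also need to confirm that the rule is well-defined, so that the at-most-one-gray-child bound is actually available; this follows from the exposure containing $\Gray$ from the parent. Everything else is a finite check against the capacity rule.
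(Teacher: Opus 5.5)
Your proof is correct and takes essentially the paper's approach: a finite case split on the parent's color (black, white, gray, or root) that reads off the possible child left letters from the properness capacities and the gray orientation rule, then compares them with the parent's right letter. Your ordering, which checks sibling distinctness first and the comparison with the parent second, is only a reorganization of the same check.
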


\begin{proof}
For a black parent, properness permits at most one black child and no white child; possible child-left letters are therefore $\G,\A,\U$, while the parent right letter is $\C$.  The white case is symmetric.  For a gray parent with left letter $L\in\{\A,\U\}$, properness permits at most one black, one white, and one gray child.  Their possible left letters are $\G,\C,L$, while the parent right letter is $\comp(L)$.  At the root, the possible child-left letters are $\G,\C,\A,\U$, with distinct orientations assigned to gray siblings.
\end{proof}

The copying rule below a gray parent is essential.  For the nested all-gray target $(())$, copying yields $\A\A\U\U$, whereas the opposite orientation gives $\A\U\A\U$, which is also compatible with the disjoint structure $()()$ with the same pair count.

\section{Nucleotide-count bound}
\label{sec:inventory}

Let $g$ be the number of gray target pairs, $q$ the number of non-gray target pairs, and $u$ the number of target-unpaired positions.

\begin{lemma}[Nucleotide-count bound]
\label{lem:inventory}
The assigned sequence satisfies
\[
\#\U=g,
\qquad
\#\G=\#\C=q,
\qquad
\#\A=g+u.
\]
Every compatible noncrossing structure $S$ on the same sequence satisfies
\[
|S|\le g+q=|T|.
\]
If equality holds, every $\U$, $\C$, and $\G$ position is paired in $S$.
\end{lemma}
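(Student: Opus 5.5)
The plan is to prove the counts directly from the construction in Section~\ref{sec:sequence}, and then bound $|S|$ by counting letters.

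\textbf{Step 1: letter counts.} Every position of the sequence is either a target-unpaired position or an endpoint of exactly one target pair. The construction assigns:
\begin{itemize}
\item to each target-unpaired position, the letter $\A$;
\item to each black pair, one $\G$ and one $\C$ (as $(\G,\C)$);
\item to each white pair, one $\C$ and one $\G$ (as $(\C,\G)$);
\item to each gray pair, either $(\A,\U)$ or $(\U,\A)$, so exactly one $\A$ and one $\U$ whatever orientation rule is used.
\end{itemize}
Summing over these disjoint sources gives $\#\G=\#\C=q$, $\#\U=g$ and $\#\A=g+u$. Since the target pairs are exactly the gray and non-gray pairs, $|T|=g+q$.

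\textbf{Step 2: the bound.} Let $S$ be any compatible structure on $w$. Each arc of $S$ joins complementary letters, so it is an $\A$--$\U$ or a $\C$--$\G$ pair in some order. In either case the arc contains exactly one letter from $\{\U,\C\}$: an A--U pair contains one $\U$, and a C--G pair contains one $\C$. The arcs of $S$ are pairwise disjoint as sets of positions, so the map sending each arc to its unique $\U$ or $\C$ endpoint is injective. Therefore
\[
|S|\le \#\U+\#\C=g+q=|T|.
\]
Noncrossingness is not needed for this step.

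\textbf{Step 3: the equality case.} If $|S|=g+q$, the injection of Step~2 is a bijection onto all $\U$ and $\C$ positions, so every $\U$ and every $\C$ is paired in $S$. Each $\C$ is paired with a $\G$, and distinct arcs use distinct $\G$'s. This gives $q=\#\C$ distinct paired $\G$ positions, which is all of them because $\#\G=q$. So every $\G$ is paired as well.

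There is no real obstacle here. The only point needing care is that gray pairs contribute exactly one $\U$ regardless of orientation, so the counts do not depend on the orientation rule. The equality case follows from the fact that an injection between finite sets of equal size is a bijection, applied once to the $\U$/$\C$ positions and once to the $\C\to\G$ partner map.
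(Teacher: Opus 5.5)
Your proposal is correct and follows essentially the same argument as the paper: you count letters per target-pair color and unpaired position, inject the arcs of $S$ into the $\U$/$\C$ positions to get $|S|\le g+q$, and in the equality case use $\#\G=\#\C$ to conclude that every $\G$ is also paired. Your added observations, that the counts do not depend on the gray orientation rule and that noncrossingness is not needed for the bound, are accurate and consistent with the paper's proof.
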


\begin{proof}
Each gray target pair contributes one $\A$ and one $\U$; each black or white target pair contributes one $\G$ and one $\C$; each target-unpaired position contributes one $\A$.  Every A--U pair in $S$ consumes a distinct $\U$, and every C--G pair consumes a distinct $\C$.  Hence
\[
|S|\le \#\U+\#\C=g+q=|T|.
\]
If equality holds, all $\U$ and all $\C$ positions are used.  Every used $\C$ has a distinct $\G$ partner, and $\#\G=\#\C$, so every $\G$ is used as well.
\end{proof}

Lemma~\ref{lem:inventory} proves that the target has the maximum possible number of base pairs.  It does not yet exclude a distinct fold with the same number of pairs.

\section{Uniqueness on the paired restriction}
\label{sec:saturated}

This section proves a general uniqueness theorem for the paired restriction of a target, using the local-distinctness condition of Lemma~\ref{lem:local-distinct}.  It is independent of the short-helix count.

\subsection{Saturable sequences and atomic saturable designs}

A finite nucleotide word is \emph{saturable} if it admits a compatible noncrossing perfect matching.  The empty word is saturable.  A nonempty saturable sequence is \emph{atomic} if no nonempty proper prefix is saturable.  Following Hale\v{s} et al., an \emph{atomic saturable design} is an atomic saturable sequence together with a saturated target for which it is the unique compatible noncrossing perfect matching.

\begin{lemma}[Adjacent cancellation and suffix cancellation]
\label{lem:suffix-cancel}
A word is saturable if and only if it can be reduced to the empty word by repeatedly deleting adjacent complementary letters.  Consequently, if $xy$ and $x$ are saturable, then $y$ is saturable.
\end{lemma}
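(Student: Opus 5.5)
The plan is to prove the two directions separately and then derive the consequence by induction on word length. Call a word \emph{reducible} if repeatedly deleting adjacent complementary letters can reach the empty word.

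\textbf{Reducible implies saturable.} Induct on length. If $w=x\,a\,\comp(a)\,y$ and $xy$ reduces to empty, then by induction $xy$ has a compatible noncrossing perfect matching $M$. Reinsert the two deleted letters and shift indices of $M$ accordingly. Add the arc joining the two adjacent positions, which is allowed because $\theta=0$. This arc encloses no position, so it crosses no arc of $M$. The result is a compatible noncrossing perfect matching of $w$.

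\textbf{Saturable implies reducible.} Take a nonempty saturable word and a compatible noncrossing perfect matching. Among its arcs choose one $(i,j)$ with $j-i$ minimal. Suppose some position $k$ satisfies $i<k<j$. Its partner lies strictly inside $(i,j)$ by noncrossingness, which gives a shorter arc and contradicts minimality. Hence $j=i+1$, so $w_i w_j$ is an adjacent complementary pair. Deleting it leaves a word whose remaining arcs, reindexed, still form a compatible noncrossing perfect matching. Induction on length finishes this direction.

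\textbf{Suffix cancellation.} Since $x$ is saturable, the first direction gives a reduction sequence taking $x$ to the empty word. Apply the same deletions inside $xy$; each deleted pair is still adjacent there, because all deletions occur within the prefix $x$. So $xy$ reduces to $y$. Adjacent-pair deletion is a rewriting system for the free-group-like cancellation $a\,\comp(a)\to\varepsilon$ with an involutive complement, and it is confluent. Its local critical overlaps have the form $a\,\comp(a)\,a$, and both ways of reducing them yield $a$. By Newman's lemma, together with termination because length decreases, every word has a unique normal form. Since $xy$ reduces to the empty word and also to $y$, the normal form of $y$ is empty. Hence $y$ is reducible, and therefore saturable.

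\textbf{Main obstacle.} The only nontrivial point is confluence; the rest is routine. I would check it directly: when two reducible adjacent pairs overlap, they share a letter in a pattern $a\,\comp(a)\,a$ (or $\comp(a)\,a\,\comp(a)$), and either deletion leaves the same single letter. Disjoint redexes commute trivially.
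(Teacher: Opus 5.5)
Your proof is correct. The two directions of the equivalence are the paper's own arguments: a minimum-span arc must join adjacent positions, and reinserted adjacent pairs cannot cross anything. You differ from the paper only in the suffix cancellation.

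The paper defines an explicit normal form $R(z)$ by a left-to-right stack scan that cancels each letter against a complementary stack top. It asserts that every adjacent complementary deletion preserves $R$. The identity $R(xy)=R(R(x)y)$, immediate from how the scan works, then gives $R(y)=R(xy)=\varnothing$ in one line.

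You instead prove uniqueness of normal forms abstractly: termination plus local confluence, where the only overlap $a\,\comp(a)\,a$ resolves to the same letter, and then Newman's lemma. You replace the stack identity by lifting the reduction of $x$ to a reduction $xy\to^{*}y$.

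Both routes rest on the same fact, placed differently. The paper's unproved claim that deletion preserves $R$ is exactly where your overlap check is needed: when $a$ cancels a preceding $\comp(a)$ on the stack, the following $\comp(a)$ is pushed back because the stack is reduced.

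Your version makes that check explicit and needs no auxiliary algorithm, but it invokes an external rewriting theorem. That theorem could be replaced by a direct induction here, since overlapping redexes give identical results and disjoint ones commute. The paper's version yields a concrete linear-time normal-form function. That function suits formalization and also decides saturability.

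One small slip: when you write ``the first direction gives a reduction sequence taking $x$ to the empty word,'' you need the saturable-implies-reducible direction, which is the second one in your write-up.
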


\begin{proof}
Suppose a word has a compatible noncrossing perfect matching.  Choose a matched pair of minimum span.  If its endpoints were not adjacent, the position immediately inside the left endpoint would have a mate strictly inside the chosen pair, producing a smaller span.  Therefore the chosen endpoints are adjacent and complementary.  Delete them, compress the remaining positions, and induct.

Conversely, reverse a sequence of adjacent complementary deletions.  Reinsert each deleted adjacent pair and match its two positions.  The new pair is disjoint from or nested within every previously restored pair, so the resulting perfect matching is compatible and noncrossing.

For the suffix statement, define the reduced stack normal form $R(z)$ by scanning left to right and cancelling the next letter whenever it complements the stack top.  Adjacent complementary deletion preserves $R$, and a word reduces to the empty word exactly when $R(z)$ is empty.  The stack identity
\[
R(xy)=R(R(x)y)
\]
then gives $R(y)=R(xy)=\varnothing$ whenever $R(x)=\varnothing$.  Thus $y$ is saturable.
\end{proof}

Equivalently, one may encode $\A,\U$ as inverse generators and $\C,\G$ as another inverse pair in a free group; suffix cancellation is then ordinary group cancellation.  The stack formulation above is sufficient for the proof.

\begin{lemma}[Atomic endpoint characterization]
\label{lem:atomic-endpoint}
A nonempty saturable word $z$ is atomic if and only if every compatible noncrossing perfect matching of $z$ contains the outer pair joining the first and last positions.
\end{lemma}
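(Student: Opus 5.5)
The plan is to prove the two directions separately. Throughout, we use Lemma~\ref{lem:suffix-cancel}, which says that saturability of a word is the same as reducibility by adjacent complementary deletions. We also use the consequence that saturability is closed under taking complements of saturable prefixes: if $xy$ and $x$ are saturable, so is $y$.

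\emph{Atomic implies the outer pair is forced.}  Suppose $z=z_1\cdots z_n$ is atomic, and let $S$ be any compatible noncrossing perfect matching of $z$. Let $z_1$ be matched to $z_k$ in $S$. By noncrossingness, no pair of $S$ joins a position in $[1,k]$ to a position outside $[1,k]$. So $S$ restricted to $[1,k]$ is a compatible noncrossing perfect matching of the prefix $z_1\cdots z_k$, which is therefore saturable. The prefix is nonempty, so atomicity forces $k=n$, and $S$ contains the outer pair $(1,n)$. This direction is immediate.

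\emph{Outer pair forced implies atomic.}  We argue by contrapositive. Suppose $z$ is saturable but not atomic, so some nonempty proper prefix $x=z_1\cdots z_k$ with $k<n$ is saturable. Write $z=xy$. Since $z$ and $x$ are saturable, Lemma~\ref{lem:suffix-cancel} gives that $y$ is saturable. Choose perfect matchings of $x$ and $y$ and take their union. It is a compatible perfect matching of $z$, and it is noncrossing because the two blocks occupy disjoint consecutive intervals. In this matching, position $1$ is paired inside $[1,k]$ with $k<n$, so the matching omits the outer pair $(1,n)$. Hence not every perfect matching contains the outer pair.

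Neither direction involves a real obstacle. The only nontrivial ingredient is the suffix cancellation needed in the converse, and Lemma~\ref{lem:suffix-cancel} already provides it. Two details remain to check: that restricting a noncrossing matching to the interval $[1,k]$ is again a perfect matching there, and that the concatenation of matchings on adjacent intervals stays noncrossing. Both are straightforward consequences of the laminar structure of noncrossing arcs.
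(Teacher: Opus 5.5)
Your proposal is correct and follows the paper's proof essentially step for step. In the forward direction, the partner of the first position encloses a nonempty saturable prefix, which must be all of $z$ when $z$ is atomic. In the converse, you use suffix cancellation (Lemma~\ref{lem:suffix-cancel}) and concatenate the two perfect matchings to obtain one that omits the outer pair.
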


\begin{proof}
If $z$ is atomic and the first position is paired to a position before the end, noncrossingness and saturation make the enclosed prefix a nonempty proper saturable prefix.  Conversely, if a nonempty proper prefix $p$ is saturable, write $z=ps$.  Lemma~\ref{lem:suffix-cancel} makes $s$ saturable, and concatenating perfect matchings of $p$ and $s$ yields a perfect matching of $z$ omitting the outer pair.
\end{proof}

\subsection{Concatenation and wrapping}

\begin{lemma}[Concatenating atomic saturable designs]
\label{lem:concat-atomic}
Let $z_1,\dots,z_t$ be atomic saturable designs, with $t\ge1$, and suppose their first letters are pairwise distinct.  Then the concatenation $z_1\cdots z_t$ has the concatenation of the target matchings as its unique compatible noncrossing perfect matching.
\end{lemma}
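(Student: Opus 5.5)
The plan is to show that the target matching $M=M_1\cup\dots\cup M_t$, where $M_k$ is the unique target matching of $z_k$, is the only compatible noncrossing perfect matching. First, $M$ is itself a compatible noncrossing perfect matching: each $M_k$ is one on its own block, and matchings on disjoint consecutive blocks neither cross nor share positions. For uniqueness I will induct on $t$, and the key claim is the following. In any compatible noncrossing perfect matching $S$ of $z_1\cdots z_t$, the first position of $z_1$ is paired with the last position of $z_1$. Once that holds, the first and last positions of $z_1$ bound the block, so by noncrossingness every position inside $z_1$ is matched inside it. Hence $S$ restricts to a perfect matching of $z_1$, and also of the suffix $z_2\cdots z_t$. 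Uniqueness of $M_1$ fixes the first restriction, and the induction hypothesis fixes the second; the distinct-first-letter hypothesis passes to the subfamily $z_2,\dots,z_t$. The base case $t=1$ is just the definition of an atomic saturable design.

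To prove the key claim, let position $1$ be paired in $S$ with position $p$. By noncrossingness the interval $[1,p]$ is a saturable prefix, and by Lemma~\ref{lem:suffix-cancel} the remaining suffix is saturable as well. Write $|z_1|=m_1$.

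If $p<m_1$, then $[1,p]$ is a nonempty proper saturable prefix of $z_1$. This contradicts atomicity of $z_1$.

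If $p=m_1$, we are done.

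If $p>m_1$, write the prefix $[1,p]$ as $z_1 z_2\cdots z_{k-1}y$, where $y$ is a nonempty prefix of $z_k$ for some $k\ge2$. I intend to use the reduced stack normal form $R$ from Lemma~\ref{lem:suffix-cancel}. Since $z_1,\dots,z_{k-1}$ are saturable, $R$ of this prefix equals $R(y)$. Because the whole prefix is saturable, $R(y)$ is empty, so $y$ is saturable. Atomicity of $z_k$ then forces $y=z_k$, which means position $p$ is the last letter of $z_k$. On the other hand, position $p$ is paired with position $1$, so its letter is $\comp$ of the first letter of $z_1$. Now consider $z_k$ itself: it is nonempty and saturable, and by Lemma~\ref{lem:atomic-endpoint} its first and last letters are complementary. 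So the last letter of $z_k$ is also $\comp$ of the first letter of $z_k$. Since $\comp$ is an involution, the first letters of $z_1$ and $z_k$ coincide, contradicting the hypothesis that first letters are pairwise distinct.

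The main obstacle is this last case, ruling out a pairing of position $1$ that reaches into a later block. It requires combining suffix cancellation, to isolate a saturable prefix of $z_k$, with the atomic endpoint characterization, to identify the letter at position $p$. I will check carefully that every $z_j$ for $j<k$ lies entirely inside $[1,p]$, so that the normal-form reduction is legitimate.
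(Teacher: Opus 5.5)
Your proof is correct and takes essentially the paper's route. The paper picks the leftmost block whose first position is not paired to its own last position, with earlier blocks isolated; this matches your induction on $t$ that peels off $z_1$, and both rule out the same cases using atomicity, suffix cancellation, and the forced complementarity of the first and last letters of a later atomic block (the paper treats "pairs to the last position of $z_j$" and "pairs inside $z_j$" as two cases, which you merge by first showing the partner must be the last position of $z_k$).
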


\begin{proof}
Let $P$ be a compatible noncrossing perfect matching of the concatenation.  If $P$ contains the outer pair of every block, those pairs isolate the blocks and uniqueness holds blockwise.

Otherwise choose the leftmost block $z_i$ whose first position is not paired to its own last position.  Earlier blocks are isolated.  Let the first position of $z_i$ pair to a position in block $z_j$, $j\ge i$.  If it pairs to the last position of $z_j$, then $j>i$ and unique complementation forces the first letters of $z_i$ and $z_j$ to be equal, contradicting pairwise distinctness.  If it pairs inside $z_j$, the enclosed word
\[
z_i z_{i+1}\cdots z_{j-1} z_j[1,k]
\]
is saturable.  Repeated suffix cancellation removes the complete saturable blocks $z_i,\dots,z_{j-1}$, leaving a nonempty proper saturable prefix of the atomic word $z_j$, again a contradiction.
\end{proof}

\begin{lemma}[Wrapping atomic saturable designs]
\label{lem:wrap-atomic}
Let $z_1,\dots,z_t$ be atomic saturable designs, possibly with $t=0$.  Let $a,b$ be complementary letters.  Suppose the first letters of the $z_i$ are pairwise distinct and none equals $b$.  Then
\[
W=a z_1\cdots z_t b
\]
is an atomic saturable design for the target consisting of the outer pair $(a,b)$ around the concatenated child targets.
\end{lemma}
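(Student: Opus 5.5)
The plan is to prove the three required properties of $W=a z_1\cdots z_t b$ in turn: it is saturable, it has a unique compatible noncrossing perfect matching (namely the target), and it is atomic.

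\textbf{Saturability.} The target matching is compatible and noncrossing. Its pairs are the outer pair $(a,b)$, which is complementary by hypothesis, together with the target matchings of the blocks, which are disjoint intervals nested inside the outer pair. So $W$ is saturable, and its target matching is a compatible noncrossing perfect matching.

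\textbf{Atomicity.} Use Lemma~\ref{lem:atomic-endpoint}. It suffices to show that every compatible noncrossing perfect matching $P$ of $W$ pairs the first position with the last. Suppose instead that the first letter $a$ pairs with some position $p$ before the end. Since the only complement of $a$ is $b$, the letter at $p$ is $b$.
\begin{itemize}
\item If $p$ is the first position of some block $z_j$, that block begins with $b$, contradicting the hypothesis.
\item Otherwise $p$ lies strictly inside some block $z_j$, say at offset $k\ge2$. Noncrossingness and saturation make the enclosed word
\[
z_1\cdots z_{j-1}\, z_j[1,k-1]
\]
saturable. Repeated applications of Lemma~\ref{lem:suffix-cancel} strip off the saturable blocks $z_1,\dots,z_{j-1}$. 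What remains, $z_j[1,k-1]$, is nonempty (since $k\ge2$) and is a proper prefix of $z_j$ (it omits position $k$). This contradicts the atomicity of $z_j$.
\end{itemize}
Hence the outer pair belongs to every perfect matching, and $W$ is atomic. The case $t=0$ is immediate: $W=ab$ and the only matching is the outer pair.

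\textbf{Uniqueness.} Once the outer pair is forced, the rest of $P$ is a compatible noncrossing perfect matching of $z_1\cdots z_t$. Lemma~\ref{lem:concat-atomic} applies because the first letters of the blocks are pairwise distinct, so the inner matching is the concatenation of the target matchings when $t\ge1$. Therefore the target is the unique perfect matching of $W$.

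\textbf{Main obstacle.} The delicate step is the interior case above, because both indices must be handled carefully. The enclosed word ends at position $k-1$ of $z_j$, not at $k$. It is nonempty only because $k\ge 2$; the case $k=1$ is exactly the excluded first-letter-$b$ case. The suffix-cancellation step also needs care: it removes whole saturable blocks from the left, so the argument should either induct on $j$ or apply Lemma~\ref{lem:suffix-cancel} once with $x=z_1\cdots z_{j-1}$, which is saturable by concatenating the blocks' matchings.
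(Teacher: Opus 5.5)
Your proof is correct and shares its core step with the paper's: a saturable word $z_1\cdots z_{j-1}z_j[1,k-1]$ is reduced by suffix cancellation to a prefix of $z_j$. Atomicity of $z_j$, together with the hypothesis that no block begins with $b$, then gives the contradiction.

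The routes differ in which side of Lemma~\ref{lem:atomic-endpoint} is proved directly.

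\begin{itemize}
\item The paper proves atomicity from the definition. It takes a shortest nonempty proper saturable prefix of $W$. A parity count shows this prefix ends strictly inside some block, since each prefix $a z_1\cdots z_i$ has odd length. Minimality then forces $a$ to pair with the prefix's final letter. Only afterwards does the paper use the ``only if'' direction of Lemma~\ref{lem:atomic-endpoint} to force the outer pair.
\item You start from an arbitrary perfect matching of $W$ and the partner $p$ of its first position. This shows directly that the outer pair is forced, and you then get atomicity from the ``if'' direction of the same lemma.
\end{itemize}

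Your version needs neither the parity observation nor the shortest-prefix argument. The enclosed word is fixed by the single position $p$, and a partner at the last position of a block is already excluded by that block's atomicity. The paper's version checks the prefix definition of atomicity directly and uses only the elementary direction of the endpoint characterization, at the cost of those two extra steps. Both finish identically with Lemma~\ref{lem:concat-atomic}.
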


\begin{proof}
For $t=0$, the two-letter complementary word is immediate.  Assume $t\ge1$.  The displayed target proves saturability.  Suppose $W$ has a nonempty proper saturable prefix and choose one of minimum length.  Its length cannot be one plus the total length of complete initial child blocks, because those lengths are odd while a perfectly matched word has even length.  Hence the prefix ends strictly inside some $z_j$.

In a perfect matching of this shortest prefix, the initial letter $a$ must pair to the final position; otherwise the prefix ending at its partner would be shorter and saturable.  The final letter is therefore $b$.  Removing this outer pair leaves
\[
z_1\cdots z_{j-1}z_j[1,k-1]
\]
saturable.  Repeated suffix cancellation removes the complete blocks, so $z_j[1,k-1]$ is saturable.  Atomicity forces $k-1=0$, making the first letter of $z_j$ equal to $b$, a contradiction.  Therefore $W$ is atomic.  Lemma~\ref{lem:atomic-endpoint} forces the outer pair in every perfect matching, and Lemma~\ref{lem:concat-atomic} gives uniqueness inside it.
\end{proof}

\begin{theorem}[Saturated local-distinctness theorem]
\label{thm:saturated-unique}
Let $R$ be a saturated target and $z$ a compatible sequence.  Suppose that:
\begin{enumerate}
\item at the root, the left letters of paired children are pairwise distinct; and
\item at every nonroot pair $[i,j]$, the left letters of paired children are pairwise distinct and none equals $z_j$.
\end{enumerate}
Then $R$ is the unique compatible noncrossing perfect matching of $z$.
\end{theorem}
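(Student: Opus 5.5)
We prove Theorem~\ref{thm:saturated-unique} by strong induction on the tree, showing that every nonroot paired node of $R$ carries an atomic saturable design.  For a paired node $v=[i,j]$, let $z_v=z[i..j]$ and let $R_v$ be the restriction of $R$ to pairs inside $[i,j]$, including $(i,j)$ itself.  Since $R$ is saturated, every node has only paired children, and these children tile $(i,j)$ in backbone order.  We claim that $(z_v,R_v)$ is an atomic saturable design for every nonroot $v$.  The root statement will then be one application of Lemma~\ref{lem:concat-atomic}.

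The induction runs on the size of $[i,j]$.  Let the paired children of $v$ be $u_1,\dots,u_t$ in backbone order, with $t\ge0$.  Because they cover every interior position, we have
\[
z_v=z_i\,z_{u_1}\cdots z_{u_t}\,z_j ,
\]
where $z_i,z_j$ are complementary because $R$ is compatible with $z$.  By the induction hypothesis each $(z_{u_k},R_{u_k})$ is an atomic saturable design.  The first letter of $z_{u_k}$ is the left letter of child $u_k$.  Hypothesis~2 says these first letters are pairwise distinct and none equals $z_j$.  Lemma~\ref{lem:wrap-atomic} then applies directly with $a=z_i$ and $b=z_j$, and gives that $z_v$ is an atomic saturable design for $R_v$.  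The base case $t=0$ is the two-letter case of the same lemma.

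At the root, a saturated $R$ makes the paired root children $u_1,\dots,u_t$ cover $[1,n]$, so $z=z_{u_1}\cdots z_{u_t}$.  If $t=0$, the word is empty and the empty matching is the unique perfect matching.  Otherwise each block is an atomic saturable design by the claim, and hypothesis~1 makes their first letters pairwise distinct.  Lemma~\ref{lem:concat-atomic} then yields that the concatenated target matching, namely $R$, is the unique compatible noncrossing perfect matching of $z$.

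The only step needing care is the decomposition of $z_v$.  It relies on saturation in two ways.  First, it guarantees there are no unpaired singleton children, so the blocks are contiguous with no gaps.  Second, the induction must be applied to exactly the intervals of the children, so that the first letter of each block is literally the left letter in the hypotheses.  Once this identification is made, the wrapping and concatenation lemmas carry the whole argument, and no further obstacle is expected.
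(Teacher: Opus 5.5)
Your proposal is correct and follows essentially the same route as the paper: an upward induction through the interval tree, using Lemma~\ref{lem:wrap-atomic} at every nonroot pair (the $t=0$ case covers leaf pairs) and Lemma~\ref{lem:concat-atomic} at the root. You also treat the empty root case $t=0$ separately, which the paper leaves implicit even though Lemma~\ref{lem:concat-atomic} assumes $t\ge1$.
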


\begin{proof}
Induct upward through the interval tree.  A leaf pair is Lemma~\ref{lem:wrap-atomic} with $t=0$.  At a nonroot pair $[i,j]$, saturation makes the interior a concatenation of paired-child subwords.  By induction these are atomic saturable designs.  Compatibility makes $z_i$ and $z_j$ complementary, and the local hypotheses are precisely those of Lemma~\ref{lem:wrap-atomic}; hence the whole interval subword is an atomic saturable design.  At the root, the complete word is a concatenation of root-child atomic saturable designs with pairwise distinct first letters, so Lemma~\ref{lem:concat-atomic} gives uniqueness.
\end{proof}

Lemma~\ref{lem:local-distinct} shows that the sequence produced from every proper coloring satisfies the hypotheses of Theorem~\ref{thm:saturated-unique} on its paired restriction.

\section{Excluding alternative optimal folds}
\label{sec:notie}

It remains to show that a different compatible structure with the same number of pairs cannot use a position that is unpaired in the target.

\subsection{G--C prefix balance and levels}

For the constructed sequence $w$, define the inclusive prefix balance
\[
\Lambda(k)=\#\G\text{ in }w[1,k]-\#\C\text{ in }w[1,k],
\qquad \Lambda(0)=0.
\]

\begin{lemma}[Levels equal prefix balances]
\label{lem:level-balance}
For every target pair $v=[i,j]$,
\[
\Lambda(i)=\level(v).
\]
For every target-unpaired position $k$, $\Lambda(k)$ equals the level of its unpaired interval-tree node.  If $v=[i,j]$ is gray, then
\[
\Lambda(j)=\Lambda(i)=\level(v).
\]
\end{lemma}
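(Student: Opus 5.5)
The approach is to prove a structural description of prefixes of the target, by induction on the position $k$ or, equivalently, top-down along the interval tree, and then read off the balance. The key observation is that the sequence construction places $\G$ or $\C$ only at endpoints of non-gray target pairs. A black pair $[i,j]$ contributes $\G$ at $i$ and $\C$ at $j$, so $+1$ at its left endpoint and $-1$ at its right endpoint. A white pair contributes $-1$ at $i$ and $+1$ at $j$. Gray pairs and target-unpaired positions contribute $0$ at both ends. Thus $\Lambda(k)$ is the sum, over all non-gray target pairs $[i',j']$ with $i'\le k$, of $\delta(c)$ if $j'>k$ and of $0$ if $j'\le k$. In words, a pair contributes exactly when it is open at $k$: it has been entered but not yet closed.

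For a position $k$, the open target pairs $[i',j']$ with $i'\le k<j'$, together with the pair whose left endpoint is $k$ itself, are by noncrossingness exactly the nodes on the root path to the interval-tree node containing $k$. Precisely, I would establish three cases.
\begin{enumerate}
\item If $k=i$ is the left endpoint of $v$, the pairs open at $k$ (counting $v$) are exactly the ancestors of $v$ together with $v$.
\item If $k$ is target-unpaired, its open pairs are exactly the paired ancestors of the singleton, that is, the root path to its enclosing loop node.
\item If $k=j$ is a right endpoint, $v$ itself is closed, so the open pairs are the strict ancestors of $v$.
\end{enumerate}
Laminarity handles the case analysis. A pair $[i',j']$ with $i'\le k<j'$ contains position $k$, so it is an ancestor-or-self of the node at $k$. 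Conversely, every proper ancestor interval strictly contains $k$, so it is open. Pairs closed before $k$ contribute $+1-1=0$ or $-1+1=0$.

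Summing $\delta$ over these open pairs gives the three claims. In case 1 we get $\Lambda(i)=\sum_{u\in r\leadsto v}\delta(c(u))=\level(v)$ by the definition of the inclusive level. In case 2 we get the level of the enclosing loop, which by definition is the unpaired node's level (level $0$ at the root). In case 3 we get $\Lambda(j)=\level(v)-\delta(c(v))=\entry(v)$. When $v$ is gray, $\delta(c(v))=0$, so $\Lambda(j)=\Lambda(i)=\level(v)$.

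The main obstacle is making the bookkeeping for the closing letters rigorous. One must check that a non-gray pair's right endpoint exactly cancels its left endpoint in the prefix sum. This holds because black is $(\G,\C)$ and white is $(\C,\G)$. One must also check that gray pairs and unpaired $\A$'s never contribute. I would organize this step as a clean induction on $k$: $\Lambda(k)-\Lambda(k-1)$ equals $+\delta$ at a left endpoint, $-\delta$ at a right endpoint, and $0$ otherwise. The laminar-family argument then identifies the open set, and the identification of that set with the root path is the one step needing care.
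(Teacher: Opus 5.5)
Your proof is correct and follows essentially the same route as the paper. Closed pairs (equivalently, completed subtrees) have zero G-minus-C weight, so $\Lambda(k)$ is the sum of $\delta$ over the open pairs, and noncrossingness identifies those with the root path of the node at $k$. Your pair-by-pair telescoping and explicit identification of the open set are a more detailed version of the paper's argument; they also give the slightly stronger fact $\Lambda(j)=\entry(v)$ for every pair, of which the gray case is the instance $\delta(\Gray)=0$.
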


\begin{proof}
Every complete target subtree contains equal numbers of $\G$ and $\C$: a black or white pair contributes one of each, while gray pairs and unpaired $\A$ positions contribute neither.  Completed earlier siblings therefore contribute zero to the running balance.  At a paired left endpoint, the only uncancelled nonzero contributions are the left endpoints on the open ancestor path.  Black contributes $+1$, white $-1$, and gray $0$, exactly reproducing the inclusive level.  An unpaired $\A$ changes nothing and inherits the enclosing level.  A gray pair and its complete interior contribute zero, so its right and left endpoints have the same balance.
\end{proof}

\begin{lemma}[Level-imbalance obstruction]
\label{lem:imbalance}
Let $S$ be a compatible noncrossing structure of $w$.  If $S$ contains an A--U pair between positions $a<b$ with $\Lambda(a)\ne\Lambda(b)$, then $S$ leaves at least one $\G$ or $\C$ position unpaired.
\end{lemma}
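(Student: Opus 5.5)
We argue by contradiction, using the nucleotide-count picture of Lemma~\ref{lem:inventory} restricted to the interval strictly between the two endpoints. Suppose $S$ pairs every $\G$ and every $\C$ position of $w$, and contains an A--U pair $(a,b)$ with $a<b$. Because $S$ is noncrossing, every pair of $S$ with one endpoint in the open interval $(a,b)$ has its other endpoint in $(a,b)$ as well. The interval $(a,b)$ therefore carries a sub-structure of $S$ that is closed under partners.

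By hypothesis every $\G$ and $\C$ inside $(a,b)$ is paired in $S$, so it is paired with a complementary letter in the same interval. The pairs of $S$ inside $(a,b)$ that involve $\G$ or $\C$ are exactly the C--G pairs. Each such pair contains one $\G$ and one $\C$, so the numbers of $\G$ and $\C$ in $w[a+1,b-1]$ are equal.

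The endpoints $a$ and $b$ carry $\A$ or $\U$, so they contribute nothing to $\Lambda$. Hence
\[
\Lambda(b)-\Lambda(a)=\#\G(w[a+1,b])-\#\C(w[a+1,b])=\#\G(w[a+1,b-1])-\#\C(w[a+1,b-1])=0.
\]
This contradicts $\Lambda(a)\ne\Lambda(b)$. So some $\G$ or $\C$ position must be left unpaired by $S$.

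The only step needing care is that the open interval is closed under $S$-partners. This is exactly the noncrossing condition, since a partner outside $(a,b)$ would produce an arc crossing $(a,b)$. The letters at $a$ and $b$ are immaterial beyond not being $\G$ or $\C$, and $\theta=0$ causes no issue: if $b=a+1$, the interval is empty and $\Lambda(b)=\Lambda(a)$ holds automatically.
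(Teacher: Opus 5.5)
Your proof is correct and follows the paper's argument step for step. It uses noncrossing closure of the open interval $(a,b)$, and compatibility forcing a bijection between the interior $\G$'s and $\C$'s when all are paired. It also uses the observation that the $\A/\U$ endpoints contribute nothing, so $\Lambda(b)-\Lambda(a)$ equals the interior $\G$-minus-$\C$ count; the only difference is that you phrase it as a contradiction, while the paper states the nonzero imbalance first and concludes directly.
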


\begin{proof}
No position strictly between $a$ and $b$ can pair outside that interval without crossing $(a,b)$.  Since the endpoints have zero G-minus-C weight,
\[
\#\G\text{ in }(a,b)-\#\C\text{ in }(a,b)
=
\Lambda(b)-\Lambda(a)\ne0.
\]
If every interior $\G$ and $\C$ were paired, compatibility and noncrossing closure inside the arc would pair them bijectively, forcing equal counts.  Thus some $\G$ or $\C$ remains unpaired.
\end{proof}

\subsection{Uniqueness among optimal folds}

\begin{theorem}[No distinct optimal structure]
\label{thm:no-tie}
Let $T$ have a proper separated coloring and let $w$ be the sequence assigned in Section~\ref{sec:sequence}.  If a compatible noncrossing structure $S$ satisfies $|S|=|T|$, then $S=T$.
\end{theorem}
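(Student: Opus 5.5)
Take a compatible noncrossing $S$ with $|S|=|T|$. The plan is to show that $S$ pairs no target-unpaired position, and then to reduce to the paired restriction, where Theorem~\ref{thm:saturated-unique} applies.

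\emph{Step 1 (equality case).} By Lemma~\ref{lem:inventory}, equality forces every $\U$, $\C$ and $\G$ position to be paired in $S$. Every C--G pair of $S$ uses one $\C$ and one $\G$, so $S$ has exactly $q$ C--G pairs. Likewise $S$ has exactly $g$ A--U pairs, each using one $\U$ and one $\A$.

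\emph{Step 2 (no target-unpaired position is used).} Suppose $S$ pairs a target-unpaired position $k$. Since $w_k=\A$, its partner is some $\U$, and every $\U$ is the $\U$-endpoint of a gray target pair $v$. By Lemma~\ref{lem:level-balance}, both endpoints of $v$ have balance $\level(v)$, while $\Lambda(k)$ equals the level of the unpaired node at $k$. Separation says gray levels and unpaired levels are disjoint, so $\Lambda(k)\ne\level(v)$. Lemma~\ref{lem:imbalance} then leaves some $\G$ or $\C$ unpaired in $S$, contradicting Step 1. Hence $S$ pairs only target-paired positions. Each $\U$ in $S$ must then take an $\A$ that is target-paired, i.e.\ the $\A$ of a gray pair; there are exactly $g$ such $\A$'s. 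Therefore every target-paired position is paired in $S$.

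\emph{Step 3 (reduction to the saturated case).} Let $z$ be the subsequence of $w$ on the target-paired positions, in backbone order. Deleting the target-unpaired positions preserves noncrossingness and compatibility. So $S$ and $T$, restricted to these positions, are both compatible noncrossing perfect matchings of $z$, and $T$'s restriction $R$ is saturated, with the same interval tree minus its unpaired leaves. By Lemma~\ref{lem:local-distinct}, $z$ and $R$ satisfy the hypotheses of Theorem~\ref{thm:saturated-unique}. Hence the restriction of $S$ equals $R$, and therefore $S=T$.

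The main obstacle is Step 2. The interior count in Lemma~\ref{lem:imbalance} must concern the original sequence $w$, and the A--U arc $(a,b)$ must be oriented correctly, with the $\A$ at either end. The imbalance $\Lambda(b)-\Lambda(a)\neq0$ is symmetric, so either orientation works. One also has to check that the balance at the $\U$ endpoint equals $\level(v)$ whether the $\U$ is the left or right endpoint; Lemma~\ref{lem:level-balance} covers both. After that, the counting argument that all $\A$'s of gray pairs are consumed is straightforward.
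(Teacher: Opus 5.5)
Your proposal is correct and follows the paper's proof essentially step for step. It uses the equality case of Lemma~\ref{lem:inventory}, excludes target-unpaired $\A$'s via Lemmas~\ref{lem:level-balance} and~\ref{lem:imbalance}, and then restricts to the saturated word to apply Theorem~\ref{thm:saturated-unique}. The only cosmetic difference is in showing that every target-paired position is used: you count the $g$ gray-pair $\A$'s consumed by the $g$ $\U$'s, whereas the paper notes that equal pair counts give equal numbers of unpaired positions, so the two unpaired sets coincide.
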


\begin{proof}
By Lemma~\ref{lem:inventory}, equality in pair count requires every $\U$, $\C$, and $\G$ position to be paired.

Suppose a position $i$ unpaired in $T$ is paired to $j$ in $S$.  Then $w_i=\A$ and compatibility gives $w_j=\U$.  Every $\U$ belongs to a gray target pair $v$.  Lemma~\ref{lem:level-balance} identifies $\Lambda(i)$ with the target-unpaired level and $\Lambda(j)$ with the gray-pair level, regardless of which endpoint of $v$ carries $\U$.  Separation makes these levels different.  After ordering the endpoints as $a=\min\{i,j\}$ and $b=\max\{i,j\}$, Lemma~\ref{lem:imbalance} says that $S$ leaves a $\G$ or $\C$ unpaired, contradicting the equality case of Lemma~\ref{lem:inventory}.  Hence every target-unpaired position remains unpaired in $S$.

The two structures have equal pair count and therefore equal numbers of unpaired positions.  Their unpaired-position sets are equal.  Delete that common set and compress the remaining positions in backbone order.  The restrictions of $T$ and $S$ are saturated, compatible, and noncrossing on the same restricted word.  Deleting target-unpaired singleton nodes does not change paired parent-child relations, so Lemma~\ref{lem:local-distinct} transfers to the restricted target.  Theorem~\ref{thm:saturated-unique} makes that target the unique compatible perfect matching.  Thus the restrictions agree, and restoring the common unpaired positions gives $S=T$.
\end{proof}

\begin{proof}[Proof of Theorem~\ref{thm:maindesign}]
Theorem~\ref{thm:maincolor} gives a proper coloring in the uniform residue form, hence a proper separated coloring.  Construct the sequence in Section~\ref{sec:sequence}.  Lemma~\ref{lem:inventory} shows that no compatible structure has more pairs than the target, and Theorem~\ref{thm:no-tie} shows that no distinct compatible structure can attain the same pair count.  Therefore the target is the unique compatible noncrossing structure with the maximum number of base pairs and hence the unique minimum-energy fold.
\end{proof}

\section{Worked examples}
\label{sec:examples}

The universal proof does not depend on finite examples, but concrete instances clarify the two new coloring cases.

\subsection{Two isolated stacks at a degree-three root}

Consider
\[
T_1=\texttt{(())(())((()))},
\]
which has two isolated stacks at the root and one root helix of length three.\footnote{Because $T_1$ is saturated and has maximum paired degree at most four, its designability already follows from Hale\v{s} et al.'s Result~R4 \citep{hales2017}.  It is included here only to illustrate the bounded-two coloring construction and its formalized example checker.}  The root has no unpaired position and degree three, so choose $\eta=0$, $\xi=1$, and the root row for two such children
\[
\Gray,\Gray,\Bcol.
\]
Each isolated stack enters at $\eta$ through gray and terminates at $E$, so it uses the $\eta\to\xi$ word $\Gray\Bcol$.  The long helix enters through black and uses a length-three transfer to its $E$ terminal.  One deterministic orientation gives
\[
w_1=\texttt{\TOneSequenceLiteral}.
\]
The target has seven pairs.  Exact Nussinov dynamic programming \citep{nussinov1980} gives optimum score seven and optimum count one.

\begin{figure}[t]
\centering
\begin{tikzpicture}[font=\small,
helix/.style={draw,rounded corners,minimum width=19mm,minimum height=10mm,align=center},
root/.style={circle,draw=DeepBlue,fill=SoftBlue,minimum size=10mm},
arr/.style={thick}]
\node[root] (r) {root};
\node[helix,fill=orange!10,below left=14mm and 26mm of r] (h1) {short $h=2$\\$\Gray\Bcol$};
\node[helix,fill=orange!10,below=14mm of r] (h2) {short $h=2$\\$\Gray\Bcol$};
\node[helix,fill=gray!8,below right=14mm and 26mm of r] (h3) {long $h=3$\\$\Bcol\Bcol\Bcol$};
\draw[arr] (r)--node[left,font=\footnotesize]{$\Gray$} (h1);
\draw[arr] (r)--node[right,font=\footnotesize]{$\Gray$} (h2);
\draw[arr] (r)--node[right,font=\footnotesize]{$\Bcol$} (h3);
\node[below=11mm of h2,align=center,text width=10cm] {
Target: \texttt{(())(())((()))}\qquad
Sequence: \texttt{\TOneSequenceLiteral}\\
root exposure $\{\Gray,\Gray,\Bcol\}$ is proper; optimum pair count $7$, unique optimum.};
\end{tikzpicture}
\caption{A degree-three root with two child subtrees containing isolated stacks.  The root row assigns gray to the two child subtrees containing isolated stacks and black to the remaining long-helix subtree.}
\label{fig:root-example}
\end{figure}

Figure~\ref{fig:root-example} makes the degree-3 root assignment explicit: the two gray first-pair colors are assigned to the two isolated-stack subtrees.

\subsection{Two isolated-stack subtrees at an internal multiloop}

The target
\[
T_2=\texttt{(((((((())(()))))((())))))}
\]
has five maximal helices with lengths $(3,3,2,2,3)$.  Its outer length-3 helix ends at an $M$ loop with two long children.  One of those long children ends at a second $M$ loop whose two children are the two isolated stacks.  At that internal loop, both child subtrees contain an isolated stack, so the incoming length-3 helix uses the strengthened transfer
\[
\Gray\Bcol\Bcol
\]
and closes black.  The two short children each receive gray and use $\Gray\Bcol$.  The root has paired degree one and no unpaired child, so the deterministic small-root convention fixes
\[
\xi=0,\qquad \eta=1.
\]
The helix words, in the order outer, critical, two short children, and side child, are
\[
\Bcol\Bcol\Bcol,\quad
\Gray\Bcol\Bcol,\quad
\Gray\Bcol,\quad
\Gray\Bcol,\quad
\Bcol\Bcol\Bcol,
\]
with integer levels $(1,2,3)$, $(3,4,5)$, $(5,6)$, $(5,6)$, and $(4,5,6)$, respectively.  The deterministic orientation rule then gives
\[
w_2=\texttt{\TTwoSequenceLiteral}.
\]
The target has thirteen pairs; exact Nussinov dynamic programming gives optimum score thirteen and optimum count one.

The final review also reported \texttt{GGGGGAGGCCCCGGUCCAGGCCUCCC}.  That sequence is compatible and uniquely optimal for the unchanged target, but its helix words are $\Bcol\Bcol\Bcol$, $\Bcol\Bcol\Gray$, $\Bcol\Bcol$, $\Wcol\Wcol$, and $\Gray\Bcol\Bcol$.  It therefore does not instantiate the stated $F/Q$ recursion or the internal strengthened-transfer case.  We use the independently reconstructed sequence above, which does.

\begin{figure}[H]
\centering
\begin{tikzpicture}[font=\small,
helix/.style={draw,rounded corners,minimum width=23mm,minimum height=9mm,align=center},
loop/.style={circle,draw=DeepBlue,fill=SoftBlue,minimum size=9mm},
arr/.style={thick}]
\node[helix,fill=gray!8] (outer) {outer long $h=3$\\$\Bcol\Bcol\Bcol$};
\node[loop,below=7mm of outer] (m1) {$M$};
\node[helix,fill=gray!8,below left=13mm and 22mm of m1] (critical) {critical long $h=3$\\$\Gray\Bcol\Bcol$};
\node[helix,fill=gray!8,below right=13mm and 22mm of m1] (side) {long $h=3$\\$\Bcol\Bcol\Bcol$};
\node[loop,below=7mm of critical] (m2) {$M$};
\node[helix,fill=orange!10,below left=12mm and 13mm of m2] (s1) {short\\$\Gray\Bcol$};
\node[helix,fill=orange!10,below right=12mm and 13mm of m2] (s2) {short\\$\Gray\Bcol$};
\draw[arr] (outer)--(m1);
\draw[arr] (m1)--node[left,font=\footnotesize]{$\Gray$} (critical);
\draw[arr] (m1)--node[right,font=\footnotesize]{$\Bcol$} (side);
\draw[arr] (critical)--node[right,font=\footnotesize]{closes $\Bcol$} (m2);
\draw[arr] (m2)--node[left,font=\footnotesize]{$\Gray$} (s1);
\draw[arr] (m2)--node[right,font=\footnotesize]{$\Gray$} (s2);
\node[right=17mm of m2,align=left] {$X(M)=\{\Wcol,\Gray,\Gray\}$\\\textcolor{GoodGreen}{proper}};
\end{tikzpicture}
\caption{An internal multiloop with two child subtrees containing isolated stacks.  The word $\Gray\Bcol\Bcol$ preserves the gray entry, reaches the $M$ residue, and closes non-gray, allowing both child helices to start gray.}
\label{fig:internal-example}
\end{figure}

Figure~\ref{fig:internal-example} shows the exact internal $r=2$ configuration realized by this deterministic coloring.

\subsection{What the examples do and do not show}

The examples demonstrate that the new $r=2$ root and multiloop cases are nonvacuous.  Their exact folding checks are regression tests, not evidence for the universal theorem.  Universality comes from the induction over arbitrary helix subtrees.

\clearpage
\section{Lean formalization and artifact verification}
\label{sec:formal}

The complete theorem was formalized in Lean~4 \citep{lean4} using Mathlib \citep{mathlib2020}.  The following listing is generated directly from the frozen source; it includes the exact public proposition, theorem, and proof term rather than Lean-like pseudocode.

\begin{lstlisting}[language={},caption={Exact final Lean proposition and theorem, extracted from the frozen source.},label={lst:public-theorem}]
def AtMostTwoShortHelixDesignabilityStatement : Prop :=
  ∀ {n : Nat} (T : SecondaryStructure n),
    InTargetClassKLeTwo T →
      ∃ w : Sequence n, UniqueDesigns w T
theorem atMostTwoShortHelixDesignability :
    AtMostTwoShortHelixDesignabilityStatement := by
  intro n T hK
  exact ⟨sequenceOfProperColoring
      (globalColoringCertificateLeTwo hK).coloring
      (globalColoringCertificateLeTwo hK).proper,
    atMostTwoShortHelices_uniqueDesigns hK⟩
\end{lstlisting}

The definition of \texttt{UniqueDesigns} quantifies over every \texttt{SecondaryStructure n} compatible with the same complete sequence and uses a strict pair-count inequality for every distinct competitor.  The target class uses the actual cardinality of the finite set of maximal helices of length 2.  A checked source map records the source path, Lean module, inclusive line range, full-source digest, and extracted-block digest of every declaration displayed here and in Appendix~\ref{app:lean}.  A generated audit module imports the real theorem module and successfully applies \texttt{\#check} to all 42 displayed declarations.

\subsection{Structure of the Lean development}

The Lean development follows the same sequence of ideas as the paper:
\begin{enumerate}
\item the four-letter alphabet, complementarity, arbitrary noncrossing partial matchings, interval tree, motifs, maximal helices, and the public unique-designability statement;
\item colors, exposures, levels, the uniform modulo-2 condition, exact helix transfers, and the decomposition of the target into disjoint helix subtrees;
\item the definitions of the feasible states $F$ and $Q$, the isolated-stack count decomposition, the strengthened non-gray-closing transfer, color assignments to the actual child helices, and the recursive subtree invariant;
\item deterministic sequence assignment, followed by the counting and uniqueness lemmas; and
\item composition of the new global coloring with the general coloring-to-design theorem.
\end{enumerate}
The old exact-one theorem is not a premise of the at-most-two result; it is derived afterward as a corollary.

\subsection{Concrete semantic regression theorems}

Two new downstream-only Lean modules pin the formal definitions to concrete publication examples without entering the dependency closure of Listing~\ref{lst:public-theorem}.  Kernel-checked theorems establish that $T_1$ and $T_2$ belong to \texttt{InTargetClassKLeTwo}; that their displayed literal sequences satisfy \texttt{UniqueDesigns}; that the $T_2$ coloring has helix words $\Bcol\Bcol\Bcol/\Gray\Bcol\Bcol/\Gray\Bcol/\Gray\Bcol/\Bcol\Bcol\Bcol$ and satisfies the named-residue condition with $\xi=0$; that the three-isolated-stack and length-1 examples are rejected by the target predicate; and that \texttt{AUAU} fails to uniquely design the nested target \texttt{(())}.  Ordinary kernel reduction over all $3^6=729$ colorings also proves that the three-isolated-stack target has no coloring that is both proper and strongly modulo-2 separated.

The literal design theorems do not call an external folding oracle.  They identify each literal word with the generic sequence produced from an explicit proper separated coloring, then apply the universal uniqueness theorem.  Their transitive axiom set is the same three foundational principles listed below.  Hash comparison confirms that all 60 pre-existing tracked Lean files remained byte-identical and that no reverse import from the theorem closure to either publication-example module was introduced.

\subsection{Axiom check and isolated rebuild}

Lean reports the following transitive axiom set for the final theorem:
\[
\{\texttt{propext},\ \texttt{Classical.choice},\ \texttt{Quot.sound}\}.
\]
The final theorem uses no project-defined axiom or unfinished proof placeholder.  In particular, its dependencies contain no \texttt{sorry}, \texttt{admit}, \texttt{sorryAx}, mathematical use of \texttt{unsafe}, \texttt{native\_decide}, or external proof program.

A frozen source archive was rebuilt in an isolated Linux container using the pinned Lean and Mathlib revisions.  The archived qualification rebuild completed 3,070 jobs, printed the exact final theorem, checked its foundational dependencies, searched the source for unfinished or externally supplied proof steps, and verified that compilation did not alter the source files.

The final publication branch was also qualified by a measured local clean build.  The authoritative 46-file theorem closure contains 20,738 physical lines, 16,908 nonblank noncomment code lines, and 1,616 source declaration commands: 390 \texttt{def}, 36 \texttt{abbrev}, 20 \texttt{structure}, 7 \texttt{inductive}, 1,102 \texttt{theorem}, and 61 \texttt{instance} commands.  After \texttt{lake clean}, the bare full-project command \texttt{lake build} passed with 3,070 Lake graph jobs in 894.10 seconds wall-clock on the recorded ten-logical-CPU host.  These are measured qualification values, not performance claims.  The two downstream publication modules add 613 physical lines and 56 declaration commands; their named build and 18-endpoint axiom audit also passed.  The \texttt{docs/FORMALIZATION\_METRICS.md} report states the counting method, environment, user and system times, and complete command log; its SHA-256 is \hashtext{b97a0465d97c9568211cf959e35569a4f82c766f511ca553fd1dd6ef1e1433f9}.

A separate anonymous package contained only the mathematical specification, the handwritten Lean files needed by the final theorem, and the version-pinned build files.  A fresh Claude Code session checked the file list and checksums, built the target module, traced the theorem dependencies, tested concrete examples and counterexamples, and compared the Lean definitions with the specification.  This was a blinded automated audit by a separate, fresh AI session, not an independent human review.  The audit report recorded the outcome ``faithful and complete.''  It records the following checks:
\begin{itemize}
\item exactly four nucleotides and strict Watson--Crick compatibility;
\item arbitrary noncrossing partial-match competitors on the same sequence;
\item the exact at-most-two target class, including the all-unpaired case;
\item the isolated-stack count decomposition and the inductive guarantees expressed by $F$ and $Q$;
\item the $h=3$ and $h=4$ strengthened transfers;
\item loop and root color assignments when two child subtrees contain isolated stacks;
\item that subtree colorings remain valid when the rest of the target is colored, and that every recursive call is made on a smaller subtree; and
\item the direct dependency of the final theorem on the new inductive coloring construction.
\end{itemize}

\begin{verificationbox}
\textbf{Validated release artifact identifiers.}
\begin{itemize}
\item canonical source archive: \hashtext{b077b6118a6cbfadb6c6fce00af65ed34125ffa431f6ca8b7e2e9de3c268d40f};
\item release-qualification results package: \hashtext{2da05881fb3808e7add46526f6f202f81e6f0b4f9b37acd622a8acd7b435615b};
\item blinded fidelity bundle: \hashtext{bfd2c8d90f5666ccb2dda6a3e9c2f81b84419ae84cc25beb3f880c80b5c66b8b};
\item Claude fidelity-audit package: \hashtext{0a77ea7e9f9dc0342a14330b93644dc943a0b2c0e89f302b092ca08de404092b}; and
\item manuscript source package: generated from this revision, with its digest recorded in the external release manifest.
\end{itemize}
Canonical repository: \RepositoryURL.  Archival DOI: \ArchiveDOI.  Version \ReleaseVersion{} uses Apache License 2.0 for Lean source and software tools, Creative Commons Attribution 4.0 International for manuscript text, original figures, and project documentation, and the STIX Font License / SIL Open Font License 1.1 terms for bundled fonts; \texttt{LICENSES.md} in the archive gives the authoritative path scopes.  The DOI was reserved before this PDF was built and identifies the immutable Zenodo Software record for version \ReleaseVersion.  The external SHA-256 manifest identifies the exact release bytes.
\end{verificationbox}

\subsection{What formal verification establishes}

Kernel acceptance establishes that the final theorem follows from the formal definitions and the listed foundational principles.  The isolated rebuild establishes reproducibility of the exact frozen source.  The automated specification comparison addresses, with lower evidentiary weight, the separate risk that Lean might have proved a subtly weakened or different theorem.  Because that comparison was performed by an AI system, its verdict is not a substitute for kernel checking, reproducibility, public inspection, or human peer review.  None of these checks establishes biological realism, novelty, importance, empirical utility, or correctness of claims outside the formal model.

\section{AI-assisted development, manuscript preparation, and audit trail}
\label{sec:ai}

Section~\ref{sec:ai-disclosure} gives the full AI-use disclosure and division of responsibility.  The present section records how that assistance entered the development and how the final result was audited.  A recurring difficulty was the distinction among the target structure, the sequence assigned to it, and alternative folds of that same sequence.  Recasting the argument in terms of base-pair use and alternative folds helped clarify the role of the coloring construction.

The development proceeded in several stages:
\begin{enumerate}
\item framing and refinement of the research problem, formal model, theorem statement, and quantifiers;
\item assistance with literature discovery and citation and provenance checks;
\item generation, revision, and stress-testing of candidate proofs by ChatGPT and Codex;
\item separate attempts by Claude Code to find counterexamples, missing cases, and hidden assumptions;
\item independently implemented programs for exact colorability, optimum-counting folding, and negative controls;
\item a staged Lean formalization with code generation, debugging, and source and axiom review after each major part;
\item development of examples and figures and drafting, restructuring, editing, proofreading, and typesetting of the manuscript; and
\item assembly of reproducibility and release artifacts, followed by an isolated rebuild and a blinded automated comparison of the mathematical specification with the Lean source.
\end{enumerate}

Several early versions contained genuine errors or incomplete case analyses, including an incorrect orientation rule for gray pairs and overly broad statements about long-helix transfers.  The evidence hierarchy for the final theorem is:
\begin{enumerate}
\item Lean kernel checking;
\item isolated clean rebuild;
\item source and axiom audits;
\item blinded automated fidelity audit; and
\item computational checks.
\end{enumerate}

\noindent\textbf{Review status and author responsibility.} At the time of this arXiv version, the manuscript and formal proof have not yet received a completed independent review by a human subject-matter expert. Generative-AI systems played the foundational and pervasive role described in Section~\ref{sec:ai-disclosure} throughout the research, formalization, verification, artifact preparation, figure development, and writing, under the human author's direction; the formal proof was verified in Lean. Lean kernel checking, isolated rebuilding of the frozen source, and automated audits address logical correctness, reproducibility, and fidelity to the stated formal model. They do not substitute for human expert review and do not establish novelty, significance, or biological relevance. The author takes full responsibility for the final scientific model, theorem statement, proof exposition, formalization, computations, examples and figures, citations, and all claims in the manuscript.

\section{Scope, limitations, and discussion}
\label{sec:scope}

\subsection{Model limitations}

The theorem is exact only in the stated base-pair model.  It excludes G--U wobble, pseudoknots, stacking energies, loop entropies, ionic effects, ensemble objectives, kinetics, and three-dimensional constraints.  Minimum arc length is $\theta=0$, so empty hairpins and adjacent pairs are legal.  If a target remains valid under a stricter minimum-loop convention, the same design remains unique because the competitor set only shrinks; however, some targets in $\Ktwo$ cease to be legal targets when $\theta>0$.

Avoidance of $m_{3\bullet}$ is biologically restrictive.  Any nonroot loop with at least two branch helices must contain no unpaired nucleotide in this model, and an exterior loop with an unpaired nucleotide may have at most two paired children.  The theorem therefore concerns a narrow but mathematically natural combinatorial class rather than typical full thermodynamic RNA architectures.

\subsection{What the bound two means}

The phrase ``at most two'' is a resource limit of the present universal construction, not a claimed structural boundary for designability or even for modulo-2 separability.  The theorem gives a uniform modulo-2 coloring for every target in $\Ktwo$: the stated motif-free class with no isolated base pair, at most two isolated stacks, and all remaining helices of length at least three.  Proposition~\ref{prop:three-short} shows only that a target with three isolated stacks can fail to admit a proper modulo-2 separated coloring.  It does not show that three isolated stacks always imply modulo-2 nonseparability, ordinary nonseparability, or undesignability.  Indeed, the displayed target is ordinarily separated and designable.  The natural next questions are therefore:
\begin{itemize}
\item which targets with three or more isolated stacks admit modulo-2 separated colorings;
\item whether a higher modulus, ordinary separation, or biseparability \citep{boury2025seeding} yields broader bounded-isolated-stack theorems;
\item whether other finite collections of feasible entry states yield broader guarantees; and
\item which structural placements, rather than counts alone, govern the transition from universal guarantees to obstructions.
\end{itemize}

The last question is the most informative formulation of the remaining problem: find a local placement criterion that distinguishes harmless collections of short helices from genuine coloring obstructions.  More broadly, constructive work on infinite unsaturated designable tree families and complementary work on undesignable motifs and asymptotic scarcity show that positive and negative structural phenomena coexist well beyond any one helix-count threshold \citep{jedwab2020,yao2026}.

\subsection{Novelty and related-work boundary}

The attribution boundary is as follows.  Proper separated colorings and their implication to designability originate with Hale\v{s} et al.; their Theorem~10 already uses opposite parity classes for gray and unpaired nodes, and their Result~R4 already covers the designability of saturated targets \citep{hales2017}.  Boury et al. introduced general modulo-$m$ separability, its $O(n\,2^m)$ dynamic program, and the universal all-long theorem \citep{boury2024,boury2025}.  Their WABI Figure~8 and journal Figure~12 also contain every local word family used in Lemma~\ref{lem:strong-long}, including the non-gray-ending $\Gray\Bcol\Bcol$ case.  They note that the modulo-2 method extends beyond the no-short-helix class ``in a way that remains to be fully characterized'' \citep{boury2024,boury2025}.

Subject to the literature-search limitation below, the new structural contribution here is therefore not the parity idea, the finite-state decision algorithm, or a new long-helix transfer word.  It is the direct resource-counting induction proving that targets with one or two isolated stacks always admit the required modulo-2 coloring: when two downstream short-helix demands coincide, they exhaust the global budget, forcing the incoming helix to be long and making Boury et al.'s published non-gray-ending option available.  For saturated members, only this certificate is new; their designability was already known.  Individual unsaturated targets may also belong to other previously known sufficient classes \citep{jedwab2020}.  A documented search through 19 August 2026 did not locate an exact prior theorem or counterexample for this bounded-isolated-stack class.  That search is necessarily incomplete and should not be treated as a proof of novelty.

\subsection{Algorithmic implications}

Boury et al.'s dynamic program already decides modulo-2 separability and constructs a witness in linear time, so this theorem neither supplies a previously unavailable decision capability nor improves the general asymptotic bound \citep{boury2024,boury2025}.  Its algorithmic content is a structural success guarantee: on every target in $\Ktwo$, the modulo-2 procedure cannot report nonseparability, and the direct proof gives an explicit top-down witness without searching the full state table.  Targets outside $\Ktwo$ can still be tested by the same fixed-$m$ algorithm, at $m=2$, $m=3$, or another fixed modulus; the present theorem provides no blanket guarantee for them.

The explicit coloring algorithm here runs in linear time when the tree representation and maximal helices are stored explicitly.  After fixing the root-residue convention, symmetric transfer choices, child order, and gray-pair orientations, the top-down rules supply a deterministic sequence.  This gives an exact constructive baseline for testing inverse-folding software and for initializing more realistic thermodynamic design methods.  Whether these sequences are good seeds under Turner energies is an empirical question; published work suggests that separated and biseparable sequences can be useful starting points, but the present theorem makes no thermodynamic claim \citep{boury2025,boury2025seeding}.

\section{Conclusion}
\label{sec:conclusion}

We have proved that every motif-free RNA target with no isolated base pair, at most two isolated stacks, and all other helices of length at least three is designable in the four-letter Watson--Crick base-pair model.  The proof uses the two sets of feasible entry states $F$ and $Q$ in an induction on helix subtrees.  If two child subtrees both require gray first pairs, they already contain the two isolated stacks allowed in the target.  The incoming helix is therefore long and can reach residue $\eta$ while ending non-gray, allowing both child helices to start gray.

The complete result, including the sequence construction and uniqueness theorem, has been formalized in Lean and rebuilt from frozen source.  A blinded automated AI audit separately compared the mathematical specification with the Lean source as a secondary cross-check.  The result gives a formally verified sufficient class in the $h_{\min}=2$ regime.  Its new step is the bounded-resource induction that deploys Hale\v{s} et al.'s parity idea and Boury et al.'s published local transfers in a global construction, not a new parity device or helix word.  It also illustrates how AI-assisted research, formalization, auditing, and manuscript preparation can be brought to a reproducible formal endpoint.

\appendix

\section{Local transition and color-assignment tables}
\label{app:tables}

\subsection{Two-pair transitions}

\begin{center}
\begin{tabular}{cc}
\toprule
entry $\to$ exit & valid words \\
\midrule
$\xi\to\xi$ & $\Bcol\Bcol,\Wcol\Wcol$ \\
$\xi\to\eta$ & $\Bcol\Gray,\Wcol\Gray$ \\
$\eta\to\xi$ & $\Gray\Bcol,\Gray\Wcol$ \\
$\eta\to\eta$ & $\Bcol\Bcol,\Wcol\Wcol,\Gray\Gray$ \\
\bottomrule
\end{tabular}
\end{center}

\subsection[Multiloop assignments]{Multiloop color assignments}

\begin{center}
\begin{tabular}{cccc}
\toprule
child subtrees containing isolated stacks $r$ & closing color & $d=2$ & $d=3$ \\
\midrule
0 & $\Bcol$ & $\Bcol,\Gray$ & $\Bcol,\Gray,\Gray$ \\
0 & $\Wcol$ & $\Wcol,\Gray$ & $\Wcol,\Gray,\Gray$ \\
0 & $\Gray$ & $\Bcol,\Wcol$ & $\Bcol,\Wcol,\Gray$ \\
\midrule
1 & $\Bcol$ & $\Gray,\Bcol$ & $\Gray,\Bcol,\Gray$ \\
1 & $\Wcol$ & $\Gray,\Wcol$ & $\Gray,\Wcol,\Gray$ \\
1 & $\Gray$ & $\Gray,\Bcol$ & $\Gray,\Bcol,\Wcol$ \\
\midrule
2 & $\Bcol$ & $\Gray,\Gray$ & $\Gray,\Gray,\Bcol$ \\
2 & $\Wcol$ & $\Gray,\Gray$ & $\Gray,\Gray,\Wcol$ \\
\bottomrule
\end{tabular}
\end{center}
In the $r=1$ and $r=2$ rows, gray entries are assigned to the children whose subtrees contain an isolated stack; displayed order is schematic.

\subsection{Root color assignments at degree three and four}

\begin{center}
\begin{tabular}{cccc}
\toprule
root degree & $r=0$ & $r=1$ & $r=2$ \\
\midrule
3 & $\Bcol,\Wcol,\Gray$ & $\Gray,\Bcol,\Wcol$ & $\Gray,\Gray,\Bcol$ \\
4 & $\Bcol,\Wcol,\Gray,\Gray$ & $\Gray,\Bcol,\Wcol,\Gray$ & $\Gray,\Gray,\Bcol,\Wcol$ \\
\bottomrule
\end{tabular}
\end{center}

\section{Selected Lean definitions}
\label{app:lean}

The following declarations are exact Unicode excerpts generated from the frozen source, not simplified retypings.  Blank lines are the only inserted material between source blocks.  In addition to the top-level model and target predicate, the selection now prints the interval-tree node types, parent and child operations, paired degree, unpaired-child predicate, and exact stack-offset relation on which the $m_5$, $m_{3\bullet}$, and maximal-helix encodings depend.  The accompanying source map and generated \texttt{\#check} audit are distributed with the manuscript source.

\begin{lstlisting}[language={},caption={Exact core-model declarations extracted from the frozen Lean source.}]
inductive Nucleotide where
  | A
  | C
  | G
  | U
  deriving DecidableEq, Repr
def comp : Nucleotide → Nucleotide
  | A => U
  | U => A
  | C => G
  | G => C
def Compatible (x y : Nucleotide) : Prop := y = x.comp
abbrev Sequence (n : Nat) := Fin n → Nucleotide
structure Arc (n : Nat) where
  left : Fin n
  right : Fin n
  ordered : left < right
  deriving DecidableEq
def Crosses (a b : Arc n) : Prop :=
  (a.left < b.left ∧ b.left < a.right ∧ a.right < b.right) ∨
  (b.left < a.left ∧ a.left < b.right ∧ b.right < a.right)
def IsPartialMatching (P : Finset (Arc n)) : Prop :=
  ∀ {a : Arc n}, a ∈ P →
    ∀ {b : Arc n}, b ∈ P →
      a.ShareEndpoint b → a = b
def IsNoncrossing (P : Finset (Arc n)) : Prop :=
  ∀ {a : Arc n}, a ∈ P →
    ∀ {b : Arc n}, b ∈ P →
      ¬ a.Crosses b
structure SecondaryStructure (n : Nat) where
  arcs : Finset (Arc n)
  isPartialMatching : IsPartialMatching arcs
  isNoncrossing : IsNoncrossing arcs
def StructureCompatible (w : Sequence n) (S : SecondaryStructure n) : Prop :=
  ∀ a ∈ S.arcs, Compatible (w a.left) (w a.right)
def pairCount (S : SecondaryStructure n) : Nat := S.arcs.card
def UniqueDesigns (w : Sequence n) (T : SecondaryStructure n) : Prop :=
  StructureCompatible w T ∧
    ∀ S : SecondaryStructure n,
      StructureCompatible w S →
      S ≠ T →
      pairCount S < pairCount T
\end{lstlisting}

\begin{lstlisting}[language={},caption={Exact helix and target-class declarations extracted from the frozen Lean source.}]
abbrev PairedNode := {a : Arc n // a ∈ T.arcs}
abbrev UnpairedPosition := {i : Fin n // ¬ T.positionPaired i}
abbrev NonRootNode := PairedNode T ⊕ UnpairedPosition T
abbrev PairedOrRootNode := Option (PairedNode T)
def StrictlyContains (p : PairedNode T) : NonRootNode T → Prop
  | Sum.inl q => p.val.left < q.val.left ∧ q.val.right < p.val.right
  | Sum.inr u => p.val.left < u.val ∧ u.val < p.val.right
def enclosingPairs (c : NonRootNode T) : Finset (PairedNode T) :=
  Finset.univ.filter (fun p => PairedNode.StrictlyContains T p c)
def parent (c : NonRootNode T) : PairedOrRootNode T :=
  if h : (enclosingPairs T c).Nonempty then
    some ((enclosingPairs T c).max' h)
  else
    none
def pairedChildren (p : PairedOrRootNode T) : Finset (PairedNode T) :=
  Finset.univ.filter (fun c => parent T (Sum.inl c) = p)
def unpairedChildren (p : PairedOrRootNode T) : Finset (UnpairedPosition T) :=
  Finset.univ.filter (fun c => parent T (Sum.inr c) = p)
def pairedChildCount (p : PairedOrRootNode T) : Nat :=
  (pairedChildren T p).card
def pairedDegree : PairedOrRootNode T → Nat
  | none => pairedChildCount T none
  | some p => 1 + pairedChildCount T (some p)
def Arc.StackOffset (outer inner : Arc n) (k : Nat) : Prop :=
  inner.left.val = outer.left.val + k ∧
    inner.right.val + k = outer.right.val
def Arc.Stacked (outer inner : Arc n) : Prop :=
  outer.StackOffset inner 1
abbrev HelixCandidate (n : Nat) := Arc n × Fin (n + 1)
def outer (H : HelixCandidate n) : Arc n := H.1
def length (H : HelixCandidate n) : Nat := H.2.val
def HasStackOffset (T : SecondaryStructure n) (outer : Arc n) (k : Nat) : Prop :=
  ∃ a ∈ T.arcs, outer.StackOffset a k
def IsMaximalHelixRun (T : SecondaryStructure n) (outer : Arc n) (h : Nat) : Prop :=
  0 < h ∧
    outer ∈ T.arcs ∧
    (¬ ∃ a ∈ T.arcs, a.Stacked outer) ∧
    (∀ k : Fin h, HasStackOffset T outer k.val) ∧
    ¬ HasStackOffset T outer h
def IsMaximalHelix (T : SecondaryStructure n) (H : HelixCandidate n) : Prop :=
  IsMaximalHelixRun T H.outer H.length
abbrev MaximalHelix (T : SecondaryStructure n) :=
  {H : HelixCandidate n // IsMaximalHelix T H}
def outer {T : SecondaryStructure n} (H : MaximalHelix T) : Arc n := H.val.outer
def length {T : SecondaryStructure n} (H : MaximalHelix T) : Nat := H.val.length
def HasUnpairedChild (p : PairedOrRootNode T) : Prop :=
  (unpairedChildren T p).Nonempty
def HasM5 : Prop :=
  ∃ p : PairedOrRootNode T, 4 < pairedDegree T p
def HasM3Dot : Prop :=
  ∃ p : PairedOrRootNode T,
    HasUnpairedChild T p ∧ 2 < pairedDegree T p
def lengthTwoHelices (T : SecondaryStructure n) : Finset (MaximalHelix T) :=
  maximalHelicesOfLength T 2
def shortHelixCount (T : SecondaryStructure n) : Nat :=
  (lengthTwoHelices T).card
def InTargetClassKLeTwo (T : SecondaryStructure n) : Prop :=
  shortHelixCount T ≤ 2 ∧
    (∀ H : MaximalHelix T, H.length ≠ 1) ∧
    (∀ H : MaximalHelix T, H.length ≠ 2 → 3 ≤ H.length) ∧
    ¬ HasM5 T ∧
    ¬ HasM3Dot T
\end{lstlisting}

\section{Formal verification record and artifacts}
\label{app:artifacts}

\begin{longtable}{p{0.29\linewidth}p{0.61\linewidth}}
\toprule
Artifact & Record \\
\midrule
\endhead
Canonical proof specification & \texttt{CANONICAL\_AT\_MOST\_TWO\_PROOF.md}; SHA-256 \hashtext{fb9bee1126791b9ca53b3903c061df27e3856dd20b9c7fc248b7acd4e5326841}. \\
Canonical source archive & SHA-256 \hashtext{b077b6118a6cbfadb6c6fce00af65ed34125ffa431f6ca8b7e2e9de3c268d40f}. \\
Lean toolchain & Lean \texttt{4.34.0-rc1}; Mathlib revision \texttt{de5ce8a9a66a4aa68a9bdbb35b63a06d34d9ca11}. \\
Isolated rebuild & Full clean build passed; final theorem printed with axiom set \texttt{[propext, Classical.choice, Quot.sound]}. \\
Release-qualification results package & SHA-256 \hashtext{2da05881fb3808e7add46526f6f202f81e6f0b4f9b37acd622a8acd7b435615b}. \\
Blinded fidelity bundle & SHA-256 \hashtext{bfd2c8d90f5666ccb2dda6a3e9c2f81b84419ae84cc25beb3f880c80b5c66b8b}. \\
Claude fidelity-audit package & Automated-AI outcome ``faithful and complete''; SHA-256 \hashtext{0a77ea7e9f9dc0342a14330b93644dc943a0b2c0e89f302b092ca08de404092b}. \\
Publication examples & Downstream modules \texttt{PublicationExamples.lean} and \texttt{PublicationExamplesAxiomAudit.lean}; no reverse import and no change to any pre-existing Lean file. \\
Formalization-metrics report & Clean full-project build: 3,070 Lake graph jobs and 894.10 seconds wall-clock; closure: 46 files, 20,738 physical lines, 1,616 declaration commands; \texttt{docs/FORMALIZATION\_METRICS.md} SHA-256 \hashtext{b97a0465d97c9568211cf959e35569a4f82c766f511ca553fd1dd6ef1e1433f9}. \\
Derived listings & \texttt{SOURCE\_MAP.tsv}, extraction script, and 42-declaration \texttt{\#check} audit included in the manuscript source package. \\
Manuscript source package & Generated from this revision; SHA-256 recorded in the external release manifest to avoid a circular self-digest. \\
Availability & Version \ReleaseVersion; canonical repository \RepositoryURL; archival DOI \ArchiveDOI; mixed license per \texttt{LICENSES.md}. \\
\bottomrule
\end{longtable}

\section*{Author declarations}
\textbf{Author capacity.} This work was undertaken through Szilard Scientific, LLC, in the author's private capacity.  The views expressed are the author's own, not those of any employer or client.

\section*{Acknowledgments}
The author thanks the developers and communities of Lean and Mathlib.  OpenAI's ChatGPT and Codex and Anthropic's Claude Code played the foundational role detailed in Section~\ref{sec:ai-disclosure} under the human author's direction.  They are not authors and bear no responsibility for the scientific claims.

\section*{Data and code availability}
The version \ReleaseVersion{} archival release contains the manuscript source and PDF, source-derived Lean listings and their audit, deterministic example checker and output, citation and provenance audits, downstream Lean publication examples, and frozen qualification packages.  Appendix~\ref{app:artifacts} records the canonical repository, archival DOI, and mixed-license scope; the external SHA-256 manifest identifies the exact release bytes.

\end{document}